\numberwithin{equation}{section}
\newtheorem{theorem}{Theorem}[section]
\newtheorem{lemma}[theorem]{Lemma}
\newtheorem{corollary}[theorem]{Corollary}
\newtheorem{remark}[theorem]{Remark}
\newtheorem{prop}[theorem]{Proposition}
\newtheorem{assume}[theorem]{Assumptions}
\newtheorem{RHP}[theorem]{Riemann-Hilbert Problem}
\newcommand{\R}{{\mathbb R}}
\newcommand{\C}{\mathbb{C}}
\renewcommand{\Re}{\mathop{ \mathrm{Re}}\nolimits}
\renewcommand{\Im}{\mathop{ \mathrm{Im} }\nolimits}
\newcommand{\ii}{\mathrm{i}}
\newcommand{\e}{\mathrm{e}}
\newcommand{\M}{\mathbf{M}}
\newcommand{\G}{\mathbf{G}}
\newcommand{\F}{\mathbf{F}}
\newcommand{\E}{\mathbf{E}}
\newcommand{\bY}{\mathbf{Y}}
\newcommand{\B}{\mathbf{B}}
\newcommand{\A}{\mathbf{A}}
\newcommand{\V}{\mathbf{V}}
\newcommand{\Del}{\mathbf{\Delta}}
\newcommand{\bXi}{\mathbf{\Xi}}
\newcommand{\bPi}{\mathbf{\Pi}}
\newcommand{\bI}{\mathbf{I}}
\newcommand{\tdel}{\tilde{\delta}}
\newcommand{\cP}{\mathcal{P}}
\newcommand{\bP}{\mathbf{P}}
\newcommand{\bT}{\mathbf{T}}
\newcommand{\rZ}{\mathcal{Z}}
\newcommand{\cD}{\mathcal{D}}
\newcommand{\w}{\mathbf{w}}
\newcommand{\h}{\mathbf{h}}
\newcommand{\Ca}{\mathcal{C}}
\newcommand{\bu}{\mathbf{u}}
\newcommand{\bL}{\mathbf{L}}
\newcommand{\bZ}{\mathbf{Z}}
\newcommand{\m}{\mathbf{m}}
\newcommand{\bC}{\mathbf{C}}
\newcommand{\q}{\mathbf{q}}
\newcommand{\bPhi}{\mathbf{\Phi}}
\newcommand{\bX}{\mathbf{X}}
\newcommand{\bJ}{\mathbf{J}}
\newcommand{\bQ}{\mathbf{Q}}
\newcommand{\bv}{\mathbf{v}}
\newcommand{\bB}{\mathbf{B}}
\newcommand{\cR}{\mathcal{R}}
\newcommand{\blambda}{\mathbf{\Lambda}}
\newcommand{\cE}{\boldsymbol{\mathcal{E}}}
\newcommand{\xcE}{\mathcal{E}}
\newcommand{\eps}{\epsilon}
\def\be{\begin{equation}}
\def\ee{\end{equation}}
\def\bpm{\begin{pmatrix}}
\def\epm{\end{pmatrix}}
\def\bi{\begin{itemize}}	
\def\ei{\end{itemize}}
\begin{document}
\title{\bf Large-time asymptotics for the defocusing Manakov system on a nonzero background \\}
\author{\hspace{0.6 cm}{Xianguo Geng$^{a,b}$,  Haibing Zhang$^{a}$\footnote{\footnotesize
 Corresponding author. {\sl Email address}: haibingzhang2025@163.com}, Jiao Wei$^{a}$}\\
\leftline{\hspace{0.6 cm}{\small{\sl $^{a}$ School of Mathematics and Statistics, Zhengzhou University, 100 Kexue Road, Zhengzhou, }}}\\
\leftline{\hspace{0.6 cm}{\small{\sl \quad Henan 450001, People's Republic of China}}}\\
\leftline{\hspace{0.6 cm}{\small{\sl $^{b}$ Institute of Mathematics, Henan Academy of Sciences, Zhengzhou, Henan 450046,}}}\\
\leftline{\hspace{0.6 cm}{\small{\sl \quad People's Republic of China}}}}
\date{}
\maketitle

\begin{abstract}
The Manakov system is a two-component nonlinear Schr\"odinger equation.  In this paper, we derive a long-time asymptotic formula for the solution of the defocusing Manakov system with nonzero boundary conditions and provide a detailed proof. We first formulate the inverse problem as a $3\times3$ matrix Riemann--Hilbert problem. We then carry out the Deift--Zhou steepest descent analysis for this Riemann--Hilbert problem and obtain the long-time asymptotics in the space-time soliton region.  In this region, the leading order of the solution takes the form of a modulated multisoliton. Apart from the error term, we also discover that the defocusing Manakov system has a dispersive correction term of order $t^{-1/2}$, but this term does not exist in the scalar case, and we provide the explicit expression for this dispersion term.
\end{abstract}

\tableofcontents

\section{Introduction}

The nonlinear schr\"odinger (NLS) equation and its vector form generalizations have been one of the most important research topics in the field of mathematical physics over the past fifty years. These equations have  been derived  in many physical fields, such as deep water waves, nonlinear optics, acoustics, and Bose–Einstein condensation( see e.g.~\cite{AH1981,EG2003,CP1999,W1999,GengX2021}  and references therein). Mathematically, both the NLS equations for scalars and vectors are completely integrable infinite-dimensional Hamiltonian systems, possessing extremely rich mathematical structures. It is well known that the initial value problem for these integrable systems can be solved using the inverse scattering transform (IST). The majority of IST literature on NLS systems has primarily focused on the case of zero boundary conditions (ZBCs)—where the potential  decays to zero as the spatial variable tends to infinity. However, recent studies indicate that non-zero boundary conditions (NZBCs) are crucial for investigating modulation instability and the generation mechanisms of rogue waves  (for example, see  Refs.~\cite{ZO2009,BF2015,BLMT2018,KFF2010}). Consequently, research on the IST for NLS-type equations with non-zero backgrounds is increasing rapidly, and significant breakthroughs have been achieved in recent years. Next, we will briefly review the related research on this topic.

The IST for the  defocusing scalar NLS equation with NZBCs was  done as early as in~\cite{ZS1973,F1987} and was recently revisited in~\cite{DPVV2012}. Although the defocusing NLS equation does not admit soliton solutions on zero backgrounds, so-called dark/gray solitons were indeed found with NZBCs. The IST for the focusing NLS equation under NZBCs has recently been studied by Biondini et al.~\cite{BK2014}. They provided a number of explicit solutions on a non-zero background, including soliton and breather solutions.  Extending the IST from the scalar to the vector case requires additional considerations. Although the IST for the vector NLS equation with ZBCs was settled a long time ago~\cite{M1974,APT2004}, the case with NZBCs, even for two-component systems (i.e., the Manakov system), has remained an open problem for nearly  three decades.  The inverse scattering analysis for the defocusing Manakov system with NZBCs was ultimately completed in Refs.~\cite{PAB2006,BD2015-1}, where the authors utilized ideas originally introduced for solving the initial value problem of the three-wave interaction equations~\cite{K1976}.  By constructing two new  auxiliary eigenfunctions  from the solutions of the adjoint spectral problem, it is ensured that the Jost solutions and the auxiliary solutions form a complete set, thereby enabling the derivation of a suitable Riemann-Hilbert (RH) problem. Soon after, the focusing Manakov system with NZBCs was studied in~\cite{BD2015-2} using similar methods. Unfortunately, the approach used in~\cite{PAB2006,BD2015-1,BD2015-2} cannot be extended to the vector NLS equation with more than two components. Recently, Prinari et al.~\cite{PBA-2011} have made a significant step towards the IST for the defocusing $N$-component NLS equation with NZBCs. However, several key issues remain open. A further IST characterization for the defocusing  $N$-component NLS equation was presented in~\cite{BKP-2016} for $N=3$.  These results were recently extended to the defocusing $N$-component ($N \geq 4$)~\cite{LSG2025} and the focusing $N$-component cases~\cite{LSG-2023}. Regarding these advances, we suggest that readers refer to \cite{P2023} to obtain a comprehensive review on this topic.

In the modern version of inverse scattering theory, the inverse problem is usually formulated as a RH problem. In the pioneering paper \cite{pz93}, Deift and Zhou introduced the nonlinear steepest descent method for oscillatory RH problems, thereby establishing the long-time asymptotic behavior of solutions to the modified KdV equation. Since then, the nonlinear steepest descent method  and its extensions have become a fundamental tool for studying long-time asymptotics of integrable systems; see for example~\cite{AAD2010,JLPS2018,JLPS20181,Lenells2017,BKS2013,CLW2023,YF2023,GL2018,CL2024main,CJ2024}. The long-time asymptotic behavior of the defocusing and  focusing scalar NLS equations on the zero background has been thoroughly studied ~\cite{PZ94,PZ2002,PZto94,DM2008,BJM2018}. Meanwhile, there have been numerous studies on the long-time asymptotics for the focusing and defocusing scalar NLS equation with NZBCs~\cite{BM2017,BLM2021,DP2019,DP2024,DLM2020,H2003-1,H2002,CuJe2016,WF2022,WF2023}.
In particular, in Ref.~\cite{CuJe2016},  Cuccagna and Jenkins investigated the asymptotic stability of dark solitons for the defocusing NLS equation using the $\bar{\partial}$ steepest descent method.  Our work was greatly motivated by their paper. Despite these advancements,  the long-time asymptotic behavior for the vector
NLS equations on a non-zero background remains unexplored.  This is also an important open problem mentioned in Refs.~\cite{BD2015-1,BKP-2016,P2023}.

This work is the first in a series of articles aimed at addressing this open problem, with a primary focus on the defocusing two-component case. The defocusing Manakov system is given by 
\be \label{E:demanakovS}
\ii \q_t +\q_{xx}+2(q_0^2 -\| \q \|^2) \q=0,
\ee
where $\q(x,t)$ is a two-component vector-valued function, i.e., $\q=(q_1,q_2)^{\top}$. We will analyze the long-time behavior of solutions to system~\eqref{E:demanakovS} satisfying the following boundary conditions:
\be \label{E:bjtj}
\lim_{x \to \pm \infty} \q(x,t)= \q_{\pm}=\q_c \e^{\ii \theta_{\pm}},
\ee
where  $\q_{\pm}$ is a constant two-component  vector, (i.e., $\q_{\pm}=(q_{1,\pm},q_{2,\pm})^{\top}$), $\|\cdot \|$ is the standard Euclidean norm, $\q_c$ is a constant vector with norm $q_0$, $\theta_{\pm} \in [0,2 \pi)$, and subscripts $x$ and $t$ denote partial differentiation. Moreover, we  assume that  the convergence in~\eqref{E:bjtj}  is sufficiently fast.  Note that system~\eqref{E:demanakovS} here differs from the classical Manakov equation~\cite{M1974} by an additional term $2 q_0^2 \q$. One can remove this term by the simple rescaling $\q(x,t) \to \q(x,t) \e^{-2\ii q_0^2 t}$, and this term was added so that the boundary conditions~\eqref{E:bjtj} are independent of time.  The boundary condition~\eqref{E:bjtj} is usually referred to as the parallel NZBCs, as $\q_+=\q_- \e^{\ii(\theta_+ - \theta_-)}$.

The strategy we employ to study the long-time behavior of the solutions to~\eqref{E:demanakovS}$-$\eqref{E:bjtj} is to apply the Deift-Zhou steepest descent method to the RH problem derived in Ref.~\cite{BD2015-1}. Importantly, as will be seen, the Deift-Zhou analysis in the vector case presents new features and technical difficulties. We outline some of them below:  $\mathbf{(a)}$    The RH problem associated with~\eqref{E:demanakovS} and~\eqref{E:bjtj} involves a $3 \times 3$  matrix rather than a $2 \times 2$ matrix. Additionally, it can be seen that the original jump matrix in Ref.~\cite{BD2015-1} has an extremely complex structure~(see~\cite[Eq.3.2]{BD2015-1}), which makes the triangular factorization extremely difficult to carry out. The key observation is that the symmetry properties of the reflection coefficients can be utilized to simplify the jump matrix, thereby obtaining the corresponding triangular factorization. $\mathbf{(b)}$  The initial RH problem is singular at the origin. To reduce it eventually to a small-norm RH problem, this singularity has to be removed. Our treatment is inspired by the scalar case studied by Cuccagna and Jenkins~\cite{CuJe2016}, but the present analysis is more involved because of the \(3\times3\) structure. More precisely, we remove the discrete spectrum and the singularity at the origin simultaneously by multiplying on the right by the inverse of an outer parametrix. This step introduces a further difficulty: the inverse outer model may have singularities near the branch points \(\pm q_0\). We show that these singularities are in fact removable after the transformation, and hence the problem can still be reduced to a small-norm RH problem.
$\mathbf{(c)}$  The contour deformation is also more delicate in the vector case. One has to factorize the \(3\times3\) jump matrix according to the sign structures of three different phase functions. This substantially increases the complexity of the Deift--Zhou analysis.

Finally, we expect the methods presented in this paper can be applied to study the long-time behavior of arbitrary $N$-component defocusing NLS equations with similar NZBCs. This expectation is supported by the fact
that, in the $N$-component case, the corresponding inverse problem can be
formulated as a block $3\times3$ RH problem; see~\cite{LSG2025}.

\subsection{The preparation}

\paragraph{\bf Basic notation}
Throughout this paper, an asterisk denotes complex conjugation, while the superscripts
$\top$ and $\dagger$ denote the transpose and the conjugate transpose of a matrix,
respectively. The symbols $C>0$ and $c>0$ stand for generic positive constants whose
values may change from line to line. Unless otherwise stated, $\log (z)$ always denotes
the principal branch of the logarithm.  We write
$
\mathbb{R}_{+}=(0,\infty),
$
$
\mathbb{R}_{-}=(-\infty,0),
$
and denote by $\mathbb{C}_{+}$ and $\mathbb{C}_{-}$ the upper and lower half-planes of
the complex plane, respectively. For $1\leq p\leq\infty$, $L^p(\mathbb{R}_{\pm})$
denotes the usual Lebesgue space on $\mathbb{R}_{\pm}$. Finally, for a domain
$D\subset\mathbb{C}$, $\overline{D}$ denotes its closure.

We first state the assumptions adopted in this paper.
\begin{assume}\label{As:1}
Suppose that the initial data $\q_0(x)$ is sufficiently smooth on $\R$ and satisfies the following assumptions:
\bi
\item    Suppose that there exists a constant $\varepsilon >0$ such that
\be \label{E:aszssj}
\partial_x^j  \left( \q_0(x)-\q_{\pm} \right) \e^{\pm 2 \varepsilon x}   \in L^1(\R_{\pm}), \qquad  j=0,1,2.
\ee

\item Suppose that the analytic scattering coefficient $a_{11}(z)$ associated with $\q_0(x)$ has only finitely many simple zeros $\{\zeta_j\}_{j=0}^{N-1}$, all lying on the upper semicircle
$$
C_0=\{z\in\mathbb{C}: |z|=q_0,\ \Im z>0\}.
$$
Moreover, we order the eigenvalues $\zeta_j$ such that
$$
\Re \zeta_0>\Re \zeta_1>\cdots>\Re \zeta_{N-1}.
$$

\item  We  assume that the discrete scattering data $\{\zeta_j, \tau_j \}_{j=0}^{N-1}$ generated by $\q_0(x)$ satisfies
\be
  \frac{\tau_j}{\zeta_j}<0,\  \ \ \text{for all\ \ $0\leq j \leq N-1$}.
\ee

\item   We assume that the initial value $\q_0(x)$ ensures the scattering coefficient  $a_{11}(z)$  exhibits general behavior as $z$ approaches the branch points $\pm q_0$, specifically
\be \label{E:a11qx}
\lim_{z \to \pm q_0} (z\mp q_0)  a_{11}(z) \ne 0.
\ee
\ei
\end{assume}
\begin{remark}
The exponential convergence of the initial data to the background is imposed
mainly for technical convenience.  Under this assumption, the reflection coefficients can be analytically extended beyond the real axis.  Such an assumption is commonly adopted in
the Deift--Zhou nonlinear steepest descent analysis.
\end{remark}

\begin{remark}
As noted in the Ref.~\cite{BD2015-1}, $a_{11}(z)$ may possesses zeros within the upper semicircle $C_0$. The presence of such a discrete spectrum does not pose fundamental difficulties,  but it does increase the complexity of the calculation; therefore, we have omitted it for simplicity.
\end{remark}

\begin{remark}
In this paper, the residue constant $\tau_j$ associated with $\zeta_j$ is defined by
$
\tau_j=c_j/a'_{11}(\zeta_j)
$,
where $c_j$ is the constant defined in Theorem~2.23 of Ref.~\cite{BD2015-1}.
It follows from Ref.~\cite[Eq.~(3.5a) ]{BD2015-1} that
$\tau_j/\zeta_j\in\mathbb{R}$. As noted in Ref.~\cite{BD2015-1}, the reconstructed soliton solutions are singular for $\tau_j/\zeta_j>0$ and regular for $\tau_j/\zeta_j<0$. 
Throughout this paper, we focus on the regular case.
\end{remark}

\begin{remark}
Assume that equation \eqref{E:a11qx} holds, it implies that  the analytic scattering coefficient $a_{11}(z)$ possesses first-order singularities at the branch points $\pm q_0$. As discussed in the scalar case (see~\cite[Appendix C]{CuJe2016}), this represents the general situation.
\end{remark}

To define the asymptotic regions, we need to examine the velocities of the solitons. According to the results in Ref.~\cite{BD2015-1}, from the perspective of inverse scattering, the 1-soliton solution of system~\eqref{E:demanakovS} with NZBCs~\eqref{E:bjtj}  can be recovered from the scattering data $\big\{ \zeta , \tau , r_1(z)=r_2(z)\equiv 0\big\}$, where $\zeta$ denotes the discrete spectrum lying on the upper semicircle $C_0$,   $\tau$ is the residue constant satisfying $\frac{\tau}{\zeta} <0$, and $\{ r_1, r_2\}$ represent the reflection coefficients (under our notation). By solving the pure soliton RH problem in Ref.~\cite{BD2015-1}, the 1-dark-soliton solution can be given by the following explicit expression:
\be \label{E:1soliton}
\begin{aligned}
\q_{sol}(x,t)
=
\q_+\e^{\ii\theta}
&\left[
\cos\theta
-\ii\sin\theta
\tanh\Bigl(
q_0\sin\theta
\bigl(x-2q_0\cos\theta t-x_{0}\bigr)
\Bigr)
\right],\\
\zeta=&q_0\e^{\ii\theta},
\qquad
x_{0}
:=
\frac{1}{2q_0\sin\theta}
\log\left(
-\frac{\tau}
{2\zeta \sin\theta}
\right).
\end{aligned}
\ee
The above expression shows that each soliton travels with velocity
$v=2\Re\zeta$. Setting $\xi=x/(2t)$ and using the fact that
$\Re\zeta<q_0$, we divide the $(x,t)$-plane into the following three
asymptotic regions:
\bi
\item[1.]  The soliton region $\cR_{sol}$: $\{(x,t)|\  |\xi|<q_0 \}$.
\item[2.]  The solitonless region: $\{(x,t)|\  |\xi| >q_0 \}$.
\item[3.]  The transition  region $\{(x,t)|\  |\xi| \approx q_0 \}$. This  region connects the soliton region and the solitonless region.
\ei
In this paper, we focus on the long-time asymptotics in the soliton region,
whereas the asymptotic analysis in the remaining regions will be left for
future work.

The above division is consistent with that in the scalar case. However, in
the present vector setting, the sign of $\xi$ leads to different
Deift--Zhou steepest descent analyses. We therefore split the soliton
region into the right and left parts. Let
$$\mathcal I_+=[m_0,m_1]\subset(0,q_0), \quad \text{and} \quad 
\mathcal I_-=[\widetilde m_0,\widetilde m_1]\subset(-q_0,0)$$ be two
closed subintervals. Since the solitons have distinct velocities, these
intervals can always be chosen so that each of them satisfies one of the
following two alternatives:
(i) it contains exactly one soliton velocity;
(ii) it contains no soliton velocities.
Accordingly, we shall derive separately the long-time asymptotic formula
near a single soliton and the formula away from solitons. Patching these
local descriptions together then gives the asymptotics throughout the two
parts of the soliton region.

To state the main result, we  introduce several quantities determined by the
scattering data. The scattering matrix $\A(z)$ is defined by (2.7). Let
$
\B(z)=\A^{-1}(z)
$,
and denote by $a_{ij}$ and $b_{ij}$ the $(i,j)$-entries of the matrices
$\A$ and $\B$, respectively. We define the reflection coefficients by
$$
r_1(z)=\frac{a_{21}(z)}{a_{11}(z)},\qquad
r_2(z)=\frac{a_{31}(z)}{a_{33}(z)},\qquad
r_3(z)=\frac{a_{23}(z)}{a_{33}(z)} .
$$
We also need another set of modified reflection coefficients,
$$
\hat r_1(z)=\frac{b_{21}(z)}{b_{11}(z)},\qquad
\hat r_3(z)=\frac{b_{23}(z)}{b_{33}(z)} .
$$
Let
$
z_0=\frac{q_0^2}{\xi}$, $ z_1=\xi$.
We define the negative-valued function $\nu(\xi)$ by
$$
\nu=\nu(\xi):=
\begin{cases}
-\frac{1}{2\pi}\log\bigl(1+\frac{1}{\gamma(z_0)}
\left|\hat r_3(z_0)\right|^2\bigr),
& \xi \in \mathcal I_+,\\
-\frac{1}{2\pi}\log\bigl( 1+\frac{1}{\gamma(z_0)}
\left|r_3(z_0)\right|^2\bigr),
& \xi \in \mathcal I_-,
\end{cases}
$$
where
$
\gamma(z)=1-\frac{q_0^2}{z^2}.
$
We denote the discrete scattering data generated by the initial data by
$
\left\{\zeta_j,\tau_j\right\}_{j=0}^{N-1}
$.
For a fixed \(j_0\in\{0,1,\dots,N-1\}\), The modified residue constant $\widetilde{\tau}_{j_0}$ associated with $\zeta_{j_0}$ is defined by
\be \label{E:taujfbdsss}
\widetilde \tau_{j_0}
=
\tau_{j_0}\mathcal D_{j_0}\mathcal R_{j_0}(\xi),
\ee
where
$$
\mathcal D_{j_0}:=
\left(
\prod_{\ell<j_0}
\left|
\frac{\zeta_{j_0}-\zeta_{\ell}}
{\zeta_{j_0}-\zeta_{\ell}^*}
\right|^2 
\right) \cdot
\exp\left\{
-\frac{\Im\zeta_{j_0}}{\pi}
\int_0^{+\infty}
\frac{
\log\left(1-\frac{1}{\gamma(s)}|r_1(s)|^2-|r_2(s)|^2\right)}
{|s-\zeta_{j_0}|^2}\,\mathrm d s
\right\},
$$
and
\[
\mathcal R_{j_0}(\xi):=
\begin{cases}
\exp\left\{
\frac{\Im\zeta_{j_0}}{\pi}
\int_{z_0}^{+\infty}
\frac{
\log\left(1+\frac{1}{\gamma(s)}|\hat r_3(s)|^2\right)}
{|s-\zeta_{j_0}|^2}\,\mathrm d s
\right\},
& \xi\in\mathcal I_+,\\
\exp\left\{
-\frac{\Im\zeta_{j_0}}{\pi}
\int_{-\infty}^{z_0}
\frac{
\log\left(1+\frac{1}{\gamma(s)}|r_3(s)|^2\right)}
{|s-\zeta_{j_0}|^2}\,\mathrm d  s
\right\},
& \xi\in\mathcal I_- .
\end{cases}
\]
Next, we define the function $\alpha(\xi)$ by
\be \label{E:alphaxidebds}
\begin{aligned}
\alpha(\xi)
=&
\frac{1}{2\pi}
\int_0^{+\infty}
s^{-1}
\log\biggl(
1-\frac{1}{\gamma(s)}|r_1(s)|^2-|r_2(s)|^2
\biggr)\,\mathrm d  s
+
2\sum_{\ell:\ \Re\zeta_\ell>\xi}
\arg \zeta_\ell  \\
&+
\begin{cases}
-\frac{1}{2\pi}
\int_{z_0}^{+\infty}
s^{-1}
\log\bigl(
1+\frac{1}{\gamma(s)}|\hat r_3(s)|^2
\bigr)\,\mathrm  d s,
& \xi\in \mathcal I_+,\\
\frac{1}{2\pi}
\int_{-\infty}^{z_0}
s^{-1}
\log\bigl(
1+\frac{1}{\gamma(s)}|r_3(s)|^2
\bigr)\,\mathrm  d  s,
& \xi \in \mathcal  I_- .
\end{cases}
\end{aligned}
\ee
Then we have the following results.

\subsection{Main results}
\begin{theorem}[Asymptotics near a soliton]\label{asy-th-1}
Let $\mathbf q(x,t)$ be a smooth solution of the defocusing Manakov
system~\eqref{E:demanakovS} with the NZBCs~\eqref{E:bjtj}, and assume that
the initial data $\mathbf q_0(x)$ satisfies Assumptions~\ref{As:1}. Let
$\alpha(\xi)$ be defined by~\eqref{E:alphaxidebds}. Fix
$\diamondsuit\in\{+,-\}$, and assume that $\mathcal I_\diamondsuit$
contains exactly one soliton velocity, namely
\[
\mathcal I_\diamondsuit\cap \{\Re\zeta_j\}_{j=0}^{N-1}
=
\{\Re\zeta_{j_0}\}.
\] Then, as
$t\to\infty$, the following asymptotic formula holds uniformly for
$\xi\in\mathcal I_\diamondsuit$:
\begin{align}\label{E:asyofq}
\mathbf q(x,t)
={}&
\e^{\ii\alpha(m_\star)}
\mathbf q_{ sol}^{(j_0)}(x,t)
+
\frac{A(\xi)\e^{\ii\Phi(x,t)}}{\sqrt t}
\biggl((\M_{ sol}^{(j_0)})^{-1}(x,t,z_1)\biggr)_{11}
\bpm
-\frac{q_{2,+}^*}{q_0}\\[1mm]
\frac{q_{1,+}^*}{q_0}
\epm
+\mathcal O(t^{-1}\log t),
\end{align}
where $m_\star:=\sup\mathcal I_\diamondsuit$.
Here $\mathbf q_{sol}^{(j_0)}(x,t)$ is the one-dark-soliton solution
associated with the modified scattering data $\{\zeta_{j_0},\widetilde\tau_{j_0}\}$,
where $\widetilde\tau_{j_0}$ is given by~\eqref{E:taujfbdsss}. Explicitly,
\[
\mathbf q_{sol}^{(j_0)}(x,t)
=
\mathbf q_+\e^{\ii\theta_{j_0}}
\left[
\cos\theta_{j_0}
-\ii\sin\theta_{j_0}
\tanh\Bigl(
q_0\sin\theta_{j_0}
\bigl(x-2q_0\cos\theta_{j_0}t-x_{j_0}\bigr)
\Bigr)
\right],
\]
\[
\zeta_{j_0}=q_0\e^{\ii\theta_{j_0}},
\qquad
x_{j_0}:=
\frac{1}{2q_0\sin\theta_{j_0}}
\log\left(
-\frac{\widetilde\tau_{j_0}}
{2\zeta_{j_0}\sin\theta_{j_0}}
\right).
\]
Moreover,
\[
A(\xi)
:=
-\gamma(z_1)\sqrt{\frac{\gamma(z_1)\nu(\xi)}{2}}>0,
\qquad
\Phi(x,t):=
\begin{cases}
\Phi_1(x,t), & \xi\in\mathcal I_+,\\
\Phi_2(x,t), & \xi\in\mathcal I_-.
\end{cases}
\]
The phase $\Phi_1$ is given by
\[
\begin{aligned}
\Phi_1(x,t)
={}&
-\frac{\pi}{4}
+\arg \hat r_3(z_0)
-\arg \Gamma(-i\nu) +\xi^2 t
+2\nu\log\left(\frac{q_0^2}{\xi^2\sqrt{2t}}\right)
+\arg\left(\prod_{\ell=0}^{N-1}\frac{\zeta_{\ell}^*}{\zeta_{\ell}}\right)
\\
&+
\arg\biggl(
\prod_{\ell:\ \Re\zeta_\ell\le m_*}
\frac{z_1-\zeta_\ell^*}{z_1-\zeta_\ell}
\biggr)
-
\frac{1}{\pi}
\int_{z_0}^{+\infty}
\log(s-z_0)\,
 \mathrm  d \log\bigl(
1+\frac{1}{\gamma(s)}
|\hat r_3(s)|^2
\bigr)
\\
&-
\frac{1}{2\pi}
\int_{z_0}^{+\infty}
\frac{s-2z_1}{s(s-z_1)}
\log\bigl(
1+\frac{1}{\gamma(s)}
|\hat r_3(s)|^2
\bigr)\,\mathrm  d s
\\
&-
\frac{1}{2\pi}
\int_0^{+\infty}
s^{-1}
\log\bigl(
1-\frac{1}{\gamma(s)}|r_1(s)|^2-|r_2(s)|^2
\bigr)\,\mathrm  d  s
\\
&+
\frac{1}{2\pi}
\int_{-\infty}^{0}
\frac{3s-2z_0-z_1}{(s-z_1)(s-z_0)}
\log\bigl(
1-\frac{1}{\gamma(s)}|r_1(s)|^2-|r_2(s)|^2
\bigr)\,\mathrm  d s.
\end{aligned}
\]
The phase $\Phi_2$ is given by
\[
\begin{aligned}
\Phi_2(x,t)
={}&
-\frac{\pi}{4}
+\arg r_3(z_0)
-\arg \Gamma(-i\nu)
+2\nu\log\left(\frac{q_0^2}{\xi^2\sqrt{2t}}\right)+
\arg\biggl(
\prod_{\ell:\ \Re\zeta_\ell>m_*}
\frac{z_1-\zeta_\ell}{z_1-\zeta_\ell^*}
\biggr)
+\xi^2t
\\
&+
\frac{1}{2\pi}
\int_0^{+\infty}
\frac{2s^2-z_0s-q_0^2}
{s(s-z_1)(s-z_0)}
\log\bigl(
1-\frac{1}{\gamma(s)}|r_1(s)|^2-|r_2(s)|^2
\bigr)\,\mathrm  d  s
\\
&+
\frac{1}{2 \pi}
\int_{-\infty}^{z_0}
\frac{s-2z_1}{s(s-z_1)}
\log\bigl(
1+\frac{1}{\gamma(s)}
|r_3(s)|^2
\bigr)\,\mathrm  d  s
\\
&+
\frac{1}{\pi}
\int_{-\infty}^{z_0}
\log(z_0-s)\,
\mathrm  d\log\bigl(
1+\frac{1}{\gamma(s)}
|r_3(s)|^2
\bigr).
\end{aligned}
\]
Finally, $\M_{ sol}^{(j_0)}(x,t,z)$  denotes the solution of the pure-soliton
RH problem~\ref{rhp:Msol} with the scattering data
replaced by $\{\zeta_{j_0},\widetilde\tau_{j_0}\}$.

\end{theorem}

The above result applies when the observation ray is located near a prescribed
soliton velocity. To complete the analysis in the soliton region, we next consider
compact velocity intervals that stay away from all soliton velocities.

\begin{theorem}[Asymptotics away from solitons]\label{Th:2m}
Under the assumptions of Theorem~\ref{asy-th-1}, suppose that
$\mathcal I_\diamondsuit$ contains no soliton velocity, namely
\[
\mathcal I_\diamondsuit \cap \{\Re\zeta_j\}_{j=0}^{N-1}=\emptyset.
\]
Then, as $t\to\infty$, the following asymptotic formula holds uniformly for
$\xi\in\mathcal I_\diamondsuit$:
\begin{align} \label{E:bjgs}
\mathbf q(x,t)
={}&
\e^{\ii\alpha(m_\star)}\mathbf q_+
+
\frac{\widetilde A(\xi)\e^{\ii\widetilde\Phi(x,t)}}{\sqrt t}
\bpm
-\dfrac{q_{2,+}^*}{q_0}\\[1mm]
\dfrac{q_{1,+}^*}{q_0}
\epm
+\mathcal O(t^{-1}\log t),
\end{align}
where
\[
\widetilde A(\xi)
=
-\frac{A(\xi)}{\gamma(z_1)}
=
\sqrt{\frac{\gamma(z_1)\nu(\xi)}{2}}>0,
\qquad
\widetilde\Phi(x,t)=\Phi(x,t)+\pi .
\]
Here $m_\star$ and $\Phi(x,t)$ are defined as in
Theorem~\ref{asy-th-1}.
\end{theorem}

Based on the above results, patching these local descriptions together
gives the asymptotics throughout the two parts of the soliton region. In
particular, the leading asymptotic term can be expressed as a finite sum of
soliton contributions.

\begin{corollary}\label{C:soliton-decomposition}
Let $\mathbf q(x,t)$ be a smooth solution of the defocusing Manakov
system~\eqref{E:demanakovS} with the NZBCs~\eqref{E:bjtj}, and assume that
the initial data $\mathbf q_0(x)$ satisfies Assumptions~\ref{As:1}. Let $\mathcal I_+$ and $\mathcal I_-$ be arbitrary closed subintervals of
$(0,q_0)$ and $(-q_0,0)$, respectively. Then, for each
$\diamondsuit\in\{+,-\}$, the following asymptotic formula holds uniformly
for $\xi\in\mathcal I_\diamondsuit$ as $t\to\infty$:
\be \label{E:soliton-decomposition}
\mathbf q(x,t)
=
\e^{\ii \tilde{\alpha}(\xi)}
\left[
\mathbf q_+
+
\sum_{k=0}^{N-1}
\left(
\prod_{j<k}\frac{\zeta_j}{\zeta_j^*}
\right)
\left(
\mathbf q_{sol}^{(k)}(x,t)-\mathbf q_+
\right)
\right]
+\mathcal O(t^{-1/2}).
\ee
Here $\tilde{\alpha}(\xi)$ is defined by
\be \label{E:tildealphaxidebds}
\begin{aligned}
\tilde{\alpha}(\xi)
=\alpha(\xi)-2\sum_{\ell:\ \Re\zeta_\ell>\xi}
\arg \zeta_\ell,
\end{aligned}
\ee
 and
$\mathbf q_{sol}^{(k)}(x,t)$ denotes the one-dark-soliton solution
associated with the modified scattering data
$\{\zeta_k,\widetilde\tau_k\}$, where $\widetilde\tau_k$ is defined
by~\eqref{E:taujfbdsss}.
\end{corollary}

\begin{proof}
For any $k\in\{0,\ldots,N-1\}$, choose $\varrho>0$ sufficiently small
such that $|\xi-\Re\zeta_k|\leq \varrho$. It is straightforward to verify that, for
$j>k$, the soliton generated by $\zeta_j$ satisfies
\[
\mathbf q_{sol}^{(j)}(x,t)
=
\mathbf q_+ + \mathcal O(\e^{-ct}),
\qquad
|\xi-\Re\zeta_k|\leq \varrho,\quad t\to\infty.
\]
For $j<k$, one has instead
\[
\mathbf q_{sol}^{(j)}(x,t)
=
\frac{\zeta_j}{\zeta_j^*}\mathbf q_+
+\mathcal O(\e^{-ct}),
\qquad
|\xi-\Re\zeta_k|\leq \varrho,\quad t\to\infty.
\]
When $j=k$, the soliton $\mathbf q_{sol}^{(j)}(x,t)$ has the asymptotic
behavior given by~\eqref{E:asyofq}. Therefore, a direct algebraic calculation
shows that the right-hand side of~\eqref{E:soliton-decomposition} is equivalent to the
right-hand side of~\eqref{E:asyofq}.

It remains to consider the part of $\mathcal I_\diamondsuit$ away from all
soliton velocities. Choose $\varrho>0$ sufficiently small so that the
neighborhoods
\[
\{\xi:\ |\xi-\Re\zeta_k|<\varrho\},\qquad k=0,\ldots,N-1,
\]
are mutually disjoint. Then the set
\[
\mathcal K_\diamondsuit
:=
\mathcal I_\diamondsuit
\setminus
\bigcup_{k=0}^{N-1}
\{\xi:\ |\xi-\Re\zeta_k|<\varrho\}
\]
is a finite union of compact intervals, and none of these intervals contains
a soliton velocity. Hence Theorem~\ref{Th:2m} applies on each connected
component of $\mathcal K_\diamondsuit$.
On any such compact component, the number of solitons which have passed the
observation ray is constant. Denote this number by $K$. Then, uniformly for $\xi$ in this component as $t\to\infty$,  the
asymptotic behavior of the $j$-th one-soliton profile is given by
\[
\mathbf q_{sol}^{(j)}(x,t)=
\begin{cases}
\dfrac{\zeta_j}{\zeta_j^*}\mathbf q_+
+\mathcal O(\e^{-ct}), & j<K,\\[2mm]
\mathbf q_+
+\mathcal O(\e^{-ct}), & j\ge K.
\end{cases}
\]
Thus, the right-hand side
of~\eqref{E:soliton-decomposition} is equivalent to the right-hand side of~\eqref{E:bjgs}.
Consequently, formula~\eqref{E:soliton-decomposition} follows.
\end{proof}

We now make several remarks on these results. 

\begin{remark}
If the first radiative correction is retained, then the coefficient of the
$t^{-1/2}$ term can be written explicitly; see formula~\eqref{E:asjjgdddgs}.
In Corollary~\ref{C:soliton-decomposition}, this contribution has been
absorbed into the error term $\mathcal O(t^{-1/2})$.
\end{remark}

\begin{remark}
Our results only provide the long-time asymptotic behavior when $x/(2t)$ belongs to arbitrary closed subintervals of $(-q_0,0)$ and $(0,q_0)$. Although we believe that the asymptotic formulas on the two sides are compatible as $\xi\to 0$, we do not obtain an asymptotic formula that is uniform in the region $\xi\approx 0$.
\end{remark}

\begin{remark}
The $\mathcal{O}(t^{-1/2})$ term in the asymptotic formula is a genuinely vectorial effect. Since the Manakov system is invariant under constant unitary transformations
and the boundary values are parallel, one may apply a constant unitary
transformation so that
$$
\mathbf q_+=(0,q_{2,+})^{\top},\qquad |q_{2,+}|=q_0 .
$$
In this setting, the $\mathcal O(t^{-1/2})$ radiative correction
appears only in the first component, which has zero background, while its
direct contribution to the second, nonzero-background component vanishes. This is consistent with the scalar nonzero-background defocusing NLS asymptotics~\cite{CuJe2016}, where such an additional radiative term is absent in the soliton region.
\end{remark}

This paper is organized as follows. In section~\ref{s:rhch}, we review the RH characterization of~\eqref{E:demanakovS} and~\eqref{E:bjtj}. Section~\ref{se:3} is devoted to the proof of Theorems~\ref{asy-th-1} and~\ref{Th:2m} by employing the Deift-Zhou steepest descent method.  Some technical proofs are presented in Appendices~\ref{App:AAA} to \ref{App:ccc}.

\section{A Riemann-Hilbert  formulation}\label{s:rhch}
In this section, we review the IST for the defocusing Manakov system~\eqref{As:1} with the NZBCs~\eqref{E:bjtj}. Most of these results can be found in Ref.~\cite{BD2015-1}.

It is well-known that the defocusing Manakov system is completely integrable because it possesses a $3\times 3$ matrix Lax pair~\cite{BD2015-1}:
\be \label{E:laxp1}
\bPhi_x=\widehat{\bX} \bPhi, \qquad \bPhi_t=\widehat{\bT} \bPhi,
\ee
where
\begin{align*}
&\widehat{\bX}(x,t,k)=-\ii k \bJ +\bQ, \qquad \widehat{\bT}(x,t,k)=2 \ii k^2 \bJ- \ii \bJ \left(\bQ_x-\bQ^2+q_0^2 \right)-2k \bQ,\\
&\bJ=\bpm 1& \mathbf{0}^{\top}\\
\mathbf{0} &-\bI_{2 \times 2}
\epm,  \quad
\bQ=\bpm
0& \q^{\dagger }\\
\q &  \mathbf{0}_{2 \times 2}
\epm.
\end{align*}
It can be expected that when $x \to \pm \infty$, the solutions to the scattering problem will be approximated by those of the asymptotic scattering problem
$$ \bPhi_x = \widehat{\bX}_{\pm} \bPhi, \qquad    \bPhi_t = \widehat{\bT}_{\pm} \bPhi,$$
where $\widehat{\bX}_{\pm}= \lim_{x \to \pm \infty} \widehat{\bX}$ and $\widehat{\bT}_{\pm}=\lim_{x \to \pm \infty} \widehat{\bT}$. The eigenvalues of $\widehat{\bX}_{\pm}$ are $\ii k$ and $\pm\ii\lambda$, where 
\be \label{E:lambda}
 \lambda(k)=\sqrt{k^2-q_0^2}. 
\ee 
The function $\lambda(k)$ is multi-valued and hence gives rise to a branching structure. Following \cite{BD2015-1}, we introduce the two-sheeted Riemann surface associated with~\eqref{E:lambda}. The branch points are determined by $\lambda(k)=0$, namely $k=\pm q_0$. We choose the branch cut to be $ (-\infty,-q_0]\cup[q_0,\infty)$, and fix the branch of $\lambda(k)$ by imposing $ \lambda(0)=\ii q_0$.  Next, we introduce the
uniformization variable by defining
$$
z=k+\lambda.
$$
The inverse transformation can be obtained by
\begin{align}\label{E:intr}
k = \frac{1}{2}(z + \frac{q^2_0}{z}),\quad \lambda = \frac{1}{2}(z - \frac{q^2_0}{z}).
\end{align}
Following the notation in Ref.~\cite{BD2015-1}, we denote the orthogonal vector of a two-component complex-valued vector $\bv=(v_1,v_2)$ as $\bv^{\perp}=\left(v_2,-v_1  \right)^{\dagger}$. We further introduce  three matrices
\be \label{E:trmatrix}
\E_{\pm}(z)=
\bpm
1&0&-\frac{\ii q_0}{z}\\
\ii \frac{\q_{\pm}}{z} & \frac{\q_{\pm}^{\perp}}{q_0}&\frac{\q_{\pm}}{q_0}
\epm, \qquad
 \mathbf{\Lambda}(z)=\mathrm{diag} \left( -\lambda,  k,  \lambda  \right),\qquad
\mathbf{\Omega}(z)=\mathrm{diag} \left( -2k \lambda,  k^2+\lambda^2,  2 k \lambda  \right),
\ee
which satisfy the relation
$$
\E_{\pm}^{-1} \widehat{\bX}_{\pm} \E_{\pm}=\ii \mathbf{\Lambda},\qquad
\E_{\pm}^{-1} \widehat{\bT}_{\pm} \E_{\pm}=-\ii \mathbf{\Omega}.
$$
Then, it is easy to see that the Jost solutions $\boldsymbol{\mu}_+(x,t,z)$ and $\boldsymbol{\mu}_-(x,t,z)$ defined by the following integral equations are the unique solutions:
\begin{subequations}
	\begin{align}
		&\boldsymbol{\mu}_-(x,t,z) = \E_-(z) + \int_{-\infty}^{x} \E_{-}(z) \e^{\ii   (x-y) \mathbf{\Lambda}(z)} \E^{-1}_{-}(z) \Delta \bQ_{-}(y,t) \boldsymbol{\mu}_-(y,t,z) \e^{-\ii   (x-y) \mathbf{\Lambda}(z)}  \mathrm{d}y, \label{E:mu-}\\
		&\boldsymbol{\mu}_+(x,t,z) = \E_+(z) - \int_{x}^{+\infty} \E_{+}(z) e^{\ii   (x-y) \mathbf{\Lambda}(z)} \E^{-1}_{+}(z) \Delta \bQ_{+}(y,t) \boldsymbol{\mu}_+(y,t,z) \e^{-\ii   (x-y) \mathbf{\Lambda}(z)} \mathrm{d}y, \label{E:mu+}
	\end{align}
\end{subequations}
where $\Delta \bQ_{\pm}=\bQ-\bQ_{\pm}$ with $\bQ_{\pm}=\lim_{x \to \pm \infty} \bQ(x,t)$.  
We define $ \bPhi_{\pm}(x,t,z)=\boldsymbol{\mu}_{\pm}(x,t,z) \e^{\ii \mathbf{\Lambda}(z)x - \ii \mathbf{\Omega}(z)t}$. Then, $\bPhi_{\pm}(x,t,z)$  are the fundamental solutions of Lax pair~\eqref{E:laxp1} for $z \in \R \setminus \{0,\pm q_0 \}$. This is because
\be \label{E:degamma}
\mathrm{det}  \boldsymbol{\mu}_{\pm}(x,t,z)=\mathrm{det} \E_{\pm}(z) =1-\frac{q_0^2}{z^2}=:\gamma(z).
\ee
Therefore, there exists a matrix $\A(z)$ independent of  $x$  and  $t$  such that
\be
\bPhi_-(x,t,z)=\bPhi_+(x,t,z) \A(z), \quad  z \in \R \setminus \{0, \pm q_0 \}.
\ee
By~\cite[Theorem~2.3]{BD2015-1}, under suitable decay assumptions on
$\Delta\bQ_{\pm}$ as $x\to\pm\infty$, the $(1,1)$- and $(3,3)$-entries of
$\A(z)$ can be analytically continued to the upper and lower half-planes,
respectively. In general, the other entries are defined only on the real
axis. The following proposition shows that, if the initial data approach the
background in the sense of~\eqref{E:aszssj}, then all entries of $\A(z)$
admit analytic continuations off the real axis.  Define  $\bB(z) = \A^{-1}(z)$.  Throughout the paper, we let $a_{ij}$ and $b_{ij}$ denote the $(i,j)$-th entries of the scattering matrices $\A$ and $\B$, respectively.
We now summarize
the properties of $\A(z)$ and $\bB(z)$ that follow from
\cite[Lemmas~2.13 and~2.16, Sections~2.6--2.7]{BD2015-1}.

\begin{prop}\label{P:Aas}
Suppose the initial data $\q_0(x)$ satisfies Assumptions~\ref{As:1}. Then scattering matrices $\A(z)$ and $\bB(z)$ have the following properties:
\bi
\item  $\A(z)$ and $\bB(z)$ satisfy the following symmetry properties
\be \label{E:ABdc}
\A(\hat{z})= \bPi(z) \A(z) \bPi^{-1}(z), \qquad
\bB(\hat{z})= \bPi(z) \bB(z) \bPi^{-1}(z),
\ee
where $\hat{z}=\frac{q_0^2}{z}$ and  $\bPi(z)$ is given by
\be \label{E:pi}
\bPi(z)= \bpm
0&0&-\ii \frac{q_0}{z}\\
0&1&0\\
\ii \frac{q_0}{z}&0&0
\epm.
\ee
\item $\A(z)$ and $\bB(z)$ have the following relationship:
\be \label{E:ABrel}
 (\A(z^*))^{\dagger} = \mathbf{\Gamma}^{-1}(z) \bB(z) \mathbf{\Gamma}(z),
\quad
\mathbf{\Gamma}(z)=\bpm
-1 & 0&0\\
0& \gamma(z)&0\\
0&0&1
\epm.
\ee

\item
Define
$
S_{\varepsilon}
=
\left\{z\in\mathbb{C}: |\operatorname{Im} z|\leq \varepsilon\right\}
\setminus (B_1\cup B_2)
$, where $B_1$ and $B_2$ are the disks centered at
$\frac{\ii q_0^2}{2\varepsilon}$ and $-\frac{\ii q_0^2}{2\varepsilon}$,
respectively, both with radius $\frac{q_0^2}{2\varepsilon}$. Then, as
$z\to\infty$ within $S_{\varepsilon}$, we have
\begin{equation}\label{E:Azinfty}
a_{13}(z)=\mathcal{O}\left(\frac{1}{z}\right),\qquad
a_{21}(z)=\mathcal{O}\left(\frac{1}{z^2}\right),\qquad
a_{23}(z)=\mathcal{O}\left(\frac{1}{z}\right),\qquad
a_{31}(z)=\mathcal{O}\left(\frac{1}{z}\right).
\end{equation}

\item 
As $z \to 0$ within $S_{\varepsilon}$, we have
\be \label{E:Az0}
\begin{aligned}
 a_{13}(z)=\mathcal{O}(z),\quad
 a_{21}(z)=\mathcal{O}(1), \quad  a_{23}(z)= \mathcal{O}(z),  \quad
a_{31}(z)=\mathcal{O}(z).
\end{aligned}
\ee

\item  The diagonal entries $a_{11}$ and $a_{33}$ exhibit the following asymptotic behavior:
\be \label{E:djs0}
\begin{aligned}
&a_{11}(z) =1+\mathcal{O}(1/z), \quad  S_{\varepsilon } \ni z \to \infty; \quad
a_{11}(z)=\e^{-\ii (\theta_+-\theta_-)}+\mathcal{O}(z), \quad S_{\varepsilon} \ni z \to 0,\\
& a_{33}(z)=\e^{-\ii (\theta_+-\theta_-)}+\mathcal{O}(1/z), \quad  S_{\varepsilon } \ni z \to \infty; \quad a_{33}(z)=1+\mathcal{O}(z), \quad  S_{\varepsilon} \ni z \to 0.
\end{aligned}
\ee

\item   Near the branch points  $\pm q_0$, we have
\be \label{E:asAnearq0}
\A(z)=\frac{1}{z\mp q_0}\A_{\pm}+\mathcal{O}(1), \quad  z \to \pm q_0, \quad z \in \C \setminus \{ \pm q_0\},
\ee
where
\be \label{E:Apm}
\A_{\pm}=a_{11,\pm}
\bpm
1&0&\mp \ii \\
0&0&0\\
\mp \ii &0&-1
\epm  +
a_{12,\pm}
\bpm
0&1&0\\
0&0&0\\
0& \mp \ii& 0
\epm, \qquad  a_{11,\pm} \ne 0.
\ee

\ei
\end{prop}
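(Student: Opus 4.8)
The plan is to obtain all six properties from two structural symmetries of the Lax pair \eqref{E:laxp1} together with a Neumann-series analysis of the Jost integral equations \eqref{E:mu-}--\eqref{E:mu+}, following the scheme of Ref.~\cite{BD2015-1}. The one genuine simplification available to us is the compact-support hypothesis $\q_0-\q_\pm\equiv0$ for $|x|>L$: it makes every integral in \eqref{E:mu-}--\eqref{E:mu+} range over a compact set, so that $\boldsymbol{\mu}_\pm(x,0,z)$, and hence $\A(z)$ and $\B(z)=\A^{-1}(z)$, are analytic on all of $\C\setminus\{0,\pm q_0\}$ rather than merely on half-planes. Each property below is then established on $\R\setminus\{0,\pm q_0\}$ and propagated by analyticity.

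For the involution symmetry \eqref{E:ABdc} I would use the map $z\mapsto\hat z=q_0^2/z$, under which \eqref{E:intr} gives $k(\hat z)=k(z)$ and $\lambda(\hat z)=-\lambda(z)$, so that $\blambda(\hat z)=\bPi(z)\blambda(z)\bPi^{-1}(z)$. A direct substitution into \eqref{E:trmatrix} shows that $\E_\pm(\hat z)$ equals $\E_\pm(z)\bPi(z)$ up to an explicit diagonal factor that commutes with $\blambda$; since the kernels in \eqref{E:mu-}--\eqref{E:mu+} are assembled from $\E_\pm$, $\E_\pm^{-1}$ and $\e^{\pm\ii(x-y)\blambda}$, this intertwining relation propagates through the integral equations and, by uniqueness of the Jost solutions, yields the corresponding symmetry of $\boldsymbol{\mu}_\pm$. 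Substituting it into $\bPhi_-=\bPhi_+\A$ and using that $\bPi$ commutes with the oscillatory factors gives \eqref{E:ABdc} for $\A$; the same computation applied to $\B=\A^{-1}$ gives the companion relation.

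For the relation \eqref{E:ABrel} I would exploit the Hermitian structure of the defocusing problem: since $\bQ^\dagger=\bQ$ and $\bJ^\dagger=\bJ$, with $k\in\R$ on the real axis, there is a metric with respect to which $\tilde{\bX}$ is skew relative to its adjoint, so that a suitable bilinear form built from $\bPhi_\pm^\dagger(z)$ and $\bPhi_\pm(z)$ is conserved in $x$. Re-expressing this conservation law through $\bPhi_-=\bPhi_+\A$, and accounting for the non-unimodular normalization $\det\boldsymbol{\mu}_\pm=\gamma(z)$ from \eqref{E:degamma}, which is exactly what produces the middle entry $\gamma$ of $\mathbf{\Gamma}$, yields $\A^\dagger=\mathbf{\Gamma}^{-1}\B\mathbf{\Gamma}$. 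The large-$z$ asymptotics \eqref{E:Azinfty} and \eqref{E:djs0} I would read off by iterating \eqref{E:mu-}--\eqref{E:mu+} and integrating by parts against the exponentials $\e^{\pm\ii(x-y)\blambda}$, which produces the entry-by-entry decay rates stated. The $z\to0$ behavior \eqref{E:Az0} and the remaining halves of \eqref{E:djs0} then follow for free from \eqref{E:ABdc}: as $z\to0$ one has $\hat z\to\infty$, and the region $S_d$ is defined precisely so that $\hat z$ stays inside the strip $S_\varepsilon$, making the $z\to\infty$ estimates applicable to $\hat z$.

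Finally, for the branch-point expansion \eqref{E:asAnearq0}--\eqref{E:Apm} I would begin from the observation that $\det\boldsymbol{\mu}_\pm=\gamma(z)$ vanishes at $z=\pm q_0$, where also $\lambda=0$; there the columns of $\E_\pm$ become linearly dependent, and this collapse is the source of the singularity of $\A=\bPhi_+^{-1}\bPhi_-$. I would compute the leading Laurent coefficient by expanding $\E_\pm$ and $\lambda$ about $\pm q_0$, and then impose both symmetries \eqref{E:ABdc} and \eqref{E:ABrel} to reduce the admissible singular part to the rank-constrained form $\A_\pm$ parametrized by $a_{11,\pm}$ and $a_{12,\pm}$, the nondegeneracy $a_{11,\pm}\neq0$ being precisely the condition~\eqref{E:a11qx}. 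I expect this last step to be the main obstacle: tracking how the pole cancellations forced by the two symmetries interact with the simultaneous vanishing of $\gamma$ and $\lambda$ is delicate, and it is here that the argument departs most from the scalar case, requiring the full $3\times3$ bookkeeping of subsection~2.7 of Ref.~\cite{BD2015-1}.
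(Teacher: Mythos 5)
Your proposal has the same skeleton as the paper's proof: the symmetries \eqref{E:ABdc}--\eqref{E:ABrel} and the branch-point expansion \eqref{E:asAnearq0}--\eqref{E:Apm} are taken from Ref.~\cite{BD2015-1} (which is exactly what the paper does), the large-$z$ estimates are extracted from the integral representation of $\A$ in terms of the Jost solutions, and \eqref{E:Az0} together with the $z\to0$ half of \eqref{E:djs0} is transferred from the $z\to\infty$ estimates through the involution $z\mapsto\hat z$, the region $S_d$ being precisely the set on which $\hat z=q_0^2/z$ has bounded imaginary part; this last step is essentially the paper's own argument in Appendix~\ref{App:pofzas} (note only that $\hat z$ lands in a strip of width $q_0^2$, not $\varepsilon$, which is harmless since \eqref{E:Azinfty} holds for every $\varepsilon>0$).

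The genuine gap is in the step you summarize as ``integrating by parts against the exponentials $\e^{\pm\ii(x-y)\blambda}$, which produces the entry-by-entry decay rates stated.'' Conjugation by $\e^{\ii y\blambda}$, with $\blambda=\mathrm{diag}(-\lambda,k,\lambda)$, places the phase $\e^{-\ii y(\Lambda_j-\Lambda_k)}$ in the $(j,k)$ entry, and these phases are \emph{not} all oscillatory at large $z$: while $k+\lambda=z$ and $2\lambda=z-q_0^2/z$ grow linearly (so integration by parts is available for $a_{21}$, $a_{13}$, $a_{31}$), one has $k-\lambda=q_0^2/z\to0$, so the exponential governing $a_{23}$ tends to $1$, and the diagonal phases vanish identically. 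For those entries integration by parts produces no decay at all --- indeed $1/(\lambda-k)=\mathcal{O}(z)$, so it makes matters worse. This is exactly the point Appendix~\ref{App:pofzas} is at pains to make: $a_{23}(z)=\mathcal{O}(1/z)$ follows not from oscillation but from the algebraic identity
\[
\bpm 0 & (\q_{\pm}^{\perp})^{\dagger}/q_0 \epm\,\bigl[\bQ(y,0)-\bQ_{\pm}\bigr]\,\bpm 0 \\ \q_-/q_0 \epm = 0,
\]
i.e.\ from the block structure of $\bQ$ together with the parallel boundary conditions, and the same mechanism underlies the $\mathcal{O}(1/z)$ corrections to $a_{11}$ and $a_{33}$ in \eqref{E:djs0}. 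Likewise the improved rate $a_{21}=\mathcal{O}(1/z^2)$ is not a consequence of iterated integration by parts alone: one must insert the two-term expansion \eqref{E:muinfty} of $\boldsymbol{\mu}_{-1}$ (which Ref.~\cite{BD2015-1} proves only on the closed half-plane, and which the paper first extends to the strip $S_\varepsilon$ using the compact-support hypothesis), check that the $\mathcal{O}(1/z)$ contribution is annihilated by the orthogonality $(\q_{+}^{\perp})^{\dagger}\q_{\pm}=0$, and only then integrate by parts the remaining oscillatory term, whose boundary contributions vanish because $(\q_{+}^{\perp})^{\dagger}\q(y,0)$ vanishes at $y=\pm L$. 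Without these cancellation mechanisms your argument yields at best $a_{21}=\mathcal{O}(1/z)$ and gives no bound whatsoever on $a_{23}$, so the dichotomy of rates in \eqref{E:Azinfty}--\eqref{E:djs0}, on which the later triangular factorizations and the definition of $\delta_1$ depend, is not established.
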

\begin{proof}
The proofs of~\eqref{E:Azinfty} and~\eqref{E:Az0} are given in Appendix~\ref{App:pofzas}. 
The proof of~\eqref{E:djs0} is completely analogous and is therefore omitted. 
The corresponding proofs for the remaining assertions can be found in Ref.~\cite{BD2015-1}.
\end{proof}

Next, let us define the reflection coefficients  $r_{1}(z)$,  $r_2(z)$ and $r_3(z)$ by
\be \label{E:fsxsr12}
r_1(z)=\frac{a_{21}(z)}{a_{11}(z)}, \quad r_2(z)=\frac{a_{31}(z)}{a_{11}(z)}, \quad
r_{3}(z)=\frac{a_{23}(z)}{a_{33}(z)}.
\ee
Due to symmetry~\eqref{E:ABdc}, it is straightforward to verify that  $r_1(z) =\frac{\ii q_0}{z} r_3(\hat{z})$, where $\hat{z}=\frac{q_0^2}{z}$. Since the scattering matrix elements  $a_{ij}(z)$  is analytic for $ z \in S_{\varepsilon}$, the reflection coefficients  $r_1$, $r_2$  and  $r_3$  are also analytic in  $ S_{\varepsilon}$. The following lemma is a direct consequence of Proposition~\ref{P:Aas} and characterizes the asymptotic properties of the reflection coefficients.
\begin{lemma}\label{L:fsxsxz}
Suppose the initial data $\q_0(x)$ satisfies Assumptions~\ref{As:1}.
Then the  reflection coefficients defined by~\eqref{E:fsxsr12} have the following properties:
\bi
\item
 For $z \in S_{\varepsilon }$ with $z \to \infty$, the functions  $\{r_j(z) \}_{j=1}^3$ have  the following asymptotic behavior:
\be
r_{1}(z)=\mathcal{O}(\frac{1}{z^2}), \quad
r_3(z)=\mathcal{O}(\frac{1}{z}), \quad
r_{2}(z)=\mathcal{O}(\frac{1}{z}).
\ee

\item  The functions $\{r_j(z) \}_{j=1}^3$ have well-defined limits as  $S_{\varepsilon} \ni z \to 0$.  Furthermore, we have
\be \label{E:r123z0}
\lim_{S_{\varepsilon} \ni z \to 0} r_2(z) =\lim_{S_{\varepsilon} \ni z \to 0} r_3(z)=0.
\ee

\item The functions  $\{r_j(z) \}_{j=1}^3$ have well-defined limits at the branch points $\pm q_0$.

\ei
\end{lemma}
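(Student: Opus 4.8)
The plan is to read off all three assertions from the entrywise behavior of the scattering matrix $\A(z)$ collected in Proposition~\ref{P:Aas}, combined with the definitions~\eqref{E:fsxsr12}. The organizing principle is elementary: each $r_j$ is a quotient of two entries of $\A$, so once one checks that the relevant denominator ($a_{11}$ for $r_1,r_2$ and $a_{33}$ for $r_3$) stays bounded away from zero in the limit at hand, the order of $r_j$ is governed entirely by the order of its numerator. As an auxiliary device and internal consistency check I would also use the symmetry $r_1(z)=\frac{\ii q_0}{z}\,r_3(\hat z)$ recorded above (equivalently $r_3(z)=-\frac{\ii q_0}{z}\,r_1(\hat z)$, with $\hat z=q_0^2/z$): the involution $z\mapsto\hat z$ interchanges $z\to\infty$ with $z\to0$ and fixes the branch points $\pm q_0$, so every statement about $r_3$ mirrors the corresponding statement about $r_1$.

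For the first bullet ($S_\varepsilon\ni z\to\infty$) I would divide the numerator estimates $a_{21}=\mathcal{O}(z^{-2})$, $a_{31}=\mathcal{O}(z^{-1})$ and $a_{23}=\mathcal{O}(z^{-1})$ from~\eqref{E:Azinfty} by the diagonal limits in~\eqref{E:djs0}, namely $a_{11}\to1$ and $a_{33}\to\e^{-\ii(\theta_+-\theta_-)}$, both of unit modulus and hence nonzero; this yields at once $r_1=\mathcal{O}(z^{-2})$, $r_2=\mathcal{O}(z^{-1})$ and $r_3=\mathcal{O}(z^{-1})$. For the second bullet ($S_d\ni z\to0$) the same division, now using the $z\to0$ data~\eqref{E:Az0} together with the nonvanishing limits $a_{11}\to\e^{-\ii(\theta_+-\theta_-)}$ and $a_{33}\to1$ from~\eqref{E:djs0}, shows that $r_1=a_{21}/a_{11}$ approaches a finite (generically nonzero) limit since $a_{21}=\mathcal{O}(1)$, whereas $a_{31}=\mathcal{O}(z)$ and $a_{23}=\mathcal{O}(z)$ force $r_2=\mathcal{O}(z)$ and $r_3=\mathcal{O}(z)$ to vanish; this is exactly~\eqref{E:r123z0}.

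The genuinely delicate point, and the step I expect to be the main obstacle, is the third bullet at the branch points $\pm q_0$, where the individual entries of $\A$ are no longer bounded. Here I would substitute the singular expansion $\A(z)=(z\mp q_0)^{-1}\A_\pm+\mathcal{O}(1)$ from~\eqref{E:asAnearq0} into the three quotients, reading the residue matrix $\A_\pm$ from~\eqref{E:Apm}. The key structural observation is that the $(2,1)$ and $(2,3)$ entries of $\A_\pm$ vanish, so $a_{21}$ and $a_{23}$ stay bounded as $z\to\pm q_0$, while $a_{11}$, $a_{31}$ and $a_{33}$ each acquire a genuine simple pole with leading coefficients $a_{11,\pm}$, $\mp\ii\,a_{11,\pm}$ and $-a_{11,\pm}$, respectively. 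Consequently $r_1=a_{21}/a_{11}$ and $r_3=a_{23}/a_{33}$ are of the form (bounded)$/$(pole) and therefore tend to $0$, whereas $r_2=a_{31}/a_{11}$ is a ratio of two simple poles whose common factor $(z\mp q_0)^{-1}$ cancels, leaving the finite limit $(\mp\ii\,a_{11,\pm})/a_{11,\pm}=\mp\ii$. The nonvanishing $a_{11,\pm}\neq0$ asserted in~\eqref{E:Apm} and guaranteed by~\eqref{E:a11qx} of Assumptions~\ref{As:1} is precisely what makes the denominator's pole genuine and the three limits well defined; without it the cancellation producing $\lim r_2=\mp\ii$ would break down. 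As a check, the symmetry relation reproduces $\lim_{z\to\pm q_0}r_3=0$ from $\lim_{z\to\pm q_0}r_1=0$, since the prefactor $-\ii q_0/z$ stays finite at $\pm q_0$, confirming that the direct computation is internally consistent.
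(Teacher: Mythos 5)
Your proposal is correct and follows essentially the same route as the paper, whose proof of this lemma simply notes that all three assertions follow by combining Proposition~\ref{P:Aas} with the definitions~\eqref{E:fsxsr12}. You have merely filled in the details the paper leaves implicit — in particular the cancellation of the simple poles at $\pm q_0$ using the structure of $\A_\pm$ in~\eqref{E:Apm} and the nonvanishing of $a_{11,\pm}$ — and your computations (including the limit $r_2\to\mp\ii$ at the branch points and the consistency check via $r_1(z)=\frac{\ii q_0}{z}r_3(\hat z)$) are accurate.
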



\begin{remark}\label{R:fsjs}
Although the limit of $r_{1}(z)$ exists as $z\to0$ within $S_{\varepsilon}$, in general
one cannot conclude that
$
\lim\limits_{S_{\varepsilon} \ni z \to 0} r_1(z) =0.
$
\end{remark}

Following Ref.~\cite{BD2015-1}, we define a piecewise meromorphic function $\M(x,t,z)$ as follows~(see~\cite[Eq.(3.1a) and Eq.(3.1b)]{BD2015-1}):
\be \label{E:exM}
\M(x,t,z)=\begin{cases}
\left(\frac{\boldsymbol{\mu}_{-1}}{a_{11}}, \frac{\m}{b_{33}}, \boldsymbol{\mu}_{+3}   \right), & \mathrm{Im} z >0,\\
\left(\boldsymbol{\mu}_{+1}, -\frac{\bar{\m} }{b_{11}},   \frac{\boldsymbol{\mu}_{-3}}{a_{33}}  \right), & \mathrm{Im} z <0,
\end{cases}
\ee
where $
\bar{\m}(x,t,z)=-\bJ[\bPhi_{-1}^*  \times \bPhi_{+3}^*] (x,t,z^*) / \gamma(z),\
\m(x,t,z)=-\bJ[\bPhi_{-3}^*  \times \bPhi_{+1}^*] (x,t,z^*) / \gamma(z)$. Here $``\times"$ denotes the usual cross product. Then, according to the results in Ref.~\cite{BD2015-1}, one can conclude that $\M(x,t,z)$ satisfies the following RH problem:
\begin{RHP}\label{RHP:zc}
Find a $3 \times 3$ matrix-valued function $\M(x,t,z)$ with the following properties:
\bi
\item $\M(x,t,\cdot) : \mathbb{C}\setminus \{\rZ \cup \R \} \to \mathbb{C}^{3 \times 3}$ is analytic, where $\rZ=\{ \zeta_j \}_{j=0}^{N-1} \cup \{ \zeta_j^* \}_{j=0}^{N-1}$.
\item $\M(x,t,z)$ satisfies the jump condition:
\be \label{E:Jump}
\M_+(x,t,z)=\M_-(x,t,z) \V(x,t,z), \quad z \in \R \setminus \{0 \}.
\ee
Here,
\be \label{E:Vex}
\V=\e^{\Theta(x,t,z)}
\bpm
1-\frac{1}{\gamma(z)}|r_1(z)|^2-|r_2(z)|^2& \frac{1}{\gamma(z)}(-r_1(z)+r_2(z)r_3(z))^*& -r^*_2(z) \\
r_1(z)-r_2(z)r_3(z) &1+\frac{1}{\gamma(z)}|r_3(z)|^2& -r_3(z)\\
r_2(z) & -\frac{1}{\gamma(z)}r_3^*(z)& 1
\epm
\e^{-\Theta(x,t,z)},
\ee
where $\Theta(x,t,z)=\mathrm{diag}\left(\theta_1(x,t,z),\theta_2(x,t,z),\theta_3(x,t,z)   \right)$ with
\be \label{E:theta123}
\begin{aligned}
\theta_1(x,t,z)&=-\ii\lambda(z)x+2\ii k(z)\lambda(z)t,\\
\theta_2(x,t,z)&=\ii k(z)x-\ii\bigl(k^2(z)+\lambda^2(z)\bigr)t,\\
\theta_3(x,t,z)&=\ii\lambda(z)x-2\ii k(z)\lambda(z)t.
\end{aligned}
\ee
\item  $\M$ admits the asymptotic behavior:
$$  \M=\M_{\infty}+\mathcal{O}(\frac{1}{z}), \quad z \to \infty ; \quad \M =\frac{\ii}{z} \M_0 + \mathcal{O}(1), \quad z \to 0,  $$
where
\be \label{E:M0infty}
\M_{\infty}=\bpm
1&0&0\\
\mathbf{0}& \q_+^{\perp}/q_0& \q_+/q_0
\epm, \quad
\M_0=\bpm
0&0&-q_0\\
\q_+&\mathbf{0}&\mathbf{0}
\epm.
\ee
\item  $\M(x,t,z)$ satisfies the growth conditions near the branch points $\pm q_0$:
\be \label{E:gcc}
\begin{cases}
\M_1(x,t,z)=\mathcal{O}(z \mp q_0), & z \in \C_+ \to \pm q_0,\\
\M_3(x,t,z)=\mathcal{O}(z \mp q_0), & z \in \C_- \to \pm q_0.
\end{cases}
\ee
\item $\M$ satisfies the symmetries
\begin{equation}\label{E:RHP11}
\M(x,t,z)=\M(x,t,\hat{z}) \bPi(z), \quad
(\M^{-1})^{\top}(x,t,z)=-\frac{1}{\gamma(z)}\bJ \M^*(x,t,z^*) \mathbf{\Gamma}(z).
\end{equation}

\item  The following residue conditions hold at  each point $\zeta_j$, $j=0,...,N-1$:
\begin{equation}\label{E:mlstjzc}
\mathrm{Res}_{z =\zeta_j}\M
= \lim_{z\to \zeta_j}\M \bpm 0 & 0 &0\\
0&0& 0 \\ \tau_{j} \e^{\theta_{31}(x,t,\zeta_j)}&0&0 \epm,
\end{equation}
where $\theta_{mn}(x,t,z):=\theta_m(x,t,z)-\theta_n(x,t,z)$ for $1 \leq m \leq 3$ and $1 \leq n \leq 3$.
\ei
\end{RHP}

\begin{remark}
The jump matrix used here is obtained by simplifying the jump matrix in
Ref.~\cite{BD2015-1}; see Eq.~(3.2) therein. In Appendix~\ref{Apppof258}, we show how to
reduce the jump matrix in Ref.~\cite{BD2015-1} to the form given in~\eqref{E:Vex}.
\end{remark}

For the self-consistency of this paper, we establish the uniqueness of the solution to the above RH problem.
\begin{lemma}\label{L:wyxRH1}
The solution of RH problem~\ref{RHP:zc} is unique, if it exists. Moreover, for any solution $\M$, the determinant satisfies $\det \M =\gamma(z)$.
\end{lemma}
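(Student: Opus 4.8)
The plan is to settle the determinant identity first and then deduce uniqueness from it, since both rest on the same two facts: the jump matrix $\V$ is unimodular, and the residue/growth data are compatible with a \emph{scalar} determinant equal to $\gamma$. The whole argument uses only the data prescribed in RH problem~\ref{RHP:zc}, so it applies to any hypothetical solution.

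For the determinant, I would first check that $\det\V(x,t,z)=1$. Conjugation by $\e^{\pm\Theta}$ leaves the determinant unchanged, so it suffices to compute the determinant of the middle matrix in~\eqref{E:Vex}; a cofactor expansion along the last row yields three terms in which every $r$-dependent contribution cancels in pairs, leaving $1$ (this is the one routine computation). Taking determinants in the jump relation~\eqref{E:Jump} then gives $\det\M_+=\det\M_-$ on $\R\setminus\{0,\pm q_0\}$, so $d(z):=\det\M(x,t,z)$ continues analytically across the real axis there. Next I would show $d$ has no poles on the discrete spectrum: near $\zeta_j$ the residue condition~\eqref{E:mlstjzc} can be written $\M(z)=\mathbf{C}_j(z)\bigl(\bI+\tfrac{\mathcal{N}_j}{z-\zeta_j}\bigr)$, where $\mathcal{N}_j$ is the nilpotent matrix carrying the single $(3,1)$ entry $\tau_j\e^{\theta_{31}(\zeta_j)}$ and $\mathbf{C}_j$ is analytic and invertible at $\zeta_j$; since $\det\bigl(\bI+\tfrac{\mathcal{N}_j}{z-\zeta_j}\bigr)=1$, $d=\det\mathbf{C}_j$ is analytic there (and likewise at $\zeta_j^*$). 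The growth conditions~\eqref{E:gcc} force one column of $\M$ to vanish like $z\mp q_0$ from each side, so $d(\pm q_0)=0$, while near the origin $\M=\tfrac{\ii}{z}\M_0+\mathcal{O}(1)$ together with $\det\M_0=0$ (the middle column of $\M_0$ in~\eqref{E:M0infty} is zero) shows $d$ has a pole of order at most two at $0$. Finally $d(\infty)=\det\M_\infty=\tfrac{1}{q_0^2}\det(\q_+^{\perp},\q_+)=\tfrac{\|\q_+\|^2}{q_0^2}=1$. Thus $d$ is rational on $\widehat{\C}$ with a single (at most double) pole at $0$ and value $1$ at infinity, so $d(z)=P(z)/z^2$ with $P$ monic of degree two; the conditions $P(\pm q_0)=0$ pin down $P(z)=z^2-q_0^2$, i.e. $d=\gamma$.

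For uniqueness, let $\M$ and $\tilde\M$ both solve RH problem~\ref{RHP:zc} and set $\bX=\M\tilde\M^{-1}$, which is well defined because $\det\tilde\M=\gamma\not\equiv0$. Since both share the jump~\eqref{E:Jump}, $\bX_+=\M_-\V\V^{-1}\tilde\M_-^{-1}=\bX_-$, so $\bX$ has no jump across $\R$. At each $\zeta_j$ the two solutions carry the same factor $\bI+\tfrac{\mathcal{N}_j}{z-\zeta_j}$, hence $\bX=\mathbf{C}_j\tilde{\mathbf{C}}_j^{-1}$ is analytic there (invertibility of $\tilde{\mathbf{C}}_j$ follows from $\det\tilde{\mathbf{C}}_j=\gamma(\zeta_j)\neq0$). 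Therefore $\bX$ is meromorphic on $\widehat{\C}$ with possible singularities only at $0$ and $\pm q_0$ and $\bX\to\bI$ at infinity, and it remains to show these singularities are removable.

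The \emph{main obstacle}, and the only genuinely new point relative to the unimodular case, is precisely the behavior at $0$ and at the branch points, where $\det\M=\gamma\neq1$, so the usual ``$\det=1$ plus Liouville'' shortcut does not apply directly. At the origin I would use $\M=\tfrac{\ii}{z}\M_0+\mathcal{O}(1)$ and the analogous expansion for $\tilde\M$ to get $\tilde\M^{-1}=\mathrm{adj}\,\tilde\M/\gamma\to\mathrm{adj}(\M_0)/q_0^2$, which is bounded, so the leading part of $\bX$ is $\tfrac{\ii}{z\,q_0^2}\M_0\,\mathrm{adj}(\M_0)=\tfrac{\ii}{z\,q_0^2}\det(\M_0)\,\bI=0$; hence $\bX$ is bounded at $0$. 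At $q_0$ approached from $\C_+$, the growth condition makes the first column of both $\M$ and $\tilde\M$ vanish like $z-q_0$, which forces the first row of $\tilde\M^{-1}=\mathrm{adj}\,\tilde\M/\gamma$ to carry the only simple pole (as $\det\tilde\M=\gamma$ has a simple zero there); multiplying by $\M$, whose first column vanishes to the same order, cancels that pole, so $\bX$ stays bounded, and the symmetric argument with the third column handles approach from $\C_-$ and the point $-q_0$. Boundedness from both sides, together with the absence of a jump, yields removable singularities. Consequently $\bX$ extends to a bounded entire matrix function tending to $\bI$ at infinity, so $\bX\equiv\bI$ by Liouville and $\M=\tilde\M$.
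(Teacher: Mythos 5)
Your proof is correct, and it follows the same two-stage plan as the paper (first establish $\det\M=\gamma$ for any solution, then run a Liouville argument on the ratio of two solutions), but three of the key sub-steps are genuinely different. For the determinant, the paper fixes the coefficients in $\det\M=1+f_1/z+f_2/z^2$ by invoking the $z\mapsto\hat z=q_0^2/z$ symmetry in~\eqref{E:RHP11}, which forces $f_1=0$, $f_2=-q_0^2$; you instead use the growth conditions~\eqref{E:gcc} to force $\det\M(\pm q_0)=0$ and thereby pin down the monic quadratic numerator. For uniqueness at the discrete spectrum, the paper writes out the Laurent expansions~\eqref{E:asybM1z}--\eqref{E:asybM1zgx} and computes $\M^{-1}$ explicitly via cross products, whereas your nilpotent factorization $\M=\mathbf{C}_j\bigl(\bI+\mathcal{N}_j/(z-\zeta_j)\bigr)$ with $\mathcal{N}_j^2=0$ yields the same cancellation more cleanly, since both solutions share the identical unipotent factor. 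At the branch points, the paper uses the conjugation symmetry to rewrite the ratio as $-\gamma^{-1}\boldsymbol{\mathcal{M}}\,\mathbf{\Gamma}\,\M^{\dagger}(z^*)\bJ$ and then invokes~\eqref{E:gcc}, while you work directly with $\tilde\M^{-1}=\mathrm{adj}\,\tilde\M/\gamma$ and let the vanishing column of $\M$ absorb the simple pole created by the simple zero of $\gamma$. The net effect is that your argument never uses either symmetry in~\eqref{E:RHP11}, so it in fact proves uniqueness in the a priori larger class of solutions that are not assumed symmetric (a mild strengthening), while conversely the paper's determinant step is independent of the growth conditions at $\pm q_0$. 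One caveat applies equally to both proofs: concluding that $\det\M$ is analytic at $\pm q_0$, that $\det\M(\pm q_0)=0$, and that the branch-point pole cancellation goes through all requires the columns of a solution not controlled by~\eqref{E:gcc} to remain bounded as $z\to\pm q_0$; this regularity is implicit in the formulation of RH problem~\ref{RHP:zc}, and your proof relies on it in exactly the same way the paper's does.
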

\begin{proof}
See Appendix~\ref{AppBproof}.
\end{proof}
Therefore, $\M(x,t,z)$ defined by~\eqref{E:exM} is the unique solution to RH problem~\ref{RHP:zc}. Combined with the asymptotic behavior of $\M(x,t,z)$ as $z \to \infty$ (see~\cite[Corollary 2.29]{BD2015-1}), we summarize the following reconstruction theorem:
\begin{theorem}\label{Th:cggs}
Suppose that there exists a sufficiently smooth  solution $\q(x,t)$ to the defocusing Manakov system~\eqref{E:demanakovS} with  NZBCs \eqref{E:bjtj}, and its initial data satisfies  Assumptions~\ref{As:1}. Then RH problem~\ref{RHP:zc} has a unique solution $\M(x,t,z)$. Moreover, $\q(x,t)$ can be reconstructed from $\M$ as follows:
\be \label{E:cggs}
\q(x,t)=-\ii \lim_{z \to \infty}z \left(\M_{21}(x,t,z), \M_{31}(x,t,z) \right)^{\top},
\ee
 where  $\M_{ij}$ denotes the $(i,j)$-entry of the matrix-valued function $\M$.
\end{theorem}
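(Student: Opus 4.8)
The plan is to assemble the three assertions—existence, uniqueness, and the reconstruction formula—from material already in place, so that the only genuinely new computation is the recovery identity~\eqref{E:cggs}. Existence is obtained constructively: the piecewise-meromorphic function $\M(x,t,z)$ defined in~\eqref{E:exM} out of the Jost solutions $\boldsymbol{\mu}_\pm$ and the scattering coefficients is a bona fide solution of RH problem~\ref{RHP:zc}. Uniqueness and the determinant normalization $\det\M=\gamma(z)$ are exactly Lemma~\ref{L:wyxRH1}. Hence I would spend the effort on extracting~\eqref{E:cggs} from the large-$z$ expansion of $\M$.

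For existence I would verify the defining properties of RH problem~\ref{RHP:zc} one by one for the $\M$ of~\eqref{E:exM}, all inherited from~\cite{BD2015-1} under the hypotheses of the theorem (smoothness, rapid decay to the NZBCs, and Assumptions~\ref{As:1}). Concretely: sectional analyticity on $\C\setminus\{\rZ\cup\R\}$ follows from analyticity of the columns $\boldsymbol{\mu}_{-1}/a_{11}$, $\m/b_{33}$, $\boldsymbol{\mu}_{+3}$ in $\C_+$ and their counterparts in $\C_-$, with $\rZ$ arising precisely from the simple zeros of $a_{11}$ and their conjugates; the jump relation~\eqref{E:Jump} follows from the scattering relation $\bPhi_-=\bPhi_+\A$ together with the factored jump matrix~\eqref{E:Vex} derived in Appendix~\ref{Apppof258}; the asymptotics at $z\to\infty$ and $z\to0$ with normalizations $\M_\infty,\M_0$ in~\eqref{E:M0infty} come from $\boldsymbol{\mu}_\pm\to\E_\pm$ and the limits in Proposition~\ref{P:Aas} and Lemma~\ref{L:fsxsxz}; the growth bounds~\eqref{E:gcc} at $\pm q_0$ reflect the branch-point behavior of the Jost eigenfunctions; the symmetries~\eqref{E:RHP11} are transcriptions of~\eqref{E:ABdc}--\eqref{E:ABrel} and of the Lax-pair symmetry; and the residue conditions~\eqref{E:mlstjzc} encode the norming constants $\tau_j$ at the simple zeros $\zeta_j$. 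Thus the explicit $\M$ solves the problem, and Lemma~\ref{L:wyxRH1} makes it the only solution.

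For the reconstruction formula I would exploit that $\M$ inherits the $x$-dependence of the Lax pair. Setting $\bPsi(x,t,z)=\M(x,t,z)\,\e^{\ii\blambda(z)x-\ii\mathbf{\Omega}(z)t}$, each column of $\bPsi$ equals a Jost eigenfunction $\bPhi_{\pm j}$ divided by an $x,t$-independent scattering coefficient, so $\bPsi$ is a fundamental solution of~\eqref{E:laxp1}. Since $\blambda$ is diagonal this gives $\M_x=\tilde{\bX}\M-\ii\M\blambda=(-\ii k\bJ+\bQ)\M-\ii\M\blambda$. I then insert $\M=\M_\infty+z^{-1}\M^{(1)}+\mathcal{O}(z^{-2})$ together with $k=\tfrac{z}{2}+\mathcal{O}(z^{-1})$ and $\blambda=-\tfrac{z}{2}\bJ+\mathcal{O}(z^{-1})$ read off from~\eqref{E:intr} and~\eqref{E:trmatrix}. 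The $\mathcal{O}(z)$ balance is automatic because $\M_\infty$ commutes with $\bJ$ (both are block-diagonal with a $1\times1$ and a $2\times2$ block), and the $\mathcal{O}(1)$ balance yields $\bQ\M_\infty=\tfrac{\ii}{2}[\bJ,\M^{(1)}]$. Reading off the first column and using $(\M_\infty)_{21}=(\M_\infty)_{31}=0$ produces $q_1=-\ii\M^{(1)}_{21}$ and $q_2=-\ii\M^{(1)}_{31}$, that is $\q=-\ii\lim_{z\to\infty}z(\M_{21},\M_{31})^\top$, which is~\eqref{E:cggs} and matches~\cite[Corollary 2.29]{BD2015-1}.

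The bulk of the work is bookkeeping rather than invention: once the properties of the Jost solutions and scattering data from~\cite{BD2015-1} are granted, existence is a verification and reconstruction is an order-matching computation. The point demanding the most care is the branch-point behavior~\eqref{E:gcc} and the residue structure~\eqref{E:mlstjzc}, since these rest on the simple-zero and non-degeneracy hypotheses in Assumptions~\ref{As:1}; keeping the normalization $\det\M=\gamma(z)$ consistent with these singular conditions is the one genuinely constrained place in the argument, and it is exactly what Lemma~\ref{L:wyxRH1} secures.
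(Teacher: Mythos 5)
Your proposal is correct and follows essentially the same route as the paper: existence comes from verifying that the explicit function~\eqref{E:exM} built from the Jost and auxiliary eigenfunctions of~\cite{BD2015-1} solves RH problem~\ref{RHP:zc}, uniqueness is exactly Lemma~\ref{L:wyxRH1}, and the formula~\eqref{E:cggs} is read off from the $1/z$ term of the large-$z$ expansion. The only difference is that where the paper simply cites~\cite[Corollary 2.29]{BD2015-1} for this expansion, you rederive it by the standard dressing argument (matching orders in $\M_x=\tilde{\bX}\M-\ii\M\blambda$, using $[\bJ,\M_\infty]=0$ to get $\bQ\M_\infty=\tfrac{\ii}{2}[\bJ,\M^{(1)}]$), which is a correct and self-contained substitute for that citation.
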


\begin{figure}[h]
\centering

\begin{subfigure}{0.31\textwidth}
\centering
\begin{tikzpicture}
\node at (0,0) {\includegraphics[width=4.2cm]{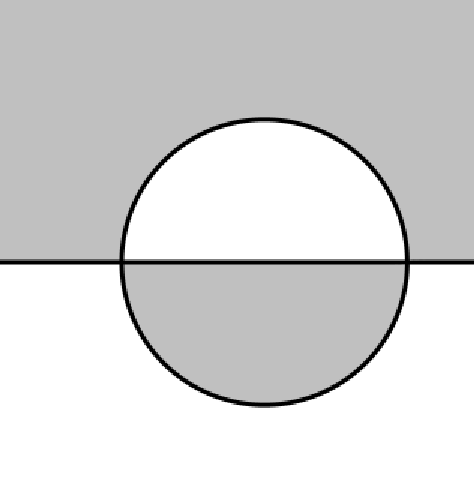}};
\draw[dashed,very thick](-1,2.2)--(-1,-2.2);
\node at (-1.2,-0.3) {\small $0$};
\node at (1.7,-0.3) {\small $z_0$};
\node at (-0.15,-0.35) {\small $z_1$};
\filldraw[black](-0.15,-0.125) circle(1pt);
\end{tikzpicture}
\caption{}
\label{fig:signature-a}
\end{subfigure}
\hfill
\begin{subfigure}{0.31\textwidth}
\centering
\begin{tikzpicture}
\node at (0,0) {\includegraphics[width=4.5cm]{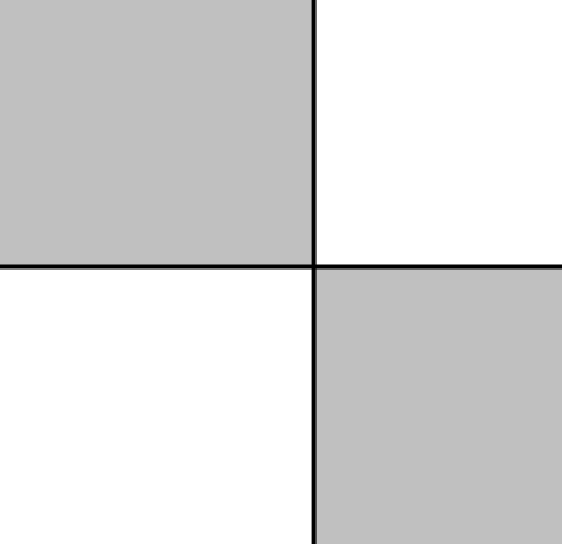}};
\draw[dashed,very thick](-0.3,2.2)--(-0.3,-2.2);
\node at (-0.45,-0.2) {\small $0$};
\node at (0.5,-0.2) {\small $z_1$};
\node at (1.9,-0.2) {\small $z_0$};
\filldraw[black](1.9,0.04) circle(1pt);
\end{tikzpicture}
\caption{}
\label{fig:signature-b}
\end{subfigure}
\hfill
\begin{subfigure}{0.31\textwidth}
\centering
\begin{tikzpicture}
\node at (0,0) {\includegraphics[width=4cm]{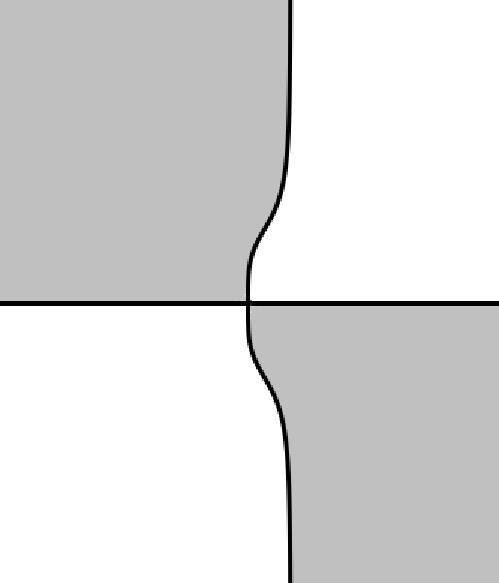}};
\node at (-0.2,-0.4) {\small $0$};
\draw[dashed,very thick](-0.05,2.3)--(-0.05,-2.3);
\end{tikzpicture}
\caption{}
\label{fig:signature-c}
\end{subfigure}

\vspace{0.3cm}

\begin{subfigure}{0.31\textwidth}
\centering
\begin{tikzpicture}
\node at (0,0) {\includegraphics[width=4.8cm]{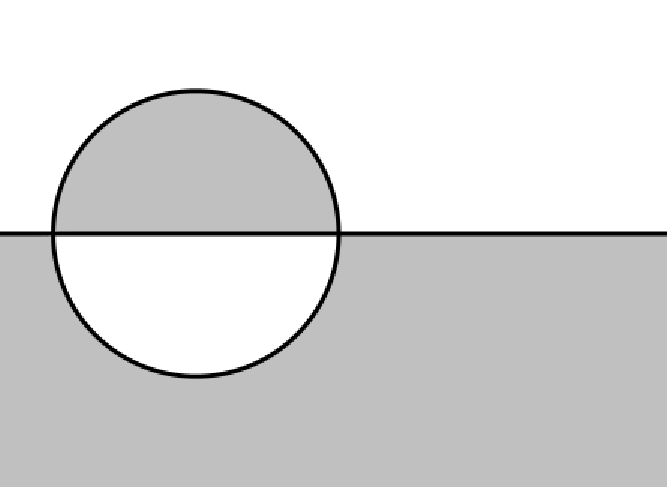}};
\draw[dashed,very thick](0.08,2)--(0.08,-1.8);
\node at (0.25,-0.2) {\small $0$};
\node at (-2.2,-0.2) {\small $z_0$};
\filldraw[black](-0.5,0.06) circle(1pt);
\node at (-0.5,-0.2) {\small $z_1$};
\end{tikzpicture}
\caption{}
\label{fig:signature-d}
\end{subfigure}
\hfill
\begin{subfigure}{0.31\textwidth}
\centering
\begin{tikzpicture}
\node at (0,0) {\includegraphics[width=4.2cm]{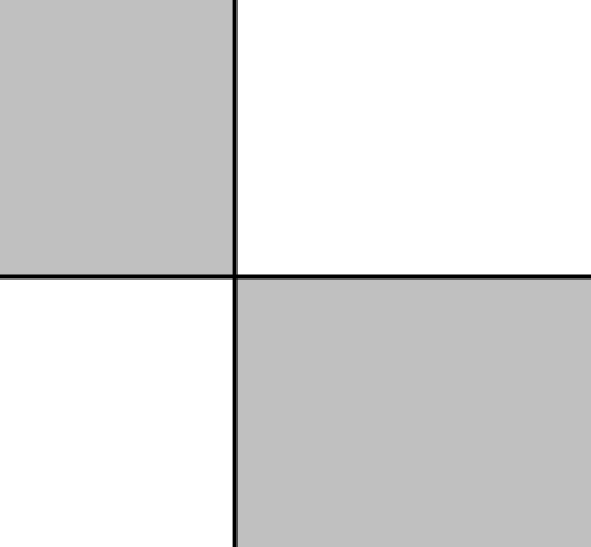}};
\draw[dashed,very thick](0.2,2)--(0.2,-2);
\node at (-0.6,-0.2) {\small $z_1$};
\node at (-1.8,-0.25) {\small $z_0$};
\node at (0.4,-0.2) {\small $0$};
\filldraw[black](-1.8,-0.01) circle(1pt);
\end{tikzpicture}
\caption{}
\label{fig:signature-e}
\end{subfigure}
\hfill
\begin{subfigure}{0.31\textwidth}
\centering
\begin{tikzpicture}
\node at (0,0) {\includegraphics[width=4cm]{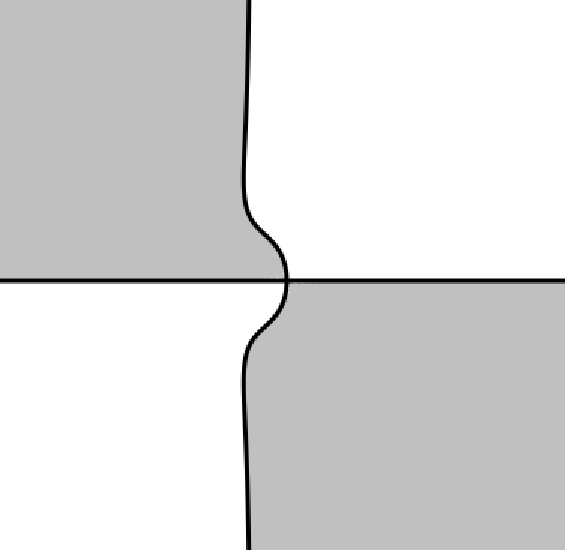}};
\node at (0.2,-0.2) {\small $0$};
\draw[dashed,very thick](0.05,2)--(0.05,-2);
\end{tikzpicture}
\caption{}
\label{fig:signature-f}
\end{subfigure}

\caption{The signature tables for $\phi_{32}$, $\phi_{21}$ and $\phi_{31}$. 
Panels (a)--(c) correspond to $\xi=0.5$ and $q_0=1$, while panels (d)--(f) correspond to $\xi=-0.5$ and $q_0=1$. 
The grey regions correspond to $\{z:\mathrm{Re}\,\phi_{ij}<0\}$ and the white regions correspond to $\{z:\mathrm{Re}\,\phi_{ij}>0\}$.}
\label{fig:signature table}
\end{figure}

\section{Long-time asymptotics}\label{se:3}
In this section, we will investigate asymptotics of the RH problem \ref{RHP:zc} as $t \to \infty$ by using Deift–Zhou nonlinear steepest
 descent method~\cite{pz93}. The main idea is to convert the original RH
 problem into a small-norm one via a series of explicit and invertible transformations. 

We define the phase functions $\phi_{i j}(\xi,z)$ by $\theta_{i j}(x,t,z)=t\phi_{i j}(\xi,z)$, $1\leq i,j \leq 3$, which originate from the oscillatory exponential terms in the jump matrix $\V$. Let's  consider the following three phase functions:
\be
\phi_{32}(\xi,z)=-2 \ii \xi \frac{q_0^2}{z}+\ii \frac{q_0^4}{z^2}, \quad
\phi_{21}(\xi,z)=2 \ii \xi z-\ii z^2, \quad
\phi_{31}(\xi,z)=4 \ii \xi \lambda(z)-4 \ii k(z)\lambda(z).
\ee
Set $z_1=\xi$ and $z_0=\frac{q_0^2}{\xi}$, then $z_0$ is the unique saddle point of $\phi_{32}(\xi,k)$, $z_1$ is the unique saddle point of $\phi_{21}(\xi,k)$, and the function $\phi_{31}(\xi,k)$ has no saddle point on $\R$.  The signature tables for $\Re \phi_{21}(\xi,k)$, $\Re \phi_{32}(\xi,k)$ and  $\Re \phi_{31}(\xi,k)$ are shown in Fig.~\ref{fig:signature table}.

By observing the signature tables of these phase functions, we note that the Deift--Zhou steepest descent analysis for $\xi>0$ and $\xi<0$ is quite similar. Indeed, when $\xi\in \mathcal I_+$, we need to find suitable factorizations, guided by the signature tables, on the intervals $(-\infty,0)$, $(0,z_1)$, $(z_1,z_0)$, and $(z_0,+\infty)$, and then extend the corresponding factors to the lens regions. When $\xi\in \mathcal I_-$, it suffices to carry out the same procedure on the intervals $(-\infty,z_0)$, $(z_0,z_1)$, $(z_1,0)$, and $(0,+\infty)$. The ideas used to find these factorizations are essentially the same, and the outer and local parametrices required in the final analysis also have analogous forms. Therefore, for simplicity, we only present the proofs of Theorems~\ref{asy-th-1} and~\ref{Th:2m} in the case $\xi\in \mathcal I_+$; the computations for $\xi\in \mathcal I_-$ are completely analogous. From now on, we assume that $\xi\in \mathcal I_+$.

\subsection{First transformation: conjugation and factorization}
The first transformation is introduced in preparation for the lens-opening
procedure in the next subsection. Its purpose is to put the jump matrix into
forms that admit suitable factorizations along the entire real axis. Here
``suitable'' means that, after the lenses are opened, the resulting jump
matrices are exponentially decaying as $t\to\infty$, except in neighborhoods of
certain critical points.
We now show how to factorize the jump matrix on the four intervals
$(-\infty,z_0)$, $(z_0,z_1)$, $(z_1,0)$, and $(0,+\infty)$. 
Throughout the paper, we sometimes omit the dependence on $x$, $t$, $z$  for convenience.

First,
a direct computation gives $\V= \V^U \V^L$, where
\begin{equation}\label{E:sjfjV}
\V^U=
\begin{pmatrix}
1  & -\frac{1}{\gamma} r_1^*(z^*) \e^{\theta_{12}} & -r_2^* (z^*) \e^{\theta}_{13}\\
0 & 1 & -r_3(z) \e^{\theta_{23}}\\
0 & 0 &1
\end{pmatrix}, \quad
\V^L=
\begin{pmatrix}
1 & 0 & 0\\
r_1(z) \e^{\theta_{21}}&1  &0 \\
r_2(z) \e^{\theta_{31} }& -\frac{1}{\gamma}r_3^*(z^*) \e^{\theta_{32}} &1
\end{pmatrix}.
\end{equation}
In view of the signature table of the phase functions
$\phi_{ij}(\xi,z)$; see Fig.~\ref{fig:signature table}, this factorization is
well adapted to the interval $(-\infty,0)$. However, on the positive real axis, a new factorization has to be found.
The first step is to derive an $\mathbf L\mathbf D\mathbf U$ factorization
of $\V$, namely a decomposition of $\V$ into lower-triangular, diagonal,
and upper-triangular factors.
Through the application of  the
standard Gaussian-elimination procedure, together with the symmetry properties
of the scattering matrices $\A(z)$ and $\B(z)$, a direct computation shows that 
\be \label{E:LDUf}
\begin{aligned}
\V=&
{\scriptsize
\begin{pmatrix}
1&0&0\\
-\hat{r}_1(z)\dfrac{s_{2-}(z)}{s_{1-}(z)} \e^{\theta_{21}}&1&0\\
\hat{ r}_2(z)\dfrac{s_{3-}(z)}{s_{1-}(z)} \e^{\theta_{31}}
&
\frac{1}{\gamma(z)}\hat{r}_3^*(z)\dfrac{s_{3-}(z)}{s_{2-}(z)} \e^{\theta_{32}}
&1
\end{pmatrix}
}
{\scriptsize
\begin{pmatrix}
\dfrac{1}{a_{11}(z)b_{11}(z)} & & \\
 &\dfrac{a_{11}(z)b_{11}(z)}{a_{33}(z)b_{33}(z)}& \\
 & &a_{33}(z)b_{33}(z)
\end{pmatrix}
}\\
&\times
{\scriptsize
\begin{pmatrix}
1&
\dfrac{1}{\gamma(z)}\hat {r}_1^*(z)\dfrac{s_{1+}(z)}{s_{2+}(z)} \e^{\theta_{12}}
&
-\hat {r}_2^*(z)\dfrac{s_{1+}(z)}{s_{3+}(z)} \e^{\theta_{13}}
\\
0&1&
\hat {r}_3(z)\dfrac{s_{2+}(z)}{s_{3+}(z)} \e^{\theta_{23}}
\\
0&0&1
\end{pmatrix}
}, \qquad z \in \R,
\end{aligned}
\ee
where
\begin{subequations}\label{E:hatr123}
\begin{align}
\hat{r}_1(z)&=\frac{b_{21}(z)}{b_{11}(z)}, \qquad
\hat{r}_2(z)=\frac{a_{31}(z)}{a_{33}(z)}, \qquad
\hat{r}_3(z)=\frac{b_{23}(z)}{b_{33}(z)},
\label{E:hatr123a}\\
s_{1}(z)&=
\begin{cases}
a_{11}(z),& z\in\mathbb C_+,\\[1mm]
\dfrac{1}{b_{11}(z)},& z\in\mathbb C_-,
\end{cases}
\qquad
s_3(z)=s_1(\hat z),\qquad
s_2(z)=\frac{1}{s_1(z)s_3(z)}.
\label{E:hatr123b}
\end{align}
\end{subequations}
Here the notation $s_{j\pm}$, $j=1,2,3$, refers to the boundary values of $s_j$
taken from the upper and lower half-planes, respectively.
Then we remove the diagonal factor on the positive real axis.  To this end, we introduce
a diagonal matrix $\Del(z)$ satisfying
\be \label{E:deltaju}
\begin{aligned}
\Del_+(z)=\Del_-(z)
\bpm
a_{11}(z) b_{11}(z)& & \\
 & \frac{a_{33}(z) b_{33}(z)}{a_{11}(z) b_{11}(z)} & \\
 & & \frac{1}{a_{33}(z) b_{33}(z)}
\epm, \qquad \  z \in \R_+  \setminus \{q_0\}.
\end{aligned}
\ee
A natural choice satisfying the above jump condition is
\be \label{E:Deldef}
\Del(z)= \bpm
\delta_1(z) & & \\
 &\frac{1}{\delta_1(z) \delta_1(\hat{z})} & \\
 & & \delta_1(\hat{z})
\epm,
\ee
where
\be \label{E:soldel1}
\delta_1(z)=\mathrm{exp}\bigg \{-\frac{1}{2 \pi \ii} \int_{\R_+} \frac{\log \left(1-\frac{1}{\gamma(s)}|r_1(s)|^2-|r_2(s)|^2\right)}{s-z} \mathrm  d s \bigg\}, \qquad z\in \C \setminus \R_+.
\ee
The jump relation~\eqref{E:hgdsb} for $\delta_1(z)$, together with the  symmetry relations
$$
a_{33}(z)=a_{11}(\hat z),\qquad b_{33}(z)=b_{11}(\hat z),
$$
implies that $\Del(z)$ defined by~\eqref{E:Deldef} satisfies the jump condition \eqref{E:deltaju}.

The following lemma collects the basic properties of the function $\delta_1(z)$.
\begin{lemma} \label{L:d1xz}
The function $\delta_1(z)$ has the following properties:
\bi
\item[\rm(i)]
The boundary values \(\delta_{1\pm}(z)\) exist and are continuous for
\(z\in\mathbb R_+\setminus\{q_0\}\). Moreover,
\begin{equation} \label{E:hgdsb}
\delta_{1+}(z)=\delta_{1-}(z)a_{11}(z)b_{11}(z),
\qquad z\in\mathbb R_+\setminus\{q_0\}.
\end{equation}

 \item[$\mathrm{(ii)}$]   As $z \to 0$ with $z \in \C \setminus \R_+$, one has
\be \label{E:delta0jx}
\delta_1(z) \to \delta_1(0)=  \mathrm{exp}\bigg \{-\frac{1}{2 \pi \ii} \int_{\R_+} \frac{\log \left(1-\frac{1}{\gamma(s)}|r_1(s)|^2-|r_2(s)|^2\right) }{s} \mathrm  d s \bigg\}.
\ee

\item[$\mathrm{(iii)}$]  As $z \to \infty$ with $z \in \C \setminus \R_+$, one has  $\delta_1(z) \to 1$.

\item[\rm(iv)]
The quotient $\delta_1(z)/s_1(z)$
admits a continuous extension to $\mathbb R_+$. 
Indeed, we have the relation
$
\delta_1(z)=s_1(z)\rho(z),
$
where
\be \label{E:deofrho}
\rho(z)=\prod_{j=0}^{N-1}\frac{z-\zeta_j^*}{z-\zeta_j}\mathrm{exp}\bigg\{\frac{1}{2\pi \ii}  \int_{\R_-} \frac{\log \left(1-\frac{1}{\gamma(s)}|r_1(s)|^2-|r_2(s)|^2\right)}{s-z}\mathrm  d s \bigg\}.
\ee

\ei
\end{lemma}
\begin{proof}
Using the symmetry satisfied by $\A(z)$ and $\B(z)$~(see~\eqref{E:ABdc} and~\eqref{E:ABrel}), one can easily verify that
$$
1-\frac{1}{\gamma(z)}|r_1(z)|^2-|r_2(z)|^2 = \frac{1}{a_{11}(z) b_{11}(z)}, \quad z \in \R_+.
$$
Then assertion~$\mathrm{(i)}$ follows from the Plemelj formula. 

Let $f(s):=\log \left(1-\frac{1}{\gamma(s)}|r_1(s)|^2-|r_2(s)|^2\right)$. Then
$$
f(s)=\mathcal O(s^{-2}),\qquad s\to+\infty,
\qquad
f(s)=\mathcal O(s),\qquad s\to0.
$$
Moreover, $f(s)$  is real-valued and smooth on the positive real axis. Hence
assertion $\mathrm{(ii)}$ follows from the standard analysis of the endpoint behavior of Cauchy-type integrals; see, for example~\cite[Lemma 2.11]{TS2016}.  Assertion~$\mathrm{(iii)}$ can be proved analogously.

It remains to prove \({\rm(iv)}\). We use the trace formula
\cite[proof of Lemma~3.9]{BD2015-1}:
\be\label{E:tafo}
s_{1}(z)=\prod_{j=0}^{N-1}  \frac{z-\zeta_j}{z-\zeta_j^*} \mathrm{exp}\bigg \{-\frac{1}{2 \pi \ii} \int_{\R} \frac{\log \left(1-\frac{1}{\gamma(s)}|r_1(s)|^2-|r_2(s)|^2\right)}{s-z} \mathrm{d}s \bigg\}, \qquad z \in \C \setminus \R.
\ee
Then assertion~$\mathrm{(iv)}$ follows directly from this representation.
\end{proof}

For convenience, we denote 
$$
\widehat \V(x,t,z)=\Del_-^{-1}(z) \V (x,t,z)\Del_+(z).
$$
In view of the
factorization \eqref{E:LDUf} and the relation
\(\delta_1(z)=s_1(z)\rho(z)\),  we obtain $\widehat \V=\V^L_{(1)} \V^U_{(1)} $ for $z \in \R_+$, where
\be \label{E:jumpM1ex}
\V^L_{(1)} =
\begin{pmatrix}
 1 & 0 &0 \\
 -\tilde{r}_1(z) \e^{\theta_{21}} &1  &0 \\
 \tilde{r}_2(z) \e^{\theta_{31}} & \frac{1}{\gamma}\tilde{r}_3^*(z^*) \e^{\theta_{32}} &1
\end{pmatrix}, \quad
\V^U_{(1)}=
\begin{pmatrix}
1  &\frac{1}{\gamma}\tilde{r}_1^*(z^*) \e^{\theta_{12}} & -\tilde{r}_2^*(z^*) \e^{\theta_{13}}\\
0  & 1 & \tilde{r}_3(z) \e^{\theta_{23}}\\
  0&0  &1
\end{pmatrix},
\ee
with
\be \label{E:tilder123}
\tilde{r}_1(z)=\rho^2(z)\rho(\hat{z})\hat{r}_1(z), \qquad
\tilde{r}_2(z)=\frac{\rho(z)}{\rho(\hat{z})}\hat{r}_2(z), \qquad
\tilde{r}_3(z)=\rho(z)\rho^2(\hat{z})\hat{r}_3(z).
\ee
In the above derivation, we have used the symmetry relations
$\rho^*(z^*)=\frac{1}{\rho(z)}$ and $\rho^*(\hat{z}^*)=\frac{1}{\rho(\hat{z})}$.
Since $|\rho(z)|=|\rho(\hat z)|=1$ for $z\in(0,+\infty)$, we have
$$
|\tilde r_1(z)|
=
|\hat r_1(z)|,\qquad 
|\tilde r_3(z)|
=
|\hat r_3(z)|,
\qquad z\in (0,+\infty).
$$
After examining the sign structure of the phase functions $\phi_{ij}(\xi,z)$, we observe that the  factorization for $\widehat \V$ is suitable on the interval $(z_1,z_0)$. On the intervals $(0,z_1)$ and $(z_0,+\infty)$, however, further modifications are required. To this end, we introduce two permutation matrices $\mathcal A$ and $\mathcal B$ to rearrange the entries of the jump matrix and thereby obtain factorizations compatible with the desired exponential decay.  More precisely, we set
\begin{equation}\label{E:permutationAB}
\mathcal A=
\begin{pmatrix}
1&0&0\\
0&0&1\\
0&1&0
\end{pmatrix},
\qquad
\mathcal B=
\begin{pmatrix}
0&1&0\\
1&0&0\\
0&0&1
\end{pmatrix}
.
\end{equation}
Then, by applying the standard Gaussian-elimination procedure, we obtain
\begin{align}
\mathcal A\widehat \V(z)\mathcal A
&=\V_{(2)}^L(z)\V_{(2)}^D(z) \V_{(2)}^U(z),
\qquad z\in (z_0,+\infty),\label{E:permuted-factorization-2}\\
\mathcal B\widehat \V(z)\mathcal B
&=\V_{(3)}^L(z) \V_{(3)}^D(z) \V_{(3)}^U(z),
\qquad z\in (0,z_1),\label{E:permuted-factorization-3}
\end{align}
where
\begin{align}
\label{E:V2-factors}
\V_{(2)}^L(z)
&=
{\scriptsize\begin{pmatrix}
1&0&0\\
\tilde r_2(z)  \e^{\theta_{31}}&1&0\\
-\tilde r_1(z) \e^{\theta_{21}}&
\frac{\tilde r_3(z)}
{1+\frac{1}{\gamma} \tilde r_3(z) \tilde r_3^*(z^*) }
\e^{\theta_{23}}&1
\end{pmatrix}}, \quad
\V_{(2)}^D(z)
=
{\scriptsize\begin{pmatrix}
1&0&0\\
0&1+\frac{1}{\gamma}|\tilde r_3(z)|^2&0\\
0&0&\frac{1}{1+\frac{1}{\gamma}|\tilde r_3(z)|^2}
\end{pmatrix}},
\notag\\[1mm]
\V_{(2)}^U(z)
&=
{\scriptsize\begin{pmatrix}
1&- \tilde r_2^*(z^*) \e^{\theta_{13}}
&\dfrac{1}{\gamma} \tilde r_1^* (z^*) \e^{\theta_{12}}\\
0&1&
\frac{\frac{1}{\gamma}\tilde  r_3^*(z^*)}
{1+\frac{1}{\gamma} \tilde r_3(z) \tilde r_3^*(z^*)}
\e^{\theta_{32}}\\
0&0&1
\end{pmatrix}}, \quad
\V_{(3)}^D(z)
=
{\scriptsize \begin{pmatrix}
1-\frac{1}{\gamma}|\tilde r_1(z)|^2&0&0\\
0&\frac{1}{1-\frac{1}{\gamma}|\tilde r_1(z)|^2}&0\\
0&0&1
\end{pmatrix}},
\notag\\[1mm]
\V_{(3)}^L(z)
&=
\begin{pmatrix}
1&0&0\\
\frac{\frac{1}{\gamma} \tilde  r_1^*(z^*)}
{1-\frac{1}{\gamma} \tilde r_1(z) \tilde r_1^*(z^*)}
\e^{\theta_{12}}&1&0\\
\frac{\frac{1}{\gamma}(\tilde r_3^*(z^*)+ \tilde r_1^*(z^*) \tilde r_2(z))}
{1-\frac{1}{\gamma}\tilde r_1(z) \tilde r_1^*(z^*)}
\e^{\theta_{32}}
&
\bigl(\tilde r_2(z)+\frac{1}{\gamma} \tilde r_1(z) \tilde r_3^*(z^*)\bigr)
\e^{\theta_{31}}&1
\end{pmatrix},
\notag\\[1mm]
\V_{(3)}^U(z)
&=
\begin{pmatrix}
1&
-\frac{\tilde r_1(z)}
{1-\frac{1}{\gamma}\tilde r_1(z) \tilde r_1^*(z^*)}
\e^{\theta_{21}}
&
\frac{\tilde r_3(z)+\tilde r_1(z) \tilde r_2^*(z^*)}
{1-\frac{1}{\gamma}\tilde r_1(z) \tilde r_1^*(z^*)}
\e^{\theta_{23}}\\
0&1&
\bigl(-\tilde r_2^*(z^*)-\frac{1}{\gamma}\tilde r_1^*(z^*) \tilde r_3(z)\bigr)
\e^{\theta_{13}}\\
0&0&1
\end{pmatrix}.
\end{align}
Thus the function $\widehat \V $ admits the following factorizations on $(z_{0},+\infty)$
 and $(0,z_{1})$:
$$
 \widehat \V =\begin{cases}
\left( \mathcal A \V^L_{(2)}\mathcal A \right)  \left( \mathcal A \V^D_{(2)}\mathcal A \right)  \left( \mathcal A \V^U_{(2)}\mathcal A \right), & z \in (z_0,+\infty),\\
\left( \mathcal B \V^L_{(3)}\mathcal B \right)  \left( \mathcal B \V^D_{(3)}\mathcal B \right)  \left( \mathcal B \V^U_{(3)}\mathcal B \right), & z \in (0,z_1).
\end{cases} 
$$
In the next step we introduce a diagonal transformation $\tilde \Del(z)$ to remove the diagonal factors $\left( \mathcal A \V^D_{(2)}\mathcal A \right)$ and $\left( \mathcal B \V^D_{(3)}\mathcal B \right)$.
The matrix-valued function $\tilde \Del(z)$ is defined by
$$
\tilde{\Del}(z)=\begin{pmatrix}
 1/\delta( \hat z)&  & \\
  &  \delta(z) \delta(\hat z)& \\
  &  &1/\delta(z)
\end{pmatrix},
$$
where $\delta(z)$ is determined by the following scalar RH problem:
\be \label{E:delta}
\begin{cases}
\delta_+(z)=\delta_-(z) \left(1+\frac{1}{\gamma(z)}|\tilde{r}_3(z)|^2\right), & z \in (z_0, +\infty),\\
\delta(z) \to 1, & z \to \infty.
\end{cases}
\ee
Since $\gamma(z)>0$ on $(z_0,+\infty)$, there exists a constant
$C>1$ such that
$$
1<
1+\frac{1}{\gamma(z)}|\tilde r_3(z)|^2
<C,
\qquad
z\in(z_0,+\infty),\quad \xi\in \mathcal I_+ .
$$ 
Therefore, the above RH problem is uniquely solvable. By the Plemelj's formula, it can be expressed as
\be \label{E:Pbsdel}
\delta(z)= \mathrm{exp} \bigg\{ \frac{1}{2 \pi \ii}  \int_{z_0}^{+\infty} \frac{\log (1+\frac{1}{\gamma(s)}|\tilde{r}_3(s)|^2)}{s-z} \mathrm{d}s\bigg\}.
\ee
Let $\tilde{\delta}(z)=\delta(\hat{z})$. Using the relation
$$ |\rho(z)|=1, \qquad   -\frac{1}{\gamma(\hat{z})}  |\hat{r}_1(\hat{z})|^2= \frac{1}{\gamma(z)} |\hat{r}_3(z)|^2, \quad \ z \in (0,+\infty),$$
one can find that $\tdel(z)$ satisfies the following RH problem
\be \label{E:tildelta}
\begin{cases}
\tdel_+(z)=\tdel_-(z) \left(1-\frac{1}{\gamma(z)}|\tilde{r}_1(z)|^2\right)^{-1}, & z \in (0,z_1),\\
\tilde{\delta}(z) \to \delta(0), & z \to \infty.
\end{cases}
\ee
Then $\tdel(z)$ can be expressed as
\be \label{E:tildeldel}
\tdel(z)= \delta(0) \mathrm{exp} \bigg\{ \frac{-1}{2 \pi \ii}  \int_{0}^{z_1} \frac{\log (1-\frac{1}{\gamma(s)}|\tilde{r}_1(s)|^2)}{s-z} \mathrm{d}s\bigg\}.
\ee
Therefore, by using the jump relations in \eqref{E:delta} and \eqref{E:tildelta}, it is straightforward to verify that,  on the intervals \((z_0,+\infty)\) and  \((0,z_1)\), respectively, the diagonal factors $\bigl( \mathcal A \V^D_{(2)}\mathcal A \bigr)$ and  $\bigl( \mathcal B \V^D_{(3)}\mathcal B \bigr)$
disappear from $\tilde \Del_-^{-1} \widehat \V  \tilde \Del_+$. Indeed, on the intervals
$(z_{0},+\infty)$   and $(0,z_{1})$, the function  $\tilde \Del_-^{-1} \widehat \V  \tilde \Del_+$ admits
factorizations of the following form:
$$
\tilde \Del_-^{-1} \widehat \V  \tilde \Del_+ =
\begin{cases}
\left[ \tilde \Del_-^{-1}  \bigl(\mathcal A \V^L_{(2)}\mathcal A \bigr)   \tilde \Del_- \right]  \left[ \tilde \Del_+^{-1}  \bigl( \mathcal A \V^U_{(2)}\mathcal A \bigr)   \tilde \Del_+  \right], & z \in (z_0,+\infty),\\
\left[ \tilde \Del_-^{-1}  \bigl(\mathcal B \V^L_{(3)}\mathcal B \bigr)   \tilde \Del_- \right]  \left[ \tilde \Del_+^{-1}  \bigl( \mathcal B \V^U_{(3)}\mathcal B \bigr)   \tilde \Del_+  \right], & z \in (0,z_1).
\end{cases}
$$
where $\tilde \Del_+^{-1}  \bigl( \mathcal A \V^U_{(2)}\mathcal A \bigr)  \tilde \Del_+$ and $\tilde \Del_+^{-1}  \bigl( \mathcal B \V^U_{(3)}\mathcal B \bigr) \tilde \Del_+$ decay, respectively, on the upper sides of these
two intervals, while $\tilde \Del_-^{-1}  \bigl( \mathcal A \V^L_{(2)}\mathcal A \bigr)   \tilde \Del_- $ and $ \tilde \Del_-^{-1}  \bigl( \mathcal B \V^L_{(3)}\mathcal B \bigr)   \tilde \Del_-$ decay, respectively, on the lower
sides. Thus, the introduction
of the transformation matrices $\Del(z)$ and $\tilde \Del(z)$ already
allows us to obtain suitable factorizations on the whole real axis.

The following lemma collect the basic properties of the functions $\delta(z)$ and $\tdel(z)$, which
will be used in the subsequent computations.
\begin{lemma}\label{L:13}
The functions $\delta(z)$ and $\tdel(z)$ have the following properties:
\begin{enumerate}
\item $\delta(z)$ and $\tdel(z)$ can be written as
\begin{align}\label{E:expdelta1sj}
\delta(z) = \e^{-i  \nu \log(z_{0}-z)}\e^{-\chi(z)}, \qquad
\tdel(z)=\delta(0) \e^{-i \tilde{\nu} \log \left(z-z_1\right)} \e^{\tilde{\chi}(z)},
\end{align}
where $\nu$, $\tilde{\nu}$, $\chi(z)$ and $\tilde{\chi}(z)$ are defined by
\begin{align*}
&\nu = - \frac{1}{2\pi}\log(1+\frac{1}{\gamma(z_0)}| \hat{r}_3(z_{0})|^{2}), \qquad
\tilde{\nu}=-\frac{1}{2 \pi} \log \bigl( 1-\frac{1}{\gamma(z_1)}\left| \hat{r}_1 \left(z_1\right)\right|^2 \bigr)=\nu,
\end{align*}
and
\be \label{E:L13-st-2}
\begin{aligned}
& \chi(z) = \frac{1}{2\pi \ii} \int_{z_{0}}^{\infty}  \log(\zeta-z) \mathrm{d} \log(1+\frac{1}{\gamma(\zeta)}|\hat{r}_3(\zeta)|^{2}), \\
 &\tilde{\chi}(z)=\chi(0)-\chi(\hat{z})+\ii \nu \log (z).
\end{aligned}
\ee

\item For each $\xi \in \mathcal{I}_+$, $\delta(z)$ and $\tdel(z)$ obey the symmetries
\begin{align}\label{E:L13-st-4}
\delta(z) = (\delta^*(z^*))^{-1},\quad  z \in \C \setminus (z_0,+\infty);\qquad
\tdel(z)=(\tdel^*(z^*))^{-1},\quad  z \in \C \setminus (0,z_1).
\end{align}
\item As $z \to z_{0}$ and $z \to z_1$ along the paths which are nontangential to $(z_0,+\infty)$ and $(0,z_1)$, we have

\begin{align}
& |\chi(z)-\chi (z_0)| \leq C |z-z_{0}|(1+|\log|z-z_{0}||),\label{E:L13-st-5} \\
& |\tilde{\chi}(z)-\tilde{\chi}(z_1)|  \leq C|z-z_1| (1+|\log | z-z_1| ),
\end{align}
where $C$ is independent of $\xi \in \mathcal{I}_+$.
\end{enumerate}
\end{lemma}
\begin{proof}
The lemma follows from~\eqref{E:Pbsdel},~\eqref{E:tildeldel} and relatively straightforward estimates.
\end{proof}

Next, we  further introduce a diagonal matrix $\bP(z)$, whose role is to regulate the behavior at the discrete spectrum.  The definition of $\bP(z)$ is inspired by the ideas used in Refs.~\cite{CuJe2016,BJM2018,CJ2024}. We partition the set $\{0,1,..,N-1 \}$ into the pair of sets
$$
\nabla^+=\big\{j:\Re \zeta_j > \xi \big\},\qquad  \nabla^-=\big\{j: \Re \zeta_j \leq \xi \big\}.
$$
We define 
\be \label{E:defiofP}
\bP(z)= \begin{pmatrix}
  \cP_1(z)&  & \\
  &  1/\left(\cP_1(z)\cP_1(\hat{z}) \right )& \\
  &  &\cP_1(\hat{z})
\end{pmatrix}, \qquad \cP_1(z)=\prod_{j \in \nabla^+}\frac{z-\zeta_j}{z-\zeta_j^*}.
\ee
Then the residue conditions after the transformation take the forms
\eqref{E:lstjN1+} and~\eqref{E:lstjN1-}, and hence are of the desired form.

We are now ready to define the first transformation. Set
$$
\widetilde \bT(z)=\Del(z) \tilde \Del(z) \bP(z)=:\mathrm{diag}\left( \widetilde T_1(z),\widetilde T_2(z),\widetilde T_3(z) \right).
$$
Then the first transformation is defined as follows:
\be \label{E:firsttrans}
\M^{(1)}(x,t,z)= \widetilde\bT(\infty)^{-1} \M_{\infty}^{-1}\M(x,t,z) \widetilde \bT (z), \qquad z \in \C.
\ee
A direct computation shows that
\be \label{E:Tinftybds}
\widetilde\bT(\infty)=
\bpm
1/\delta(0) & & \\
 &\delta(0)/(\delta_1(0)  \cP_1(0))  & \\
 & & \delta_1(0) \cP_1(0)
\epm.
\ee

Let us examine the properties of $\M^{(1)}(x,t,z)$. It is straightforward to verify that $\M^{(1)}(x,t,z) \to \bI, \quad z \to \infty$, 
and
$$
\M^{(1)}(x,t,z) \to \frac{\ii}{z}   \bpm 0&0&-q_0\\ 0&0&0\\  q_0 &0&0 \epm,  \qquad z \to 0.
$$
Moreover, the jump contour for $\M^{(1)}$ remains the real axis. The corresponding
jump matrix is denoted by $\V^{(1)}$. For brevity, set 
$$\bT(z)=\tilde \Del(z) \bP(z)=:\mathrm{diag}\left( T_1(z),T_2(z),T_3(z) \right).$$
 Based on the preceding analysis,
we summarize the factorizations of $\V^{(1)}$ as follows:
\be \label{E:V1dehsfj}
\V^{(1)}=\begin{cases}
\bigl(\widetilde\bT^{-1} \V^U  \widetilde\bT  \bigr) \bigl(\widetilde\bT^{-1} \V^L \widetilde\bT  \bigr),& z \in (-\infty,0),\\
\bigl(\bT^{-1} \V^L_{(1)} \bT  \bigr) \bigl( \bT^{-1} \V^U_{(1)} \bT  \bigr),& z \in (z_1,z_0),\\
\left[\bT_-^{-1} \bigl( \mathcal A \V^L_{(2)}\mathcal A \bigr) \bT_- \right]  \left[ \bT_+^{-1} \bigl( \mathcal A \V^U_{(2)}\mathcal A \bigr)   \bT_+ \right], & z \in (z_0,+\infty),\\
\left[ \bT_-^{-1}  \bigl( \mathcal B \V^L_{(3)}\mathcal B \bigr) \bT_- \right]  \left[ \bT_+^{-1}  \bigl( \mathcal B \V^U_{(3)}\mathcal B \bigr)  \bT_+ \right], & z \in (0,z_1),
\end{cases}
\ee
where $ \V^U$  and $ \V^L$ are given by \eqref{E:sjfjV}, $\V^L_{(1)}$ and $\V^U_{(1)}$ are given by \eqref{E:jumpM1ex}, and $\V^L_{(2)}$, $\V^U_{(2)}$, $\V^L_{(3)}$ and  $\V^U_{(3)}$ are given by
\eqref{E:V2-factors}.

We now examine the residue conditions satisfied by $\M^{(1)}(x,t,z)$.
\begin{lemma}\label{L:mlstj}
At each point $\zeta_j$, only one column of $\M^{(1)}(x,t,z)$ has a simple pole, while the other two columns are analytic.  Moreover, the following residue conditions hold:
\begin{itemize}
\item For $j \in \nabla^+$, we have
\begin{equation}\label{E:lstjN1+}
\mathrm{Res}_{z=\zeta_j}\M^{(1)}(x,t,z)=\lim_{z \to \zeta_j}\M^{(1)}(x,t,z) \bpm 0&0&\alpha_j\\ 0&0&0 \\ 0&0&0 \epm,
\end{equation}
where
\begin{align}\label{E:alphaj}
\alpha_j=\frac{\tau_j^{-1}\e^{-\theta_{31}(x,t,\zeta_j)}}{\widetilde T'_1(\zeta_j)(\widetilde T^{-1}_3)'(\zeta_j)}.
\end{align}
\item For $j \in \nabla^-$, we have
\begin{equation}\label{E:lstjN1-}
\mathrm{Res}_{z=\zeta_j}\M^{(1)}(x,t,z)=\lim_{z \to \zeta_j}\M^{(1)}(x,t,z) \bpm 0&0&0\\ 0&0&0 \\ \beta_j&0&0 \epm,
\end{equation}
where
\begin{align}\label{E:betaj}
\beta_j=\tau_j\frac{\widetilde T_1(\zeta_j)}{\widetilde T_3(\zeta_j)}\e^{\theta_{31}(x,t,\zeta_j)}.
\end{align}
\end{itemize}
\end{lemma}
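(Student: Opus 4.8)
The plan is to follow the single simple pole carried by each $\zeta_j$ through the sixth transformation $\M^{(6)}=(\tilde{\Del}^{\infty})^{-1}\M^{(5)}\bT$ and to see how the diagonal conjugation by $\bT=\mathrm{diag}(T_1,T_2,T_3)$ moves it between columns. First I would record that the transformations $\M^{(2)}\to\M^{(3)}\to\M^{(4)}\to\M^{(5)}$ are the identity near every $\zeta_j$ (the matrices $\bH,\F,\bD$ are supported away from the discrete spectrum, exactly as asserted when each was introduced), so by Lemma~\ref{L:2blstj} the matrix $\M^{(5)}$ still obeys~\eqref{E:2bzlstj}. Writing $\M^{(6)}_{:,k}$ for the $k$-th column and noting that the constant left factor $(\tilde{\Del}^{\infty})^{-1}$ changes nothing in $z$, this says that near $\zeta_j$ only the first column of $(\tilde{\Del}^{\infty})^{-1}\M^{(5)}$ is singular, with
\[
\big[(\tilde{\Del}^{\infty})^{-1}\M^{(5)}\big]_{:,1}(z)=\frac{\tilde{\tau}_j\,\e^{\theta_{31}(x,t,\zeta_j)}}{z-\zeta_j}\,\big[(\tilde{\Del}^{\infty})^{-1}\M^{(5)}\big]_{:,3}(\zeta_j)+\mathcal{O}(1),
\]
while its second and third columns are analytic at $\zeta_j$.

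The decisive step is the local analysis of $T_1,T_2,T_3$ at $\zeta_j$, and here the symmetry $\hat{\zeta}_j=q_0^2/\zeta_j=\zeta_j^*$ (valid since $|\zeta_j|=q_0$) does all the work. Because $\delta(z)$ and $\delta(\hat{z})$ are analytic and nonvanishing at $\zeta_j$ — their cut lies on $(z_0,+\infty)$, disjoint from $\{\zeta_j,\zeta_j^*\}$ — the zero/pole pattern of $T_1=\delta\,\cP_1$ and $T_3(z)=T_1(\hat{z})=\delta(\hat{z})\,\cP_1(\hat{z})$ is controlled entirely by the Blaschke product $\cP_1(z)=\prod_{k\in\nabla^+}\frac{z-\zeta_k}{z-\zeta_k^*}$. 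For $j\in\nabla^+$, the factor $k=j$ gives $T_1$ a simple zero at $\zeta_j$, while in $\cP_1(\hat{z})$ that same factor becomes $\frac{\hat{z}-\zeta_j}{\hat{z}-\zeta_j^*}$ and blows up as $\hat{z}\to\zeta_j^*$, so $T_3$ has a simple pole at $\zeta_j$; the product $T_1T_3$ stays finite and nonzero, whence $T_2$ is regular. For $j\in\nabla^-$ no factor of $\cP_1(z)$ or $\cP_1(\hat{z})$ degenerates at $\zeta_j$ (again using $\hat{\zeta}_j=\zeta_j^*$ and distinctness of the $\zeta$'s), so $T_1(\zeta_j),T_2(\zeta_j),T_3(\zeta_j)$ are all finite and nonzero.

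With $\bT$ diagonal, the $k$-th column of $\M^{(6)}$ is $T_k$ times the $k$-th column of $(\tilde{\Del}^{\infty})^{-1}\M^{(5)}$, so the two cases follow mechanically. For $j\in\nabla^+$ the simple zero of $T_1$ cancels the first-column pole, the second column stays analytic, and the pole of $T_3$ turns the (analytic) third column singular; expanding $T_1(z)=T_1'(\zeta_j)(z-\zeta_j)+\cdots$ and $T_3(z)=\big[(T_3^{-1})'(\zeta_j)(z-\zeta_j)\big]^{-1}+\cdots$, I would solve for the common vector $\big[(\tilde{\Del}^{\infty})^{-1}\M^{(5)}\big]_{:,3}(\zeta_j)$ in terms of $\M^{(6)}_{:,1}(\zeta_j)$ and substitute to obtain $\mathrm{Res}_{z=\zeta_j}\M^{(6)}_{:,3}=\alpha_j\,\M^{(6)}_{:,1}(\zeta_j)$ with $\alpha_j$ exactly as in~\eqref{E:alphaj}; this is~\eqref{E:lstjN1+}. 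For $j\in\nabla^-$, $T_1(\zeta_j)\neq0$ leaves the pole in the first column while columns two and three remain analytic, and using $\M^{(6)}_{:,3}(\zeta_j)=T_3(\zeta_j)\big[(\tilde{\Del}^{\infty})^{-1}\M^{(5)}\big]_{:,3}(\zeta_j)$ to eliminate the shared vector yields $\mathrm{Res}_{z=\zeta_j}\M^{(6)}_{:,1}=\beta_j\,\M^{(6)}_{:,3}(\zeta_j)$ with $\beta_j$ as in~\eqref{E:betaj}, i.e.~\eqref{E:lstjN1-}. I expect the only genuinely delicate point to be the bookkeeping of the second paragraph: one must see that the very factor producing the zero of $T_1$ at $\zeta_j$ reappears, through $z\mapsto\hat{z}=\zeta_j^*$, as the factor producing the pole of $T_3$, so that for $j\in\nabla^+$ the singularity \emph{migrates} from the first column to the third rather than simply disappearing.
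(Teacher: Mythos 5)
Your proposal is correct and follows essentially the same route as the paper: propagate the residue condition of $\M^{(5)}$ (Lemma~\ref{L:2blstj}, unchanged by the intermediate transformations since $\bH$, $\F$, $\bD$, $\N$ are the identity near the discrete spectrum) through the diagonal conjugation by $\bT$, use $\hat{\zeta}_j=\zeta_j^*$ to see that for $j\in\nabla^+$ the simple zero of $T_1$ and simple pole of $T_3$ migrate the singularity from the first column to the third, and evaluate the same two limits $\lim_{z\to\zeta_j}(z-\zeta_j)/T_1(z)=1/T_1'(\zeta_j)$ and $\lim_{z\to\zeta_j}(z-\zeta_j)T_3(z)=1/(T_3^{-1})'(\zeta_j)$ that produce $\alpha_j$ and $\beta_j$. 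The only blemish is cosmetic: the cut of $\delta(\hat{z})=\tilde{\delta}(z)$ lies on $(0,z_1]$ (the image of $[z_0,\infty)$ under $z\mapsto q_0^2/z$), not on $(z_0,+\infty)$, but since both cuts sit on the positive real axis while $\zeta_j$ lies on the circle $|z|=q_0$ off the real line, your conclusion that $\delta(z)$ and $\delta(\hat{z})$ are analytic and nonvanishing at $\zeta_j$ stands.
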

\begin{proof}
Let  $\M^{(1)}_{i}$  denote the $i$-th column of the matrix  $\M^{(1)}$. When  $j \in \nabla^+$, it follows from~\eqref{E:mlstjzc}, \eqref{E:defiofP} and~\eqref{E:firsttrans} that the third column of  $\M^{(1)}$  has a simple pole, and we have
\begin{align*}
\mathrm{Res}_{z=\zeta_j} \M^{(1)}_3(x,t,z)&=\bigl(\lim_{z \to \zeta_j}(z-\zeta_j)\widetilde T_3(z)\bigr) \widetilde \bT(\infty)^{-1}\M_{\infty}^{-1} \M_3(x,t,\zeta_j)\\
&=\frac{\tau_j^{-1} \e^{-\theta_{31}(x,t,\zeta_j) } }{(\widetilde T_3^{-1})'(\zeta_j)}\widetilde \bT(\infty)^{-1}\M_{\infty}^{-1} \lim_{z \to \zeta_j}(z-\zeta_j)\M_1(x,t,z)\\
&=\frac{\tau_j^{-1} \e^{-\theta_{31}(x,t,\zeta_j) } }{(\widetilde T_3^{-1})'(\zeta_j)} \bigl( \lim_{z-\zeta_j}\frac{z-\zeta_j}{\widetilde T_1(z)}\bigr) \widetilde \bT(\infty)^{-1}\M_{\infty}^{-1} \lim_{z\to \zeta_j} \bigl(\M_1 \widetilde T_1(z) \bigr)\\
&=\frac{\tau_j^{-1}\e^{-\theta_{31}(x,t,\zeta_j)}}{\widetilde T'_1(\zeta_j)(\widetilde T^{-1}_3)'(\zeta_j)}\M^{(1)}_1(x,t,\zeta_j),
\end{align*}
where in the second equality we have used the equation  $\lim\limits_{z \to \zeta_j}(z-\zeta_j)\widetilde T_3(z)=\dfrac{1}{(\widetilde T_3^{-1})'(\zeta_j)}$. The proof for ~\eqref{E:lstjN1-} is similar.
\end{proof}

Finally, we analyze the behavior of $\M^{(1)}$ near the branch points. The
following lemma shows that, although $\Del(z)$ has a singularity at
$q_{0}$, the function $\M^{(1)}$ remains bounded in a neighborhood of this
point. 
\begin{lemma}\label{L:Mzd}
As $z \to q_0$ from the upper and lower half-planes, the limits
of  $\M^{(1)}(x,t,z)$ exist. We denote them by
$\M^{(1)}_{\pm}(x,t,q_0)$, respectively. Moreover, they satisfy
\be \label{E:M13req0}
\M^{(1)}_{+}(x,t,q_0)= \M^{(1)}_{-}(x,t,q_0) \bPi (q_0), \qquad \bPi(q_0)= \bpm
0&0&-\ii \\
0&1&0\\
\ii &0&0
\epm.
\ee
\end{lemma}
\begin{proof}
Recalling  the first transformation, we have
$$
 \M^{(1)}(x,t,z)=\widetilde \bT(\infty)^{-1} (\M_{\infty})^{-1}\M(x,t,z) \Del(z) \tilde \Del(z) \bP(z).
$$
Thus, it is enough to show that $\M(x,t,z)\Del(z)$ admits upper and lower boundary limits at $q_0$.
Assertion {\rm(iv)} in Lemma~\ref{L:d1xz} shows that
$\delta_1(z)$ has the same local behavior as $s_1(z)$ near $q_0$. For
instance, when $z$ approaches $q_0$ from the upper half-plane, both
$\delta_1(z)$ and $s_1(z)$ have a simple pole. When $z$ approaches
$q_0$ from the lower half-plane, $\delta_1(z)$ has a simple zero, whereas
$\delta_1(\hat {z})$ has a simple pole. This causes no difficulty, since
the prescribed growth condition of $\M(x,t,z)$ at $q_0$
(see~\eqref{E:gcc}) precisely compensates for these singularities. For example,
as $\C_+ \ni z\to q_0$, one has
$
\M_1(x,t,z)\delta_1(z)=\mathcal O(1),
$
which guarantees the existence of the first column of $\M^{(1)}_{+}(x,t,q_0)$. 
Since the $(2,2)$- and $(3,3)$-entries of $\Del(z)$ have finite limits as
 $\C_+ \ni z\to q_0$, the second and third columns of
$\M^{(1)}_{+}$ are also well defined. A similar argument shows that the limit of $\M^{(1)}(x,t,z)$ exists as $\C_- \ni z \to q_0$.

Next, we proceed to prove~\eqref{E:M13req0}. By a direct computation, one verifies that $\widetilde \bT(z)$ satisfies
$\widetilde \bT(z)=\bPi^{-1}(z) \widetilde \bT(\hat z) \bPi(z) $
which implies that $\M^{(1)}$ obeys the symmetry relation
$
\M^{(1)}(x,t,z) =\M^{(1)}(x,t,\hat {z}) \bPi(z).
$
It then follows immediately that~\eqref{E:M13req0} holds.
\end{proof}

\begin{remark}
At the branch point  $-q_0$,   $\M^{(1)}_{\pm}(x,t,-q_0)$ satisfy a similar relation, namely
$$
\M^{(1)}_{+}(x,t,-q_0)= \M^{(1)}_{-}(x,t,-q_0) \bPi (-q_0).
$$
\end{remark}


\begin{figure}
\centering
\begin{tikzpicture}[
    scale=0.9,
    line cap=round,
    line join=round,
    >=latex,
    contour/.style={line width=2.0pt, draw=black!70},
    arrowcontour/.style={
        contour,
        postaction={decorate},
        decoration={
            markings,
            mark=at position 0.56 with {\arrow{>}}
        }
    },
    dasharrow/.style={
        line width=1.5pt,
        draw=black!70,
        dashed,
        postaction={decorate},
        decoration={
            markings,
            mark=at position 0.56 with {\arrow{>}}
        }
    },
    regionlabel/.style={font=\normalsize},
    every node/.style={font=\normalsize}
]

\coordinate (O)  at (-3.20,0);
\coordinate (A)  at (-4.95,1.55);
\coordinate (B)  at (-1.45,1.55);
\coordinate (C)  at (-4.95,-1.55);
\coordinate (D)  at (-1.45,-1.55);

\coordinate (Z1) at (0.45,0);
\coordinate (T)  at (2.30,1.55);
\coordinate (S)  at (2.30,-1.55);
\coordinate (Z0) at (4.15,0);

\coordinate (E)  at (5.25,0.92);
\coordinate (F)  at (5.25,-0.92);

\draw[arrowcontour] (-6.55,1.55) -- (A);
\draw[arrowcontour] (A) -- (B);
\draw[arrowcontour] (B) -- (T);
\draw[arrowcontour] (T) -- (7.55,1.55);

\draw[arrowcontour] (-6.55,-1.55) -- (C);
\draw[arrowcontour] (D) -- (S);
\draw[arrowcontour] (S) -- (7.55,-1.55);

\draw[arrowcontour] (E) -- (7.55,0.92);
\draw[arrowcontour] (F) -- (7.55,-0.92);

\draw[dasharrow] (-6.55,0) -- (Z1);
\draw[dasharrow] (Z1) -- (Z0);
\draw[dasharrow] (Z0) -- (7.55,0);

\draw[arrowcontour]
    (A) .. controls (-4.95,0.82) and (-4.25,0.14) .. (O);

\draw[arrowcontour]
    (O) .. controls (-2.15,0.14) and (-1.45,0.82) .. (B);

\draw[arrowcontour]
    (C) .. controls (-4.95,-0.82) and (-4.25,-0.14) .. (O);

\draw[arrowcontour]
    (O) .. controls (-2.15,-0.14) and (-1.45,-0.82) .. (D);

\draw[arrowcontour] (B) -- (Z1);
\draw[arrowcontour] (D) -- (Z1);
\draw[arrowcontour] (Z1) -- (T);
\draw[arrowcontour] (Z1) -- (S);

\draw[arrowcontour] (T) -- (Z0);
\draw[arrowcontour] (S) -- (Z0);
\draw[arrowcontour] (Z0) -- (E);
\draw[arrowcontour] (Z0) -- (F);

\fill (O) circle (3.0pt);
\fill (Z1) circle (3.0pt);
\fill (Z0) circle (3.0pt);

\node[below=3pt] at (O) {$0$};
\node[below=6pt] at (Z1) {$z_1$};
\node[below=6pt] at (Z0) {$z_0$};

\node[regionlabel] at (-5.70,0.65) {$\Omega_1^+$};
\node[regionlabel] at (-5.70,-0.65) {$\Omega_1^-$};

\node[regionlabel] at (-1.25,0.46) {$\Omega_2^+$};
\node[regionlabel] at (-1.25,-0.46) {$\Omega_2^-$};

\node[regionlabel] at (2.28,0.52) {$\Omega_3^+$};
\node[regionlabel] at (2.28,-0.52) {$\Omega_3^-$};

\node[regionlabel] at (6.45,0.42) {$\Omega_4^+$};
\node[regionlabel] at (6.45,-0.42) {$\Omega_4^-$};

\node[regionlabel] at (0.5,0.9) {$\Omega_{z_1}^+$};
\node[regionlabel] at (0.5,-0.9) {$\Omega_{z_1}^-$};

\node[regionlabel] at (4.2,0.83) {$\Omega_{z_0}^+$};
\node[regionlabel] at (4.20,-0.83) {$\Omega_{z_0}^-$};

\node[regionlabel] at (-3.2,0.8) {$\Omega_0$};
\end{tikzpicture}
\caption{Schematic diagram of the regions  $\{\Omega_j^\pm\}_{j=1}^4$,  $\Omega_{z_0}^{\pm}$, $\Omega_{z_1}^{\pm}$ and $\Omega_0$.}
\label{fig:regions}
\end{figure}

\subsection{Second transformation: opening of lenses}
In this subsection, we introduce a transformation which deforms the jump
contour from the real axis to $\Sigma^{(2)}$~(see Fig.~\ref{fig:contour}). We then show that the
corresponding jump matrix $\V^{(2)}$ converges uniformly to the identity matrix
$\bI$ as $t\to\infty$, except in neighborhoods of the stationary points
$z_{0}$ and $z_{1}$.

We now construct the transformation matrix $\G(x,t,z)$. Let
$$
\mathcal{R}=\cup_{j=1}^4 \Omega_{j}^{\pm} \cup \Omega_{z_{0}}^{\pm} \cup \Omega_{z_{1}}^{\pm} \cup \Omega_{0},
$$
 where the regions
$\{\Omega_j^\pm\}_{j=1}^4$,  $\Omega_{z_0}^{\pm}$, $\Omega_{z_1}^{\pm}$ and $\Omega_0$ are  as shown in
Fig.~\ref{fig:regions}.  For convenience, we may assume that the region 
$$\mathcal R\setminus\Omega_{0}= S_{\varepsilon} \cap \{z: |\Im z| \leq  \widetilde \varepsilon \},$$
where  $\widetilde \varepsilon>0$ is chosen sufficiently
small so that $\mathcal R$ contains no discrete spectrum.

According to the factorization \eqref{E:V1dehsfj}, we first
define $\G$ in the regions $\{\Omega_{j}^{\pm}\}_{j=1}^{4}$ so as to
remove the jumps on the real axis. More precisely, we set
$$
\G=\begin{cases}
\widetilde\bT^{-1} (\V^L)^{-1} \widetilde\bT  , & z \in  \Omega_{1}^{+},\\
\widetilde\bT^{-1} \V^U  \widetilde\bT ,& z \in  \Omega_{1}^{-},\\
\bT^{-1}  \bigl( \mathcal B (\V^U_{(3)})^{-1}\mathcal B \bigr)  \bT  ,& z \in  \Omega_{2}^{+},\\
 \bT^{-1}  \bigl( \mathcal B \V^L_{(3)}\mathcal B \bigr)  \bT, & z \in  \Omega_{2}^{-},\\
 \bT^{-1} (\V^U_{(1)})^{-1} \bT, & z \in  \Omega_{3}^{+},\\ 
\bT^{-1} \V^L_{(1)} \bT,&z \in  \Omega_{3}^{-},\\
 \bT^{-1} \bigl( \mathcal A (\V^U_{(2)})^{-1}\mathcal A \bigr)   \bT, & z \in \Omega_{4}^{+},\\
\bT^{-1} \bigl( \mathcal A \V^L_{(2)}\mathcal A \bigr) \bT , & z \in \Omega_{4}^{-}.
\end{cases}
$$
If $\G$ were defined only in this way, then the jumps on the cross-shaped
contours near the stationary points $z_{0}$ and $z_{1}$ would be rather
complicated. Therefore, we further define the transformation matrix in the
regions $\Omega_{z_{0}}^{\pm}$  and $\Omega_{z_{1}}^{\pm}$ in order to separate out some irrelevant jump factors. Thus we further  define 
$$
\G=\begin{cases}
\bpm 
1&0&\bigl[\tilde{r}_2^*(z^*)+\frac{1}{\gamma(z)}\tilde{r}_1^*(z^*)\tilde{r}_3(z)  \bigr]\frac{T_3}{T_1}(z)\e^{\theta_{13}} \\
0&1& -\tilde r_3(z) \frac{T_3}{T_2}(z)\e^{\theta_{23}}\\
0&0&1
\epm, & z \in  \Omega_{z_{1}}^{+},\\
\bpm 
1&0&0\\
0&1&0\\
\bigl[\tilde{r}_2(z)+\frac{1}{\gamma(z)}\tilde{r}_3^*(z^*)\tilde{r}_1(z)  \bigr]\frac{T_1}{T_3}(z)\e^{\theta_{31}} & \frac{1}{\gamma(z)} \tilde r^*_3(z^*)\frac{T_2}{T_3}(z)\e^{\theta_{32}} & 1
\epm, & z \in  \Omega_{z_{1}}^{-},\\
\bpm
1& -\frac{1}{\gamma(z)} \tilde r^*_1(z^*)\frac{T_2}{T_1}(z)\e^{\theta_{12}}&   \tilde r^*_2(z^*)\frac{T_3}{T_1}(z)\e^{\theta_{13}}\\
0&1&0\\
0&0&1
\epm, & z \in  \Omega_{z_{0}}^{+},\\
\bpm

1&0&0\\
-\tilde r_1(z) \frac{T_1}{T_2}(z)\e^{\theta_{21}} & 0&0\\
\tilde r_2(z)\frac{T_1}{T_3}(z)\e^{\theta_{31}}&0&1
\epm, & z \in  \Omega_{z_{0}}^{-}.
\end{cases}
$$
Finally, we define $\G$ in the region $\Omega_0$ as follows:
$$
\G(x,t,z)=\bpm
1&0&0\\
\frac{\tilde r_1(z)}{1-\frac{1}{\gamma(z)} \tilde r_1(z) \tilde r_1^*(z^*)}\frac{T_1}{T_2}(z)\e^{\theta_{21}}&1&0\\
0&0&1
\epm, \quad z \in \Omega_0.
$$
We now explain why an additional transformation matrix $\G(x,t,z)$ is
introduced in the region $\Omega_0$. From the definition of $\G$ in
$\Omega_1^+$ and $\Omega_2^+$, its $(2,1)$-entry is given by
$$
\G_{21}(x,t,z)=
\begin{cases}
-r_1(z)\frac{\widetilde T_1}{\widetilde T_2}(z)
\e^{t\phi_{21}(\xi,z)}, & z\in\Omega_1^+,\\[3mm]
\frac{\tilde r_1(z)}
{1-\frac{1}{\gamma(z)}\tilde r_1(z)\tilde r_1^*(z^*)}
\frac{T_1}{T_2}(z)
\e^{t\phi_{21}(\xi,z)}, & z\in\Omega_2^+ .
\end{cases}
$$
By Remark~\ref{R:fsjs},
$
r_1(z)=\mathcal O(1)$, $\Omega_1^+\ni z\to0$,
but this does not in general imply $r_1(z)\to 0$. Similarly, from
\eqref{E:Az0}, \eqref{E:ABrel}, and the definition of $\tilde r_1$, we have
$
\tilde r_1(z)=\mathcal O(1)$,
$\Omega_2^+\ni z\to0
$.
Moreover,
\[
\operatorname{Re}\phi_{21}(\xi,z)\to0,
\qquad z\to0.
\]
Therefore, without a further modification near the origin, the relevant
entries of the jump matrices on $\Sigma_9^{(2)}$ and $\Sigma_{10}^{(2)}$
need not be uniformly small as $t\to\infty$. To overcome this difficulty,
we introduce the transformation $\G$ in $\Omega_0$. This additional step
yields the estimate~\eqref{E:estinear0} for $\V^{(2)}$ near the origin, so
that the contribution of the jumps in a neighborhood of $0$ can be absorbed
into the error term.

The second transformation is defined as follows:
\be \label{E:sectrans}
\M^{(2)}(x,t,z)=\M^{(1)}(x,t,z) \times 
\begin{cases}
\G(x,t,z), & z \in \mathcal{R},\\
\bI, & \text{elsewhere}.
\end{cases}
\ee
The resulting jump contour is denoted by $\Sigma^{(2)}$, as shown in
Fig.~\ref{fig:contour}. The corresponding jump matrix is denoted by
$\V^{(2)}$. We use $\V^{(2)}_{j}$ to denote the jump matrix on the
subcontour of $\Sigma^{(2)}$ labeled by $j$. Below we list the expressions
for $\{\V^{(2)}_{j}\}_{j=1}^{12}$. The jump matrices on the remaining
unlabeled contours decay exponentially as $t\to\infty$, and their explicit
forms are omitted for brevity.

\begin{figure}
\centering
\begin{tikzpicture}[
    scale=0.9,
    line cap=round,
    line join=round,
    >=latex,
    contour/.style={line width=2pt, draw=black!70},
    arrowcontour/.style={
        contour,
        postaction={decorate},
        decoration={
            markings,
            mark=at position 0.56 with {\arrow{>}}
        }
    },
    dasharrow/.style={
        line width=1.5pt,
        draw=black!70,
        dashed,
        postaction={decorate},
        decoration={
            markings,
            mark=at position 0.56 with {\arrow{>}}
        }
    },
    circlearrow/.style={
        contour,
        postaction={decorate},
        decoration={
            markings,
            mark=at position 0.25 with {\arrow{>}}
        }
    },
    regionlabel/.style={font=\normalsize},
    every node/.style={font=\normalsize}
]

\coordinate (O)  at (-3.20,0);
\coordinate (A)  at (-4.95,1.55);
\coordinate (B)  at (-1.45,1.55);
\coordinate (C)  at (-4.95,-1.55);
\coordinate (D)  at (-1.45,-1.55);

\coordinate (Z1) at (0.45,0);
\coordinate (T)  at (2.30,1.55);
\coordinate (S)  at (2.30,-1.55);
\coordinate (Z0) at (4.15,0);

\coordinate (E)  at (5.25,0.92);
\coordinate (F)  at (5.25,-0.92);

\draw[arrowcontour] (-6.55,1.55) -- (A);
\draw[arrowcontour] (A) -- (B);
\draw[arrowcontour] (B) -- (T);
\draw[arrowcontour] (T) -- (7.55,1.55);

\draw[arrowcontour] (-6.55,-1.55) -- (C);
\draw[arrowcontour] (D) -- (S);
\draw[arrowcontour] (S) -- (7.55,-1.55);

\draw[arrowcontour] (E) -- (7.55,0.92);
\draw[arrowcontour] (F) -- (7.55,-0.92);

\draw[dasharrow] (-6.55,0) -- (Z1);
\draw[dasharrow] (Z1) -- (Z0);
\draw[dasharrow] (Z0) -- (7.55,0);

\draw[arrowcontour]
    (A) .. controls (-4.95,0.82) and (-4.25,0.14) .. (O);

\draw[arrowcontour]
    (O) .. controls (-2.15,0.14) and (-1.45,0.82) .. (B);

\draw[arrowcontour]
    (C) .. controls (-4.95,-0.82) and (-4.25,-0.14) .. (O);

\draw[arrowcontour]
    (O) .. controls (-2.15,-0.14) and (-1.45,-0.82) .. (D);

\draw[arrowcontour] (B) -- (Z1);
\draw[arrowcontour] (D) -- (Z1);
\draw[arrowcontour] (Z1) -- (T);
\draw[arrowcontour] (Z1) -- (S);

\draw[arrowcontour] (T) -- (Z0);
\draw[arrowcontour] (S) -- (Z0);
\draw[arrowcontour] (Z0) -- (E);
\draw[arrowcontour] (Z0) -- (F);

\fill (O) circle (3.0pt);
\fill (Z1) circle (3.0pt);
\fill (Z0) circle (3.0pt);

\node[below=6pt] at (O) {$0$};
\node[below=6pt] at (Z1) {$z_1$};
\node[below=6pt] at (Z0) {$z_0$};

\node at (5.15,0.5) {$1$};
\node at (3.2,0.5) {$2$};
\node at (3.2,-0.5) {$3$};
\node at (5.15,-0.5) {$4$};

\node at (1.55,0.5) {$5$};
\node at (-0.6,0.5) {$6$};
\node at (-0.6,-0.5) {$7$};
\node at (1.55,-0.5) {$8$};

\node at (-1.88,0.88) {$9$};
\node at (-4.3,0.82) {$10$};
\node at (-4.3,-0.82) {$11$};
\node at (-2.00,-0.90) {$12$};

%
%
%
%
%

\end{tikzpicture}

\caption{The jump contour $\Sigma^{(2)}$ for $\M^{(2)}$.}
\label{fig:contour}
\end{figure}

\begin{align}
\V^{(2)}_1=&
{\scriptsize
\begin{pmatrix}
  1& 0& 0\\
 0 & 1 &0 \\
  0& \frac{\frac{1}{\gamma(z)}  \tilde{r}_3^*(z^*)}{1+\frac{1}{\gamma(z)} \tilde{r}_3(z) \tilde{r}^*_3(z^*)} \frac{T_{2}}{T_{3}}(z) \e^{\theta_{32}(x,t,z)}&1
\end{pmatrix}
}, \quad
\V^{(2)}_2=
{\scriptsize
\begin{pmatrix}
1  & 0 & 0\\
 0 & 1 &\tilde{r}_3(z)\frac{T_3}{T_2}(z) \e^{\theta_{23}(x,t,z)} \\
 0 & 0 &1
\end{pmatrix}
}, \nonumber\\
\V^{(2)}_3=&
{\scriptsize
\begin{pmatrix}
1  & 0 & 0\\
 0 & 1 &0 \\
 0 & \frac{1}{\gamma(z)}\tilde{r}^*_3(z^*)\frac{T_2}{T_3}(z) \e^{\theta_{32}(x,t,z)}&1
\end{pmatrix}
}, \quad 
\V^{(2)}_{4}=
{\scriptsize
\begin{pmatrix}
  1& 0 & 0\\
 0 & 1 & \frac{\tilde{r}_3(z)}{1+\frac{1}{\gamma(z)} \tilde{r}_3(z) \tilde{r}^*_3(z^*)} \frac{T_{3}}{T_{2}}(z) \e^{\theta_{23}(x,t,z)}\\
 0& 0 &1
\end{pmatrix}
},\nonumber\\
\V^{(2)}_5=&
{\scriptsize
\begin{pmatrix}
  1& \frac{1}{\gamma(z)} \tilde{r}_1^*(z^*) \frac{T_2}{T_1}(z) \e^{\theta_{12}(x,t,z)}& 0\\
  0& 1 &0 \\
  0&0  &1
\end{pmatrix}
},\quad
\V^{(2)}_6=
{\scriptsize
\begin{pmatrix}
 1& 0 & 0\\
 \frac{-\tilde{r}_1(z)}{1-\frac{\tilde{r}_1(z)}{\gamma(z)}  \tilde{r}^*_1(z^*)  } \frac{T_{1}}{T_{2}}(z) \e^{\theta_{21} (x,t,z)}& 1 & 0\\
 0 & 0 &1
\end{pmatrix}
}, \nonumber\\
\V^{(2)}_7=&
{\scriptsize
\begin{pmatrix}
 1& \frac{\frac{1}{\gamma(z)} \tilde{r}^*_1(z^*)}{1-\frac{1}{\gamma(z)} \tilde{r}_1(z) \tilde{r}^*_1(z^*)  } \frac{T_{2}}{T_{1}}(z) \e^{\theta_{12}(x,t,z)}& 0\\
0  & 1 & 0\\
 0 & 0 &1
\end{pmatrix}
},\quad
\V^{(2)}_8=
{\scriptsize
\begin{pmatrix}
  1& 0 & 0\\
  -\tilde{r}_1(z) \frac{T_1}{T_2}(z) \e^{\theta_{21}(x,t,z)}& 1 &0 \\
  0&0  &1
\end{pmatrix}
}, \nonumber \\
\V^{(2)}_9=&
 \begin{pmatrix}
  1& 0 & -\bigl( \tilde{r}^*_2(z^*)+\frac{1}{\gamma(z)}\tilde{r}^*_1(z^*) \tilde{r}_3(z) \bigr) \frac{T_3}{T_1}(z) \e^{\theta_{13}(x,t,z)}\\
   0& 1 &\frac{\tilde r_3(z)+\frac{1}{\gamma(z)} \tilde{r}_1(z) \tilde{r}_2^*(z^*)}{1-\frac{1}{\gamma(z)}  \tilde r_{1}(z) \tilde{r}_1^*(z^*)} \frac{T_3}{T_2}(z) \e^{\theta_{23} (x,t,z)} \\
  0& 0 &1
\end{pmatrix}, \label{E:V2bdashl}  \\
\V^{(2)}_{10}=&
\begin{pmatrix}
1&0&0\\
g_1(z) \frac{T_1}{T_2}(z) \e^{\theta_{21} (x,t,z)}&1&0\\
g_2(z) \frac{T_1}{T_3}(z)\e^{\theta_{31} (x,t,z)}&-\frac{r_3^* (z^*)}{\gamma(z)\delta_1(z) \delta_1^2(\hat{z})} \frac{T_2}{T_3}(z) \e^{\theta_{32} (x,t,z)}&1
\end{pmatrix},\nonumber\\
\V^{(2)}_{11}=&
\begin{pmatrix}
1  &\frac{1}{\gamma(z)}\tilde{r}_1^*(z^*) \frac{\widetilde T_2}{\widetilde T_1}(z)\e^{\theta_{12}(x,t,z) } & -\tilde{r}_2^*(z^*)  \frac{\widetilde T_3}{\widetilde T_1}(z) \e^{\theta_{13} (x,t,z)}\\
0  & 1 & \tilde{r}_3(z) \frac{\widetilde T_3}{\widetilde T_2}(z) \e^{\theta_{23} (x,t,z)}\\
  0&0  &1
\end{pmatrix},\nonumber\\
\V^{(2)}_{12}=&
\begin{pmatrix}
1&\frac{\frac{1}{\gamma(z)} \tilde{r}_1^*(z^*)}{1-\frac{1}{\gamma(z)}  \tilde r_{1}(z) \tilde{r}_1^*(z^*)} \frac{T_2}{T_1} (z)\e^{\theta_{12} }&0\\
0&1&0\\
\bigl( \tilde{r}_2(z)+\frac{1}{\gamma(z)}\tilde{r}^*_3(z^*) \tilde{r}_1(z)\bigr)  \frac{T_1}{T_3}(z) \e^{\theta_{31} } & \frac{\frac{1}{\gamma(z)} \bigl(\tilde r_3^*(z^*)+\tilde{r}_1^*(z^*) \tilde r_2(z)\bigr)}{1-\frac{1}{\gamma(z)}  \tilde r_{1}(z) \tilde{r}_1^*(z^*)} \frac{T_2}{T_3}(z) \e^{\theta_{32} }   &1
\end{pmatrix}, \nonumber  
\end{align}
where
\begin{align}
g_1(z)&= r_1(z) \delta^2_1(z) \delta_1(\hat{z})+\frac{\tilde{r}_1(z)}{1-\frac{1}{\gamma(z)} \tilde{r}_1(z) \tilde{r}^*_1(z^*)  }, \label{E:g1bdsndyj}\\
g_2(z)&= r_2(z) \frac{\delta_1(z)}{\delta_1(\hat{z})} - \frac{1}{ \delta_1(z) \delta_1^2(\hat{z})} \frac{\frac{1}{\gamma(z) }\tilde{r}_1(z) \tilde{r}^*_3(z^*)  }{1-\frac{1}{\gamma(z)} \tilde{r}_1(z) \tilde{r}^*_1(z^*)  }.
\end{align}

We next give the estimates satisfied by the four jump matrices near the origin.

\begin{lemma}\label{L:estofo}
For $z \in \Sigma^{(2)}_{9,10,11,12 }$, we have the estimate
\be \label{E:estinear0}
| \bigl( \V^{(2)}(x,t,z)-\bI \bigr)_{ij}   |  \leq C |z| \e^{t \Re \phi_{ij}(\xi,z)}, \qquad 1 \leq i, j \leq 3.
\ee
\end{lemma}

\begin{proof}
Recalling the asymptotic behavior of the scattering coefficients near the
origin~\eqref{E:Az0} and their symmetry relations \eqref{E:ABrel}, one readily verifies that as $S_{\varepsilon } \ni z  \to 0$,
\be \label{E:xybakns}
r_2(z)= \mathcal{O}(z), \qquad r_3(z)= \mathcal{O}(z),\qquad
\tilde r_2(z)=\mathcal{O}(z), \qquad \tilde r_3(z)= \mathcal{O}(z).
\ee
Applying \eqref{E:xybakns} and the relation
$$
1/\gamma(z)=O(z^{2}),\qquad z\to0,
$$
to the expressions of $\{\V^{(2)}_{j}\}_{j=9}^{12}$; see \eqref{E:V2bdashl}, one
immediately obtains the estimate \eqref{E:estinear0}, except for the $(2,1)$-entry
of $\V^{(2)}_{10}$. We now show that this entry also satisfies
\eqref{E:estinear0}. For this purpose, it suffices to prove that $g_{1}(z)$, defined by \eqref{E:g1bdsndyj}, satisfies
\be \label{E:xydg1zmd}
|g_{1}(z)|\le C|z|, \qquad   z \in \Sigma^{(2)}_{10}.
\ee
 Recalling the definition of $\tilde{r}_1(z)$ and using the identity
$
\delta_{1}(z)=s_{1}(z)\rho(z),
$
a direct calculation gives
\be \label{E:ygzmzyd}
\begin{aligned}
|g_1(z)|& \leq  \left |r_1(z) \delta^2_1(z) \delta_1(\hat{z})+\hat{r}_1(z) \rho^2(z) \rho(\hat{z})\right|+\biggl| \tilde{r}_1(z)-\frac{\tilde{r}_1(z)}{1-\frac{1}{\gamma(z)} \tilde{r}_1(z) \tilde{r}^*_1(z^*)}\biggr|\\
& \leq \bigl | r_1(z) + \frac{\hat{r}_1(z)}{s_1^2(z) s_1(z)} \bigr| | \delta^2_1(z) \delta(\hat z)|+ C|z|^2\\
& \leq C\bigl | r_1(z) + \frac{\hat{r}_1(z)}{s_1^2(z) s_1(z)} \bigr| + C|z|^2,\qquad z \in \Sigma^{(2)}_{10}.
\end{aligned}
\ee
It follows from \eqref{E:hatr123b} that, for $z\in\Sigma_{10}^{(2)}$,
$$
s_{1}(z)=a_{11}(z),\qquad s_{1}(\hat z)=\frac{1}{b_{33}(z)}.
$$
Therefore, using the above formula together with the identity
$
b_{21}=a_{23}a_{31}-a_{21}a_{33}
$,
we obtain
$$
\bigl | r_1(z) + \frac{\hat{r}_1(z)}{s_1^2(z) s_1(z)} \bigr|=\biggl |\frac{a_{21}\bigl( a_{11}b_{11}-a_{33} b_{33}\bigr)+a_{23}a_{31}b_{33}}{a_{11}^2 b_{11}} \biggr |.
$$
Applying again the properties of the scattering coefficients in
\eqref{E:Az0} and \eqref{E:ABrel}, one readily verifies that, for
$z\in\Sigma_{10}^{(2)}$,
\begin{align*}
|a_{21}(z)|&\le C,\qquad
|a_{11}(z)b_{11}(z)-a_{33}(z)b_{33}(z)|\le C|z|,\\
|a_{23}(z)a_{31}(z)b_{33}(z)|&\le C|z|^{2},
\qquad
|a_{11}^{2}(z)b_{11}(z)|\ge c .
\end{align*}
This implies that
$$
\bigl | r_1(z) + \frac{\hat{r}_1(z)}{s_1^2(z) s_1(z)} \bigr| \le C|z|,\qquad z\in\Sigma_{10}^{(2)}.
$$
Hence, by \eqref{E:ygzmzyd}, we obtain~\eqref{E:xydg1zmd}.
The proof is therefore complete.
\end{proof}

Let $D_{\eps}(z_0)$  and  $D_{\eps}(z_1)$   denote two small disks around  $z_0$  and  $z_1$, respectively.  Define $\mathcal{D}=D_{\eps}(z_0) \cup D_{\eps}(z_1)$, then we prove that outside $\mathcal{D}$, $\V^{(2)} - \bI$ is uniformly small as $t \to \infty$.
\begin{lemma}\label{L:yzx}
The jump matrix $\V^{(2)}$ converges to the identity matrix $\bI$  as $t \to \infty$ uniformly for $\xi \in \mathcal{I}_+$ and $z \in \Sigma^{(2)}$ except near the two critical points $z_0$ and $z_1$.
Moreover, the following estimates hold:
\be \label{E:gjV7}
\begin{aligned}
&\|\V^{(2)}-\bI \|_{L^1 (\Sigma^{(2)} \setminus \mathcal{D})}
\leq C t^{-1},\\
&\|\V^{(2)}-\bI \|_{ L^{\infty}(\Sigma^{(2)} \setminus \mathcal{D})}
\leq C t^{-1/2}.
\end{aligned}
\ee
\end{lemma}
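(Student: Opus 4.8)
The plan is to reduce the estimate to a piece-by-piece analysis of $\V^{(7)}-\bI$ on the finitely many arcs of $\Sigma^{(7)}$, exploiting the uniform decomposition $\V^{(7)}_j=\e^{\Theta}\,M_j\,\e^{-\Theta}$, where each $M_j$ is a triangular matrix whose off-diagonal entries are products of the reflection coefficients $r_1,r_2,\tilde r_3$ (and their conjugates) with the scalar factors $\delta,\tdel,T_1,T_2,T_3,\delta_1$. By Lemmas~\ref{L:fsxsxz},~\ref{L:d1xz} and~\ref{L:13}, all of these prefactors are bounded uniformly in $z$ and in $\xi\in\mathcal{I}_+$; here the compactness of $\mathcal{I}_+\subset(0,q_0)$ is what keeps $z_1=\xi$ and $z_0=q_0^2/\xi$ in fixed compact sets, separated from each other, from $0$, and from the branch points $\pm q_0$. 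Consequently the $(i,k)$ off-diagonal entry of $\V^{(7)}_j-\bI$ is bounded by $C\,\e^{t\,\Re\phi_{ik}(\xi,z)}$, and the whole problem reduces to locating the sign of $\Re\phi_{ik}$ on each arc, which is read off from the signature tables in Fig.~\ref{fig:signature table}.

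Next I would dispose of the ``easy'' arcs, namely those bounded away from the origin and from the saddle points. Since each such arc was deformed precisely into the region where the governing phase is favorable, the signature tables give $\Re\phi_{ik}\le -c<0$ there, with $c$ uniform over $\mathcal{I}_+$ by compactness; on the arcs that approach $z_0$ or $z_1$ but stay outside the fixed disks $D_{\eps}(z_0),D_{\eps}(z_1)$, the local quadratic behaviour of the phase at the saddle gives $\Re\phi_{ik}\le -c\eps^2$. In either case the $L^\infty$ and $L^1$ norms of $\V^{(7)}-\bI$ over these arcs are $O(\e^{-ct})$, which is far stronger than the claimed bounds. The arcs extending to $\infty$ are treated in the same spirit, after noting that the reflection coefficients decay there (Lemma~\ref{L:fsxsxz}) while $\Re\phi_{ik}<0$ is maintained; they contribute at most $O(t^{-1})$.

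The binding contribution, and the source of the exact rates $t^{-1/2}$ and $t^{-1}$, comes from the four arcs $\Sigma^{(7)}_{\{9,10,11,12\}}$ emanating from the origin. Here I would combine the two ingredients that do not individually suffice. First, Lemma~\ref{L:estofo} gives $|\e^{-\Theta}\V^{(7)}\e^{\Theta}-\bI|\le C|z|$, so the prefactor itself vanishes linearly at $0$. Second, because these arcs follow the circles $(\Re z)^2+(\Im z\mp\tfrac12)^2=\tfrac14$ introduced in $\Sigma^{(1)}$, which are tangent to $\R$ at the origin, one has $\Im z\sim(\Re z)^2$ and hence, e.g.\ on $\Sigma^{(7)}_{10}$, $\Re\phi_{21}=2\,\Im z\,(\Re z-\xi)\le -c|z|^2$ near $0$, with analogous quadratic bounds for the relevant phase on the other three arcs and $c$ uniform in $\xi\in\mathcal{I}_+$. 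Multiplying the two, every nonzero entry on these arcs is bounded by $C|z|\,\e^{-ct|z|^2}$, whence $\sup_z |z|\,\e^{-ct|z|^2}\le Ct^{-1/2}$ and $\int|z|\,\e^{-ct|z|^2}\,|dz|\le Ct^{-1}$. Taking the maximum (for $L^\infty$) and the sum (for $L^1$) over all arcs, the origin arcs dominate and yield exactly~\eqref{E:gjV7}.

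The main obstacle is precisely this origin analysis. Because the jump contour is tangent to the real axis at $z=0$, the phase decays only quadratically ($\Re\phi\sim-|z|^2$) rather than linearly, so that neither the vanishing of the prefactor nor the oscillation alone produces a $t$-decaying bound; it is the product $|z|\,\e^{-ct|z|^2}$ that generates the half-power loss, matching the $t^{-1/2}$ in $L^\infty$ and $t^{-1}$ in $L^1$. Establishing the quadratic lower bound $-\Re\phi_{ik}\ge c|z|^2$ simultaneously for all four relevant phase/arc pairs, with the constant $c$ not degenerating as $\xi\to 0^+$ or $\xi\to q_0^-$, is the delicate step; all remaining arcs are exponentially small and routine.
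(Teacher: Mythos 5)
Your proposal is correct and follows essentially the same route as the paper's proof: exponential decay of $\V^{(7)}-\bI$ away from $\{0,z_0,z_1\}$, and on the four origin arcs the combination of the $O(|z|)$ prefactor bound from Lemma~\ref{L:estofo} with the quadratic phase decay forced by the tangency of the circles at $z=0$ (where $|\Im z|=|z|^2$, so $\Re\phi_{21}=2\Im z(\Re z-\xi)\le -c|z|^2$ uniformly for $\xi\in\mathcal{I}_+$), concluding via $\sup_{u\ge 0}u\,\e^{-ctu^2}\le Ct^{-1/2}$ and $\int_0^\infty u\,\e^{-ctu^2}\,\mathrm{d}u\le Ct^{-1}$. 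The only cosmetic difference is that the paper observes that on $\Sigma^{(7)}_{10}$ the $(3,1)$ and $(3,2)$ entries in fact decay exponentially (since $\Re z/|z|^2\to-\infty$ keeps $\Re\phi_{31},\Re\phi_{32}$ uniformly negative on the arc), reserving the $|z|\,\e^{-ct|z|^2}$ bound for the $(2,1)$ entry alone, whereas you apply the valid but non-sharp quadratic bound to every entry.
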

\begin{proof}
When $z\in\Sigma^{(2)}\setminus \bigl( \mathcal D  \cup \Sigma^{(2)}_{9,10,11,12} \bigr)$,
$\V^{(2)}-\bI$ decays exponentially as $t\to\infty$. For $z \in \Sigma^{(2)}_{9,10,11,12}$, some additional care is required. 
Next, we show that, for $z\in\Sigma^{(2)}_{9,10,11,12}$, all entries of
$\V^{(2)}-\bI$, except possibly the $(2,1)$- and $(1,2)$-entries, are
exponentially small  as $t\to\infty$. Indeed, a direct calculation gives
$$
\Re \phi_{32}(\xi,z)=\frac{q_0^2 \Im z}{|z|^2} \left( -2 \xi+\frac{2 q_0^2 \Re z}{|z|^2}   \right).
$$
Moreover, for $z\in\Sigma^{(2)}_{9,10,11,12}$, we have
\be \label{E:smgj}
(\Re z)^2 +(\Im z)^2=\frac{q_0^2}{\varepsilon} |\Im z|.
\ee
Then we obtain
$$
\frac{|\Im z|}{|z|^2}=\frac{\varepsilon}{q_0^2}; \qquad
\frac{|\operatorname{Re}z|}{|z|^{2}}
=
\frac{\varepsilon}{q_0^2}\frac{|\operatorname{Re}z|}{|\operatorname{Im}z|}
\to \infty ,
\qquad z\to0.
$$
Hence
$$
|\operatorname{Re}\phi_{23}(\xi,z)|\to\infty,
\qquad z\to0,\quad z\in\Sigma^{(2)}_{9,10,11,12}.
$$
This implies that, for $z\in\Sigma^{(2)}_{9,10,11,12}$,
$|\operatorname{Re}\phi_{23}(\xi,z)|$ admits a positive lower bound.
Consequently, the $(2,3)$- and $(3,2)$-entries of $\V^{(2)}-\bI$ decay
exponentially as $t\to\infty$.
A similar argument applies to $|\operatorname{Re}\phi_{13}(\xi,z)|$.

It remains only to estimate the $(2,1)$-entry and $(1,2)$-entry of $\V^{(2)}-\bI$ on $\Sigma^{(2)}_{9,10,11,12}$.
We take $\bigl(\V_{10}^{(2)}- \bI\bigr)_{21}$ as an example; the estimates on the
other three contours are analogous.   First, one can find
\be \label{E:rephi23}
\Re \phi_{21}(\xi,z)=\Im z(-2\xi+ 2 \Re z).
\ee
Therefore, as $z \to 0$, $\Re \phi_{21}(\xi,z)$ also approaches $0$, which implies that $\| \left(\V^{(2)}_{10}-\bI  \right)_{21} \|_{L^1 \cap L^{\infty}(\Sigma^{(2)}_{10})}$  may no longer decay exponentially. However, Lemma~\ref{L:estofo} shows that when $z \in \Sigma^{(2)}_{10}$, 
$$\left|\left(\V^{(2)}_{10}-\bI  \right)_{21}  \right| \leq C |z| \e^{t \Re \phi_{21}(\xi,z)}.$$ 
Then combined with~\eqref{E:rephi23}  and~\eqref{E:smgj}, it follows that
$$
\left|\left(\V^{(2)}_{10}-\bI  \right)_{21} \right|  \leq C |z| \e^{t \Re \phi_{21}(\xi,z)} \leq C |z|\e^{-c t |\Im z|} \leq C |z| \e^{-c t |z|^2 }, \qquad z \in \Sigma^{(2)}_{10}, \qquad \xi \in \mathcal{I_+}.
$$
Hence, we have
$$
\|\left(\V^{(2)}_{10}-\bI  \right)_{21} \|_{ L^{\infty}(\Sigma^{(2)}_{10})}  \leq C \sup_{u \geq 0} u \e^{-ct u^2} \leq C t^{-1/2}
$$
and
$$
\|\left(\V^{(2)}_{10}-\bI  \right)_{21}\|_{ L^{1}(\Sigma^{(2)}_{10})}  \leq C \int_{0}^{\infty}u e^{-ct u^2} \mathrm{d}u  \leq C t^{-1}.
$$
Thus $\V^{(2)}_{10}$ satisfies estimate~\eqref{E:gjV7}.
\end{proof}

We now examine the behavior of $\M^{(2)}$ near the branch points $\pm q_0$. It is readily observed that $\M^{(2)}$ exhibits no jump in the neighborhood of $\pm q_0$, and the values of   $\M^{(2)}$  at  $\pm q_0$  exist. Furthermore, we have the following lemma:
\begin{lemma}\label{L:M6pmq0}
Let $\M^{(2)}_j$ denote the $j$-th column of the matrix-valued function
$\M^{(2)}$. Then
\begin{equation}\label{E:M6pmq0}
\M^{(2)}_1(x,t,q_0)=\ii \M^{(6)}_3(x,t,q_0),\qquad
\M^{(2)}_1(x,t,-q_0)=-\ii \M^{(6)}_3(x,t,-q_0).
\end{equation}
\end{lemma}

\begin{proof}
We prove only the first identity in \eqref{E:M6pmq0}; the proof of the second
one is completely analogous. Recalling the second transformation, one verifies
that, in a neighborhood of $q_0$,
\begin{equation}\label{E:M2-near-q0}
\M^{(2)}(x,t,z)
=
\M^{(1)}(x,t,z)
\bT^{-1}(z)\e^{\Theta}
\bXi^{-1}(z)\tilde{\F}(z)\bXi(z)
\e^{-\Theta}\bT(z),
\end{equation}
where
$
\bXi(z)=\operatorname{diag}\left(\rho(z),\frac{1}{\rho(z)\rho(\hat z)},\rho(\hat z)\right)
$,
and
\begin{equation}\label{E:tF}
\tilde{\F}(z)=
\begin{cases}
\begin{pmatrix}
1&-\frac{1}{\gamma(z)}\hat{r}_1^*(z^*)&
\hat{r}_2^*(z^*)+\frac{1}{\gamma(z)}\hat{r}_1^*(z^*)\hat{r}_3(z)\\
0&1&-\hat{r}_3(z)\\
0&0&1
\end{pmatrix},
& z\in\Omega_3^+,\\[5mm]
\begin{pmatrix}
1&0&0\\
-\hat{r}_1(z)&1&0\\
\hat{r}_2(z)&\frac{1}{\gamma(z)}\hat{r}_3^*(z^*)&1
\end{pmatrix},
& z\in\Omega_3^-.
\end{cases}
\end{equation}
Here the regions $\Omega_3^\pm$ are shown in Fig.~\ref{fig:regions}.
By Lemma~\ref{L:Mzd}, the boundary values of $\M^{(1)}$ at $q_0$ satisfy
$$
\M^{(1)}_+(x,t,q_0)=\M^{(1)}_-(x,t,q_0)\bPi(q_0).
$$
Moreover, the matrices $\bXi^{\pm1}(z)$, $\e^{\pm\Theta}$ and
$\bT^{\pm1}(z)$ all satisfy the symmetry relation
\[
\bPi^{-1}(z)\mathbf X(\hat z)\bPi(z)=\mathbf X(z).
\]
Consequently, in order to obtain the first identity in \eqref{E:M6pmq0}, it is
enough to prove that
\begin{equation}\label{E:Fpmgs}
\bPi(q_0)\tilde{\F}_+(q_0)\bPi(q_0)=\tilde{\F}_-(q_0),
\end{equation}
where $\tilde{\F}_\pm(q_0)$ denote the corresponding limiting values of
$\tilde{\F}(z)$ as $z\to q_0$ from $\Omega_3^\pm$.

Using the local behavior of $\A(z)$ near the branch points in \eqref{E:Apm},
the symmetry relations for $\A(z)$ and $\B(z)$ in \eqref{E:ABdc} and
\eqref{E:ABrel}, together with the definitions of
$\{\hat r_j(z)\}_{j=1}^3$ in \eqref{E:hatr123}, we obtain
\begin{align*}
&\lim_{z\to q_0}\hat r_1(z)=0,\qquad
\lim_{z\to q_0}\hat r_3(z)=0,\qquad
\lim_{z\to q_0} \hat r_2(z)=\ii,\\
&\lim_{z\to q_0}\left(-\frac{\hat r_1^*(z^*)}{\gamma(z)}\right)
=\frac{a_{12,+}}{a_{11,+}},\qquad
\lim_{z\to q_0}\left(\frac{\hat r_3^*(z^*)}{\gamma(z)}\right)
=\ii\,\frac{a_{12,+}}{a_{11,+}}.
\end{align*}
Substituting these limits into \eqref{E:tF}, a straightforward calculation gives
\eqref{E:Fpmgs}. This proves the first identity in \eqref{E:M6pmq0}, and the
proof is complete.
\end{proof}

\subsection{The outer parametrix and the local parametrix }\label{sub:lG}
In this subsection, we construct the corresponding outer parametrix and local  parametrix.
We begin by considering the following pure-soliton RH problem.
\begin{RHP}[Pure-soliton RH problem]\label{rhp:Msol}
Find a $3 \times 3$ matrix-valued function $\M_{sol}(x,t,z)$ with the following properties:
\bi
\item $\M_{sol}(x,t,\cdot) : \mathbb{C}\setminus \rZ \to \mathbb{C}^{3 \times 3}$ is analytic, where $\rZ=\{ \zeta_j \}_{j=0}^{N-1} \cup \{ \zeta_j^* \}_{j=0}^{N-1}$.

\item  $\M_{sol}$ admits the following asymptotic behaviors
\be \label{E:sayM0in}
  \M_{sol}=\M_{\infty}+\mathcal{O}(\frac{1}{z}), \qquad z \to \infty ; \qquad \M_{sol}=\frac{\ii}{z} \M_0 + \mathcal{O}(1), \qquad z \to 0,
\ee
where $\M_{\infty}$ and $\M_0$ are given by~\eqref{E:M0infty}.
\item $\M_{sol}$ satisfies the symmetries
\begin{equation}\label{E:rhp5-3}
\M_{sol}(x,t,z)=\M_{sol}(x,t,\hat{z}) \bPi(z).
\end{equation}

\item  The following residue conditions hold at  each point $\zeta_j$, $j=0,...,N-1$:
\begin{equation}\label{E:mlstj}
\mathrm{Res}_{z =\zeta_j}\M_{sol}
= \lim_{z\to \zeta_j}\M_{sol} \bpm 0 & 0 &0\\
0&0& 0 \\ \hat{\tau}_{j} \e^{\theta_{31}(x,t,\zeta_j)}&0&0 \epm,
\end{equation}
where $\hat{\tau}_{j}= \tau_j |\delta_1(\zeta_j)|^2 |\delta(\zeta_j)|^2$.
\ei
\end{RHP}

\begin{lemma}\label{L:Msolczwyx}
If  $\frac{\tau_j}{\zeta_j}<0$  holds for all $0\leq j \leq N-1$, then the solution to RH  problem~\ref{rhp:Msol} exists and is unique for any $(x,t) \in \R \times \R_+$.
\end{lemma}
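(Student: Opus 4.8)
The plan is to exploit that RH Problem~\ref{rhp:Msol} carries no jump: $\M_{msol}$ is then a meromorphic matrix whose only singularities are the prescribed simple poles at $z=0$ and on $\rZ$, so it can be built explicitly and the whole question reduces to finite linear algebra. First I would use the symmetry~\eqref{E:rhp5-3} to cut down the unknowns. Writing $\mathbf{m}_k$ for the $k$-th column of $\M_{msol}$, the identity $\M_{msol}(z)=\M_{msol}(\hat z)\bPi(z)$ reads columnwise
\begin{equation*}
\mathbf{m}_1(z)=\frac{\ii q_0}{z}\,\mathbf{m}_3(\hat z),\qquad \mathbf{m}_3(z)=-\frac{\ii q_0}{z}\,\mathbf{m}_1(\hat z),\qquad \mathbf{m}_2(z)=\mathbf{m}_2(\hat z).
\end{equation*}
Since the residue matrix in~\eqref{E:mlstj} has only its first column nonzero, $\mathbf{m}_2$ and $\mathbf{m}_3$ are analytic at each $\zeta_j$; combined with the symmetry this shows $\mathbf{m}_2$ is analytic on $\C\setminus\{0\}$, bounded at $0$ and at $\infty$, and invariant under $z\mapsto\hat z$, so by Liouville it is the constant column $(0,\q_+^\perp/q_0)^\top$ read off from~\eqref{E:M0infty}. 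The third column is slaved to the first, so everything is determined once $\mathbf{m}_1$ is found.

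Next I would write the ansatz dictated by the asymptotics~\eqref{E:sayM0in} together with the observation that (using $\hat\zeta_j=\zeta_j^*$ on $C_0$ and the symmetry, which ties a pole of $\mathbf{m}_1$ to a pole of $\mathbf{m}_3$ that does not exist at $\zeta_j$) the only singularities of $\mathbf{m}_1$ are simple poles at the $\zeta_j$:
\begin{equation*}
\mathbf{m}_1(z)=\begin{pmatrix}1\\0\\0\end{pmatrix}+\frac{\ii}{z}\begin{pmatrix}0\\q_{1,+}\\q_{2,+}\end{pmatrix}+\sum_{j=0}^{N-1}\frac{\mathbf{c}_j}{z-\zeta_j}.
\end{equation*}
The residue condition~\eqref{E:mlstj}, rewritten through $\mathbf{m}_3(\zeta_j)=-\tfrac{\ii q_0}{\zeta_j}\mathbf{m}_1(\zeta_j^*)$, forces $\mathbf{c}_j=c_j\,\mathbf{m}_1(\zeta_j^*)$ with $c_j=-\hat\tau_j\tfrac{\ii q_0}{\zeta_j}\e^{\theta_{31}(x,t,\zeta_j)}$. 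Evaluating the ansatz at $z=\zeta_k^*$ turns this into a closed square linear system $K\mathbf{c}=(\text{known vector})$, where $K=(K_{kj})\in\C^{N\times N}$, $K_{kj}=\delta_{kj}-c_k(\zeta_k^*-\zeta_j)^{-1}$, acts identically on each of the three components of the residues $\mathbf{c}_j$. Hence existence and uniqueness of $\M_{msol}$ are both equivalent to invertibility of the single matrix $K$; that the resulting matrix then meets all the asymptotic and symmetry requirements is a direct verification using the consistency $\hat{\hat z}=z$.

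The crux — and the only place the hypothesis $\tau_j/\zeta_j<0$ enters — is to show $\ker K=\{0\}$. Suppose $K\bv=0$ for a scalar vector $\bv=(v_0,\dots,v_{N-1})$ and set $f(z)=\sum_k v_k/(z-\zeta_k)$, so the homogeneous system reads $v_k=c_k f(\zeta_k^*)$. Since $\hat\tau_j=\tau_j|\delta_1(\zeta_j)|^2|\delta(\zeta_j)|^2$ and $\theta_{31}(x,t,\zeta_j)\in\R$ on $C_0$ (because $k(\zeta_j)\in\R$ and $\lambda(\zeta_j)\in\ii\R$), each $c_j$ is purely imaginary, $c_j=\ii\kappa_j$ with $\kappa_j>0$; therefore
\begin{equation*}
-\ii\sum_k\overline{v_k}\,f(\zeta_k^*)=-\ii\sum_k\frac{|v_k|^2}{c_k}=-\sum_k\frac{|v_k|^2}{\kappa_k}\le 0,
\end{equation*}
with equality only when $\bv=0$. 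On the other hand, using the Gram representation $\tfrac{-\ii}{\zeta_j^*-\zeta_k}=\int_0^\infty \e^{\ii\zeta_k s}\,\overline{\e^{\ii\zeta_j s}}\,\mathrm{d}s$ (valid since $\Im\zeta_j+\Im\zeta_k>0$),
\begin{equation*}
-\ii\sum_k\overline{v_k}\,f(\zeta_k^*)=\sum_{j,k}\frac{-\ii}{\zeta_j^*-\zeta_k}\,\overline{v_j}v_k=\int_0^\infty\Big|\sum_k v_k\,\e^{\ii\zeta_k s}\Big|^2\,\mathrm{d}s\ge 0.
\end{equation*}
The two inequalities force $\bv=0$, so $K$ is invertible.

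The main obstacle is precisely this positivity step: the sign bookkeeping must be arranged so that $\tau_j/\zeta_j<0$ renders the first quadratic form nonpositive, while the Cauchy matrix $\big((\zeta_j^*-\zeta_k)^{-1}\big)$ is recognized as a negative imaginary multiple of a Gram matrix, making the same quantity nonnegative; the remaining ingredients (Liouville for $\mathbf{m}_2$, the symmetry reduction, and the asymptotic checks) are routine. Once $K$ is invertible the ansatz produces a unique $\mathbf{m}_1$, hence a unique $\M_{msol}$, for every $(x,t)\in\R\times\R_+$, which completes the proof.
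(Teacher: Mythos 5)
Your proposal is correct and follows essentially the same route as the paper's Appendix~\ref{AppB} proof: both reduce the jump-free RH problem by the $z\mapsto\hat z$ symmetry and Liouville's theorem to a finite Cauchy-type linear system closed at the conjugate points, and both establish unique solvability from the hypothesis $\tau_j/\zeta_j<0$ via the Gram-integral representation of the Cauchy kernel (your $\int_0^\infty \e^{\ii\zeta_k s}\overline{\e^{\ii\zeta_j s}}\,\mathrm{d}s$ is exactly the paper's $\int_0^\infty \e^{-(y_k+y_j^*)s}\,\mathrm{d}s$ after $y_j=-\ii\zeta_j$). The only differences are cosmetic: you take the first column and its residues as unknowns and argue $\ker K=\{0\}$ by pairing two opposite-sign quadratic forms, whereas the paper works with the third-column values $\M_{23}(\zeta_j)$ and shows the symmetrized matrix $\bI+\tilde{\bC}\mathbf{\Upsilon}\tilde{\bC}$ is positive definite.
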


\begin{proof}
See Appendix~\ref{AppB}.
\end{proof}

Let
\be \label{E:Ndarksoliton}
\q_{sol}^{N}(x,t):=-\ii \lim_{z\to\infty}z
\bigl((\M_{ sol})_{21}(x,t,z),(\M_{ sol})_{31}(x,t,z)\bigr)^{\top}.
\ee
Then $\q_{sol}^{N}(x,t)$ is the $N$-dark-soliton solution of the defocusing Manakov system~\eqref{E:demanakovS} generated by the scattering data
$
\{\zeta_j,\hat{\tau}_j\}_{j=0}^{N-1}
$.
Moreover, by the computations in Appendix~\ref{AppB}, one can in fact derive a determinant representation for this solution. For brevity, we omit the explicit formula here and refer the reader to formula~$(3.16)$ in Ref.~\cite{BD2015-1}.

We define the function  $\M^{out}$  as
\be \label{E:Mout}
\M^{out}(x,t,z)=\bP(\infty)^{-1}\M_{\infty}^{-1}\M_{sol}(x,t,z) \bP(z),
\ee
where $\bP(z)$  is given by~\eqref{E:defiofP}. The function $\M^{out}$  serves as the outer parametrix we require, and it satisfies the same residue conditions and  asymptotic properties  as $\M^{(2)}$. 
We now simplify $\M^{out}$ in the two cases: when the compact
velocity interval $\mathcal I_+$ contains exactly one soliton velocity and
when it contains none. 

\begin{lemma}\label{L:Mout-one-none}
Assume first that $\mathcal I_+$ contains exactly one soliton velocity,
say
$$
\mathcal I_+\cap\{\Re\zeta_j\}_{j=0}^{N-1}
=
\{\Re\zeta_{j_0}\},
\qquad
j_0\in\{0,1,\ldots,N-1\}.
$$
 Then, for \(t\) sufficiently large and uniformly for \(\xi\in\mathcal I_+\), one has
\be \label{E:nashhl}
\M^{out}(x,t,z)
=
\M_{out}^{(j_0)}(x,t,z)
\left(
\bI+\mathcal O(\e^{-ct})
\right), \qquad z\in\mathbb C\setminus
\left(
\bigcup_{j=0}^{N-1}(\overline{\mathcal D_j}\cup \overline{\mathcal D_j^*})
\right).
\ee
Here
\be \label{E:Moutjdy}
\M_{out}^{(j_0)}(x,t,z)
=
\bP^{(j_0)}(\infty)^{-1}
\M_{\infty}^{-1}
\M_{sol}^{(j_0)}(x,t,z)
\bP^{(j_0)}(z),
\ee
where $\M_{sol}^{(j_0)}$ denotes the solution of the pure-soliton
RH problem~\ref{rhp:Msol} with the scattering data
$
\{\zeta_j,\hat{\tau}_j\}_{j=0}^{N-1}
$
replaced by the single datum
$$
\{\zeta_{j_0},\tilde{\tau}_{j_0}\},
\qquad
\tilde{\tau}_{j_0}
=
\hat{\tau}_{j_0}
\prod_{\ell<j_0}
\left|
\frac{\zeta_{j_0}-\zeta_{\ell}}
{\zeta_{j_0}-\zeta_{\ell}^*}
\right|^2 .
$$
Moreover,
$$
\bP^{(j_0)}(z)=
\mathrm{diag} \left(
\cP_1^{(j_0)}(z),
\frac{1}{\cP_1^{(j_0)}(z)\cP_1^{(j_0)}(\hat z)},
\cP_1^{(j_0)}(\hat z)
\right),
\quad
\cP_1^{(j_0)}(z)=
\begin{cases}
1, & \Re \zeta_{j_0}\leq \xi \leq m_1,\\[1mm]
\dfrac{z-\zeta_{j_0}}{z-\zeta_{j_0}^*},
& m_0\leq \xi<\Re \zeta_{j_0}.
\end{cases}
$$
If $\mathcal I_+$ contains no soliton velocities, i.e. $ \mathcal I_+\cap\{\Re\zeta_j\}_{j=0}^{N-1}=\emptyset$, then, for $t$ sufficiently large and uniformly for $\xi\in\mathcal I_+$, one has
\be \label{E:choux}
\M^{out}(x,t,z)
=
\M_{\infty}^{-1}\E_+(z)
\left(
\bI+\mathcal O(\e^{-ct})
\right), \qquad z\in\mathbb C\setminus
\left(
\bigcup_{j=0}^{N-1}(\overline{\mathcal D_j}\cup\overline{\mathcal D_j^*})
\right).
\ee

\end{lemma}

\begin{proof}
We first consider the case where $\mathcal I_+$ contains exactly
one soliton velocity. For every $j\neq j_0$, we remove the
pole at $\zeta_j$ by replacing the corresponding residue condition with
a jump on the boundary of a small disk. Let $\mathcal D_j$ be a sufficiently small disk
centered at $\zeta_j$  so that the disks $\mathcal D_j$ and their
conjugates $\mathcal D_j^*$ are mutually disjoint. Inside $\mathcal D_j$
and $\mathcal D_j^*$, define
\be \label{E:Mouthatde}
\hat{\M}^{out}(x,t,z)
=
\M^{out}(x,t,z) \times
\begin{cases}
\bI-\dfrac{\mathbf n_j}{z-\zeta_j}, & z\in\mathcal D_j,\\[3mm]
\bI+
\dfrac{\zeta_j^*}{\zeta_j}
\bPi(\zeta_j^*)
\dfrac{\mathbf n_j}{z-\zeta_j^*}
\bPi(\zeta_j^*),
& z\in\mathcal D_j^*,
\end{cases}
\ee
where
\be \label{E:njddy}
\mathbf n_j=
\begin{cases}
\bpm
0&0&\alpha_j\\
0&0&0\\
0&0&0
\epm,
& \Re \zeta_j>\xi,\\[5mm]
\bpm
0&0&0\\
0&0&0\\
\beta_j&0&0
\epm,
& \Re \zeta_j<\xi.
\end{cases}
\ee
Here $\alpha_j$ and $\beta_j$ are defined in~\eqref{E:alphaj} and
\eqref{E:betaj}, respectively. This transformation removes the poles at
$\zeta_j$ and $\zeta_j^*$ for all $j\neq j_0$, and produces jumps on
$\partial\mathcal D_j$ and $\partial\mathcal D_j^*$ only. More precisely,
$$
\hat{\M}^{out}_+(x,t,z)
=
\hat{\M}^{out}_-(x,t,z)\hat{\V}(x,t,z),
\qquad
z\in \partial\mathcal D_j\cup\partial\mathcal D_j^* .
$$
Since $\|\mathbf n_j\|=\mathcal O(\e^{-ct})$ uniformly for
$\xi\in\mathcal I_+$, it follows that
$$
\|\hat{\V}(x,t,\cdot)-\bI\|_{L^\infty(\partial\mathcal D_j\cup\partial\mathcal D_j^*)}
=
\mathcal O(\e^{-ct}),
\qquad t\to\infty .
$$
It remains to compare $\hat{\M}^{out}$ with the one-soliton outer model.
By construction, $\hat{\M}^{out}$ and $\M_{out}^{(j_0)}$ have the same
residue conditions at $\zeta_{j_0}$ and $\zeta_{j_0}^*$. Hence the quotient
\be \label{E:e}
\mathbf e(x,t,z)
=
\hat{\M}^{out}(x,t,z)
\bigl(\M_{out}^{(j_0)}(x,t,z)\bigr)^{-1}
\ee
has removable singularities at $\zeta_{j_0}$ and $\zeta_{j_0}^*$. Moreover,
$\M_{out}^{(j_0)}$ and its inverse are uniformly bounded on all circles
$\partial\mathcal D_j$ and $\partial\mathcal D_j^*$ with $j\neq j_0$.
Therefore the jump matrix for $\mathbf e$ is still of the form
$\bI+\mathcal O(\e^{-ct})$. In addition, by the same argument as in
Lemma~\ref{L:zyyl}, the function $\mathbf e$ has finite limits
near the possible singular points $\{0,\pm q_0\}$. Thus $\mathbf e$
satisfies a small-norm RH problem.
By the standard small-norm RH theory, we obtain, uniformly for
\(\xi\in\mathcal I_+\),
\be \label{E:fbzmbww}
\mathbf e(x,t,z)
=
\bI+\mathcal O(e^{-ct}),
\qquad
z\in\mathbb C\setminus
\left(
\bigcup_{j=0}^{N-1}(\mathcal D_j\cup\mathcal D_j^*)
\right).
\ee
Combining this estimate with~\eqref{E:e} gives~\eqref{E:nashhl}.

We now assume that \(\mathcal I_+\) contains no soliton velocities. In this case, the above pole-removing transformation is applied to every \(\zeta_j\). Hence \(\hat{\M}^{out}\) is analytic at all discrete eigenvalues and has only exponentially small jumps on the circles \(\partial\mathcal D_j\) and \(\partial\mathcal D_j^*\).
 Define 
$$
 \mathbf e(x,t,z) = \hat{\M}^{out}(x,t,z)\E_+^{-1}(z)\M_{\infty}. 
$$
 Then \(\mathbf e\) is analytic away from these circles, normalized to \(\bI\) at infinity, and has jump matrices of the form \(\bI+\mathcal O(\e^{-ct})\). Hence \(\mathbf e\) satisfies a small-norm RH problem, and the standard small-norm estimate gives~\eqref{E:choux}.
\end{proof}

The above lemma shows that the expression for the \(N\)-dark-soliton solution given in~\eqref{E:Ndarksoliton} can also be simplified in these two cases.
\begin{corollary}\label{C:qsolN-reduction}
Assume first that $\mathcal I_+$ contains exactly
one soliton velocity, i.e. 
$\Re \zeta_{j_0} \in \mathcal{I_+}$. Define
\be \label{E:qsoljoddy}
\q_{sol}^{(j_0)}(x,t)
=
-\ii \lim_{z\to\infty}z
\left(
(\M_{sol}^{(j_0)})_{21}(x,t,z),
(\M_{sol}^{(j_0)})_{31}(x,t,z)
\right)^{\top}.
\ee
Then, uniformly for $\xi\in\mathcal I_+$,
\be \label{E:qsolNbiby1sol}
\q_{sol}^{N}(x,t)
=
\left(\prod_{\ell<j_0}\frac{\zeta_{\ell}}{\zeta_{\ell}^*} \right)
\q_{sol}^{(j_0)}(x,t)
+\mathcal O(\e^{-ct}),
\qquad t\to\infty.
\ee
Moreover, the one-dark-soliton solution $\q_{sol}^{(j_0)}(x,t)$ admits the explicit form
\be \label{E:one-dark-explicit}
\q_{sol}^{(j_0)}(x,t)
=
\q_+\e^{\ii\theta_{j_0}}
\left[
\cos\theta_{j_0}
-\ii\sin\theta_{j_0}
\tanh\Bigl(
q_0\sin\theta_{j_0}
\bigl(x-2q_0\cos\theta_{j_0}t-x_{j_0}\bigr)
\Bigr)
\right],
\ee
where
$$
\zeta_{j_0}=q_0\e^{\ii\theta_{j_0}},
\qquad
x_{j_0}
:=
\frac{1}{2q_0\sin\theta_{j_0}}
\log\left(
-\frac{\tilde{\tau}_{j_0}}
{2\zeta_{j_0}\sin\theta_{j_0}}
\right).
$$
If $\mathcal I_+$ contains no soliton velocities, then, uniformly for
$\xi\in\mathcal I_+$,
\be \label{E:wuguziqingjnjx}
\q_{sol}^{N}(x,t)
=\left(
\prod_{j\in\nabla^+}\frac{\zeta_j}{\zeta_j^*}\right) \q_+
+\mathcal O(\e^{-ct}),
\qquad t\to\infty.
\ee
\end{corollary}

\begin{proof}
We first consider the case where \(\mathcal I_+\) contains the single soliton velocity \(\Re\zeta_{j_0}\). By Lemma~\ref{L:Mout-one-none}, the outer model \(\M^{out}\) differs from the one-soliton outer model \(\M_{out}^{(j_0)}\) by an exponentially small error. More precisely, the corresponding error matrix satisfies 
\[ \mathbf e(x,t,z) = \bI+\mathcal O(\e^{-ct})\frac{1}{z} + \mathcal O\left(\frac{1}{z^2}\right), \qquad z\to\infty, \]
 uniformly for \(\xi\in\mathcal I_+\) and for \(t\) sufficiently large.
Then for large $z$, we have
\be \label{E:Moutwlisjjx}
\M^{out}(x,t,z)
=
\M_{out}^{(j_0)}(x,t,z)
\left(
\bI+\mathcal O(\e^{-ct})\frac{1}{z}
+\mathcal O\left(\frac{1}{z^2}\right)
\right).
\ee
Using~\eqref{E:Mout}, \eqref{E:Moutjdy}, and \eqref{E:Moutwlisjjx}, together
with the reconstruction formula, and comparing the \(z^{-1}\)-coefficients, we
obtain~\eqref{E:qsolNbiby1sol}. The explicit formula \eqref{E:one-dark-explicit} follows by solving the one-pole pure-soliton RH problem for \(\M_{sol}^{(j_0)}\), as in Appendix~\ref{AppB}, and then applying the formula \eqref{E:qsoljoddy}.
The proof of~\eqref{E:wuguziqingjnjx} is completely analogous to that of
\eqref{E:qsolNbiby1sol}, and is therefore omitted for brevity.
\end{proof}


Next, we proceed to construct the local models. In the previous subections, we have shown that the jump matrix $\V^{(2)}$ decays  uniformly  to  $\bI$ except in the vicinity of the two critical points $z_0$ and $z_1$. We will show that $\M^{(2)}$
 can be approximated locally near the two  critical points by solutions of solvable model RH problems. 

We introduce two new variables in the neighborhoods of $z_1$ and $z_0$ respectively:
\begin{align}\label{E:yellr}
 y_{\ell}=\sqrt{2t}(z-z_1), \qquad z \in D_{\epsilon }(z_1);\qquad  y_{r}=\frac{\xi^2}{q_0^2}\sqrt{2t}(z-z_0), \qquad z \in D_{\epsilon }(z_0).
\end{align}
By performing Taylor expansions of the functions  $\theta_{32}(x,t,z)$  and  $\theta_{21}(x,t,z)$  at  $z_1$  and  $z_0$  respectively, we obtain
\begin{align}\label{E:thetabs}
\theta_{21}(x,t,z)-\theta_{21}(x,t,z_1)=-\frac{\ii}{2}y_{\ell}^2,\qquad
\theta_{32}(x,t,z)-\theta_{32}(x,t,z_0)=\frac{\ii}{2}y_{r}^2+\theta_r(x,t,z),
\end{align}
 where
$$
\theta_{r}(x,t,z)=\mathcal{O}(z-z_0)^3, \ \  \text{as $z \to z_0$.}
$$
From Eq.~\eqref{E:tilder123}, we have
\be \label{E:zhsh}
\begin{aligned}
\tilde{r}_1(z)\frac{T_{1}}{T_{2}}(z)  =& \hat{r}_1(z) \rho^2(z) \rho(\hat{z}) \frac{\cP^2_1(z) \cP_1(\hat{z})}{\delta(z) \delta^2(\hat{z})},\\
\tilde{r}^*_3(z^*)\frac{T_{2}}{T_{3}}(z)=&\hat{r}^*_3(z^*) \frac{\delta^2(z) \delta(\hat{z})}{\rho(z)\rho^2(\hat{z})  \cP_1(z) \cP_1^2(\hat{z})}.
\end{aligned}
\ee
Eqs.~\eqref{E:expdelta1sj} and~\eqref{E:yellr}  imply that, for $\xi \in \mathcal{I}_+$
\be \label{E:T12}
\begin{aligned}
&\frac{\delta^2(z) \delta(\hat{z})}{\rho(z)\rho^2(\hat{z})  \cP_1(z) \cP_1^2(\hat{z})}=\e^{-2 \ii \nu \log (-y_r)} d_0^r d_1^r, \qquad z \in D_{\eps}(z_0) \setminus [z_0,\infty),\\
& \rho^2(z) \rho(\hat{z}) \frac{\cP^2_1(z) \cP_1(\hat{z})}{\delta(z) \delta^2(\hat{z})}= \e^{2 \ii \nu \log y_{\ell}}d_0^{\ell} d_1^{\ell}, \qquad  z \in D_{\eps}(z_1) \setminus [0,z_1],
\end{aligned}
\ee
where
\begin{align}
&d_0^r=(\frac{q_0^2}{\xi^2 \sqrt{2t}})^{-2 \ii  \nu} \e^{-2 \chi(z_0)}     \frac{\delta(z_1)}{ \rho^2(z_1)\rho(z_0) \cP_1(z_0) \cP_1^2(z_1)}, \label{E:do1}\\
&d_1^r=\frac {\delta(\hat{z})  \rho^2(z_1)\rho(z_0) \cP_1^2(z_1) \cP_1(z_0)} {\delta(z_1)  \rho(z) \rho^2(\hat{z}) \cP_1(z) \cP^2_1(\hat{z})} \e^{-2 \chi(z)+2\chi(z_0)},\\
&d_0^{\ell}=(\sqrt{2t})^{-2 \ii \nu} \frac{\rho^2(z_1) \rho(z_0) \cP^2_1(z_1) \cP_1(z_0) }{\delta^2(0)\delta(z_1) } \e^{-2 \tilde{\chi} (z_1)},\label{E:dol-1}\\
&d_1^{\ell}=\frac{\delta(z_1) \rho(\hat{z}) \rho^2(z) \cP_1(\hat{z}) \cP_1^2(z)} {\delta(z) \rho^2(z_1)\rho(z_0) \cP_1^2(z_1) \cP_1(z_0)} \e^{-2 \tilde{\chi}(z)+2\tilde{\chi}(z_1)}.
\end{align}
In order to relate $\M^{(2)}$ to the solutions of model RH problems, we define
\be \label{E:Y}
\bY(\xi,t)=\begin{cases}
\bY_L(\xi,t), &z \in D_{\eps}(z_1),\\
\bY_R(\xi,t), & z \in D_{\eps}(z_0),
\end{cases}
\ee
where
\be \label{E:YLYR}
\begin{aligned}
&\bY_R(\xi,t)=\bpm
1& &\\
 & (d_0^{r})^{-1/2}\e^{-\frac{\theta_{32}(x,t,z_0)}{2}}&\\
 & & (d_0^{r})^{1/2}\e^{\frac{\theta_{32}(x,t,z_0)}{2}}
\epm, \\
&\bY_L(\xi,t)=\bpm
(d_0^{\ell})^{-1/2}\e^{-\frac{\theta_{21}(x,t,z_1)}{2}}& &\\
 &(d_0^{\ell})^{1/2}\e^{\frac{\theta_{21}(x,t,z_1)}{2}} &\\
 & & 1
\epm.
\end{aligned}
\ee
\begin{lemma}\label{L:Y}
The function $\bY(\xi,t)$ is uniformly bounded for $\xi \in \mathcal{I}_+$ and $t \geq 2$.
Moreover, the functions $d_0^r(\xi,t)$, $d_0^{\ell}(\xi,t)$, $d_0^r(\xi,z)$ and $d_0^{\ell}(\xi,z)$ satisfy
\begin{align}
&|d_0^r(\xi,t)|=|d_0^{\ell}(\xi,t)|=1, \label{E:d0rd0ell}\\
&|d_1^r -1| \leq C |z-z_0|\left(1+|\log |z-z_0|   |\right),  \quad z \in \Sigma^{(2)} \cap D_{\eps }(z_0), \label{E:d1estr}\\
&|d_1^{\ell} -1| \leq C |z-z_1|\left(1+|\log |z-z_1|   |\right), \quad  z \in \Sigma^{(2)} \cap D_{\eps }(z_1). \label{E:d1estrr}
\end{align}
\end{lemma}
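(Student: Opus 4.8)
The plan is to handle the four assertions separately, reducing everything to the symmetry-induced modulus-one properties of $\delta$, $\rho$, $\cP_1$ on the real axis together with the explicit representations collected in Lemma~\ref{L:13}.

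\emph{Moduli of $d_0^r$ and $d_0^\ell$.} I would compute $|d_0^r|$ and $|d_0^\ell|$ factor by factor from~\eqref{E:do1} and~\eqref{E:dol-1}. Since $\nu,\tilde\nu$ are real and the bases $q_0^2/(\xi^2\sqrt{2t})$ and $\sqrt{2t}$ are positive, the prefactors $(\cdots)^{-2\ii\nu}$ and $(\cdots)^{-2\ii\tilde\nu}$ have modulus one. Because $z_0,z_1$ are real and lie off the jump contours of $\delta$ and $\rho$, the symmetry~\eqref{E:L13-st-4}, the relation $\rho^*(z^*)=1/\rho(z)$, and the Blaschke structure of $\cP_1$ give $|\delta(z_1)|=|\delta(0)|=1$, $|\rho(z_0)|=|\rho(z_1)|=1$, and $|\cP_1(z_0)|=|\cP_1(z_1)|=1$. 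The only nontrivial contributions are $\e^{-2\chi(z_0)}$ and $\e^{-2\tilde\chi(z_1)}$. Reading off the boundary value of the logarithm at the endpoint in~\eqref{E:L13-st-2} (for $\zeta>z_0$ one has $\log_0(z_0-\zeta)=\ln(\zeta-z_0)+\ii\pi$, while for $\zeta<z_1$ one has $\log_\pi(z_1-\zeta)=\ln(z_1-\zeta)$), the real parts separate: with $m(\zeta)=\ln(1+\gamma^{-1}(\zeta)|\hat r_3(\zeta)|^2)$ one finds $\Re\chi(z_0)=\tfrac12\big(m(\infty)-m(z_0)\big)=\pi\nu$ (using $m(\infty)=0$ and $m(z_0)=-2\pi\nu$), whereas $\tilde\chi(z_1)$ is purely imaginary so $\Re\tilde\chi(z_1)=0$. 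Collecting the factors yields exactly~\eqref{E:d0rd0ell}.

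\emph{Boundedness of $\bY$.} The matrices $\bY_{R}$, $\bY_{L}$ in~\eqref{E:YLYR} are diagonal, so it suffices to bound each diagonal entry. A direct substitution of $z_0=q_0^2/\xi$ and $z_1=\xi$ into the phase functions gives $\theta_{32}(x,t,z_0)=t\phi_{32}(\xi,z_0)=-\ii\xi^2 t$ and $\theta_{21}(x,t,z_1)=t\phi_{21}(\xi,z_1)=\ii\xi^2 t$, both purely imaginary, hence $|\e^{\pm\theta_{32}(x,t,z_0)/2}|=|\e^{\pm\theta_{21}(x,t,z_1)/2}|=1$. Combined with~\eqref{E:d0rd0ell} the entries reduce to $|d_0^r|^{\pm1/2}=\e^{\mp\pi\nu}$ and $|d_0^\ell|^{\pm1/2}=1$; since $\xi$ ranges over the compact set $\mathcal{I}_+\subset(0,q_0)$, the point $z_0=q_0^2/\xi$ stays in a compact subset of $(q_0,\infty)$ and $\nu=-\tfrac1{2\pi}\ln(1+\gamma^{-1}(z_0)|\hat r_3(z_0)|^2)$ is bounded, so $\bY$ is uniformly bounded in $\xi$ and $t\ge2$ (its modulus is in fact $t$-independent).

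\emph{The estimates for $d_1^r$ and $d_1^\ell$.} First I would observe that as $z\to z_0$ (so $\hat z\to z_1$) all the rational and analytic factors telescope to $1$ and $\e^{-2\chi(z)+2\chi(z_0)}\to1$, whence $d_1^r\to1$; likewise $d_1^\ell\to1$ as $z\to z_1$. Writing $d_1^r=A(z)\,\e^{-2\chi(z)+2\chi(z_0)}$ with $A(z_0)=1$, the prefactor $A$ is built from $\delta,\rho,\cP_1$ evaluated at $z$ and $\hat z$, each analytic (hence Lipschitz) in a neighbourhood of $z_0$ since $z_0,z_1$ avoid the branch points $\pm q_0$, the origin, and the discrete spectrum; as $\hat z-z_1=O(z-z_0)$ this yields $|A(z)-1|\le C|z-z_0|$. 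The exponential factor is controlled via $|\e^{w}-1|\le|w|\e^{|w|}$ together with the modulus-of-continuity bound~\eqref{E:L13-st-5}, $|\chi(z)-\chi(z_0)|\le C|z-z_0|(1+|\ln|z-z_0||)$. Combining the two gives~\eqref{E:d1estr}, and the identical argument using $\tilde\chi$ and the companion estimate in Lemma~\ref{L:13} gives~\eqref{E:d1estrr}.

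The step demanding the most care is the evaluation $\Re\chi(z_0)=\pi\nu$: one must correctly identify the boundary value of $\log_0$ at the endpoint $z_0$ of the integration contour and justify separating the real part of the endpoint-singular Cauchy-type integral in~\eqref{E:L13-st-2}. Everything else is bookkeeping of modulus-one factors and the Lipschitz/logarithmic bounds already packaged in Lemma~\ref{L:13}.
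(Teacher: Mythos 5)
Your proposal is correct and follows essentially the same route as the paper's (much terser) proof: unimodularity of $\delta$, $\rho$, $\cP_1$ at the real points $z_0,z_1$ together with $\Re\chi(z_0)=\pi\nu$, $\Re\tilde\chi(z_1)=0$ gives \eqref{E:d0rd0ell}, and the $d_1$ bounds follow from analyticity of the non-exponential factors plus the modulus-of-continuity estimate \eqref{E:L13-st-5}. The only difference is that you spell out what the paper merely asserts — the endpoint evaluation $\Re\chi(z_0)=\pi\nu$, the purely imaginary phases $\theta_{21}(x,t,z_1)$, $\theta_{32}(x,t,z_0)$ behind the boundedness of $\bY$ — and these computations are all accurate.
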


\begin{proof}
Note that 
$$
|\delta(z_0)|=|\delta(z_1)|=|\rho(z_0)|=|\rho(z_1)|=|\cP_1(z_0)|=|\cP_1(z_1)|=1,
$$
 and $\Re \chi(z_0)=\Re \tilde{\chi}(z_1)=0$. Then it follows directly from~\eqref{E:do1} and~\eqref{E:dol-1} that~\eqref{E:d0rd0ell} holds. Observing that $\delta(\hat{z}) /\bigl(\rho(z) \rho^2(\hat{z}) \cP1(z) \cP_1^2(\hat{z}) \bigr)$ analytic
for $z \in  \Sigma^{(2)} \cap D_{\eps }(z_0)$,~\eqref{E:d1estr}  follows from~\eqref{E:L13-st-5}. The proof of~\eqref{E:d1estrr} is similar.
\end{proof}

We can approximate $\M^{(2)}(x,t,z)$ in $\cD$ by the $3 \times 3$-matrix-valued function $\M^{loc}(x,t,z)$ defined by
\be \label{E:Mloc}
\M^{loc}(x,t,z)=\begin{cases}
\bY_L(\xi,t) \M^{X,L}(x,t,y_{\ell}(z)) \bY_L^{-1}(\xi,t), & z \in D_{\eps}(z_1),\\
 \bY_R(\xi,t) \M^{X,R}(x,t,y_r(z)) \bY_R^{-1}(\xi,t), & z \in D_{\eps}(z_0).
\end{cases}
\ee
Here
\be \label{E:defofMXLXR}
\begin{aligned}
\M^{X,L}(y_{\ell})&=\bpm
\M^{PC}\bigl(y_{\ell};\frac{1}{\gamma(z_1)}\hat{r}_1^*(z_1), \hat r_1(z_1) \bigr)&\mathbf{0}\\
\mathbf{0^{\top}}&1
\epm, \\
\M^{X,R}(y_r)&=\bpm
1&\mathbf{0^{\top}}\\
\mathbf{0}&\boldsymbol{ \sigma_1} \M^{PC}\bigl(-y_{r};-\frac{1}{\gamma(z_0)}\hat{r}_3^*(z_0) , \hat r_3(z_0) \bigr) \boldsymbol{\sigma_1}
\epm,
\end{aligned}
\ee
where $\M^{PC}(\zeta;p,q)$ is the solution of model RH problem~\ref{rhpNL}, and $\boldsymbol{\sigma_1}$ is the  Pauli matrix $\boldsymbol{\sigma_1}=\bpm0&1\\ 1&0 \epm$. The following lemma is a direct consequence of~\eqref{E:defofMXLXR} and Lemma~\ref{L:modelasy}.

\begin{lemma}
The functions $\M^{X,L}(y_{\ell})$ and $\M^{X,R}(y_r)$ have the following asymptotic behaviors:
\begin{equation}\label{E:mXasy}
\begin{aligned}
&\mathbf{M}^{X,L}(y_{\ell}) = \bI + \frac{\mathbf{M}_{\infty}^{X,L}}{y_{\ell}} + \mathcal{O}\big(\frac{1}{y_{\ell}^2}\big), \qquad y_{\ell} \to \infty,  \\
&\mathbf{M}^{X,R}(y_r) = \bI + \frac{\mathbf{M}_{\infty}^{X,R}}{y_r} + \mathcal{O}\big(\frac{1}{y_r^2}\big), \qquad y_r\to \infty, 
\end{aligned}
\end{equation}
where  $\mathbf{M}_{\infty}^{X,L}$ and $\mathbf{M}_{\infty}^{X,R}$ are given by
\begin{equation}\label{E:m1Xdef}
\begin{aligned}
\mathbf{M}_{\infty}^{X,L} =\begin{pmatrix}
0 & \ii \beta_{12} & 0 	\\
-\ii \beta_{21} & 0 &  0\\
0 & 0 & 0 \end{pmatrix}, \qquad
\mathbf{M}_{\infty}^{X,R} =\begin{pmatrix}
0 & 0 & 0 	\\
0& 0 & -\ii \beta_{23} \\
0 &  \ii \beta_{32}  & 0 \end{pmatrix}, 
\end{aligned}
\end{equation}
with
\be \label{E:beta1232}
\begin{aligned}
&\beta_{12}=\frac{\sqrt{2 \pi} \e^{-\frac{ \pi \nu}{2}} \e^{-\frac{ \pi}{4}  \ii }  }{\hat r_1(z_1) \Gamma(\ii \nu)}, \qquad
\beta_{21}=\frac{\gamma(z_1) \sqrt{2 \pi} \e^{\frac{- \pi \nu}{2}} \e^{\frac{ \pi}{4}  \ii }  }{\hat r^*_1(z_1) \Gamma(-\ii \nu)},\\
&\beta_{23}=\frac{\gamma(z_0) \sqrt{2 \pi} \e^{-\frac{ \pi \nu}{2}} \e^{\frac{ \pi}{4}  \ii }  }{\hat r^*_3(z_0) \Gamma(-\ii \nu)}, \qquad
\beta_{32}=\frac{\sqrt{2 \pi} \e^{\frac{- \pi \nu}{2}} \e^{\frac{3 \pi}{4}  \ii }  }{\hat r_3(z_0) \Gamma(\ii \nu)}.
\end{aligned}
\ee
\end{lemma}

We then show that $\M^{loc}$ is a good approximation of $\M^{(2)}$ in $\cD$ for large $t$.
Let $X^{\eps}$ be defined as $X^{\eps}=\Sigma^{(2)} \cap \cD$, and let the boundary $\partial \cD$ of $\cD$ be oriented counterclockwise. Then we have the following lemma.
\begin{lemma}
For each $(x,t)$, the function $\M^{loc}(x,t,z)$ defined in~\eqref{E:Mloc} is an analytic and bounded function of $z \in \cD \setminus X^{\eps}$. Across $X^{\eps}$, $\M^{loc}$ obeys the jump condition $\M^{loc}_+= \M^{loc}_- \V^{loc}$, where the jump matrix $\V^{loc}$ satisfies
\begin{align}\label{E:estVloc}
\begin{cases}
\|\V^{(2)}- \V^{loc} \|_{L^{\infty} (X^{\eps})} \leq C t^{-1/2}\log t,\\
\|\V^{(2)}- \V^{loc} \|_{L^{1}(X^{\eps})} \leq C t^{-1}\log t,
\end{cases} \qquad \xi \in \mathcal{I}_+, \ \ t\geq 2.
\end{align}
Furthermore, as $t \to \infty$,
\be \label{E:bzsy}
\begin{aligned}
&\| (\M^{loc})^{-1}-\bI \|_{L^{\infty}(\partial \cD)}=\mathcal{O}(t^{-1/2}),\\
&(\M^{loc})^{-1}(x,t,z)-\bI=-\frac{\bY_R(\xi,t) \M^{X,R}_{\infty} \bY^{-1}_R(\xi,t)}{\frac{\xi^2}{q_0^2} \sqrt{2t}   (z-z_0)}+\mathcal{O}(t^{-1}),  \quad z \in \partial D_{\eps}(z_0),\\
&(\M^{loc})^{-1}(x,t,z)-\bI=-\frac{\bY_L(\xi,t) \M^{X,L}_{\infty} \bY^{-1}_L(\xi,t)}{\sqrt{2t}  (z-z_1)}+\mathcal{O}(t^{-1}),  \quad z \in \partial D_{\eps}(z_1).
\end{aligned}
\ee
\end{lemma}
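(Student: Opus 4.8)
The plan is to verify the three assertions in turn, exploiting that $\M^{loc}$ is, by construction, a bounded conjugation of the explicitly solvable model solutions $\M^{X,L}$ and $\M^{X,R}$ of Appendix~\ref{AppC}. For the analyticity and boundedness I would note that $\M^{X,L}(x,t,\cdot)$ and $\M^{X,R}(x,t,\cdot)$ are analytic and uniformly bounded on $\cD \setminus X^{\eps}$, while the conjugating factors $\bY_L(\xi,t)$, $\bY_R(\xi,t)$ and their $z$-dependent counterparts appearing in~\eqref{E:Mloc} are bounded uniformly for $\xi \in \mathcal{I}_+$ and $t \geq 2$ by Lemma~\ref{L:Y}. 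Since none of these factors carries a jump across $X^{\eps}$, the relation $\M^{loc}_+ = \M^{loc}_- \V^{loc}$ holds with $\V^{loc}$ the conjugate of the model jump, and $\M^{loc}$ inherits analyticity and boundedness off $X^{\eps}$.

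For the jump estimate~\eqref{E:estVloc} I would work ray by ray on each $X^{R,\eps}_j$ (and symmetrically each $X^{L,\eps}_j$) and isolate the three sources of the discrepancy $\V^{(7)} - \V^{loc}$: (i) the freezing of the reflection coefficient, $\hat{r}_3(z) - \hat{r}_3(z_0) = \mathcal{O}(z - z_0)$ (resp. $\hat{r}_1(z) - \hat{r}_1(z_1)$), by smoothness of the $r_j$; (ii) the deviation $d_1^r - 1$, controlled by $C|z-z_0|(1 + |\ln|z-z_0||)$ through~\eqref{E:d1estr} (resp. $d_1^{\ell} - 1$ via~\eqref{E:d1estrr}); and (iii) the cubic phase remainder $\theta_r(x,t,z) = \mathcal{O}((z-z_0)^3)$ from~\eqref{E:thetabs}. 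Passing to the local variable via $|z - z_0| = \tfrac{q_0^2}{\xi^2\sqrt{2t}}\,|y_r|$ from~\eqref{E:yellr}, each contribution is multiplied by the Gaussian factor $\e^{-c|y_r|^2}$ inherited from the model jump. The logarithm $|\ln|z-z_0|| \sim \tfrac12 \ln t$ at the scale $|z-z_0|\sim t^{-1/2}$ produces the $\ln t$; taking the supremum of $|y_r|\,\e^{-c|y_r|^2}(1+\ln t)$ over the contour gives the $\mathcal{O}(t^{-1/2}\ln t)$ bound in $L^\infty$, and integrating against $\mathrm{d}z = \tfrac{q_0^2}{\xi^2\sqrt{2t}}\,\mathrm{d}y_r$ supplies the additional $t^{-1/2}$ for the $\mathcal{O}(t^{-1}\ln t)$ bound in $L^1$. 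Uniformity in $\xi \in \mathcal{I}_+$ follows because all constants in Lemmas~\ref{L:13} and~\ref{L:Y} are $\xi$-independent.

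For the boundary behavior~\eqref{E:bzsy} I would invoke the large-$|y|$ asymptotics of the model solutions from Appendix~\ref{AppC}, namely $\M^{X,R}(x,t,z) = \bI + \M^{X,R}_\infty / y_r + \mathcal{O}(y_r^{-2})$ (and likewise for $\M^{X,L}$), with $\M^{X,R}_\infty$ given in~\eqref{E:m1Xdef}. On $\partial D_{\eps}(z_0)$ one has $|y_r| = \tfrac{\xi^2}{q_0^2}\sqrt{2t}\,\eps \sim \sqrt{t}$, whence $1/y_r = \mathcal{O}(t^{-1/2})$ and the $L^\infty$ estimate follows at once. Inverting, $(\M^{loc})^{-1} = \bY_R(\xi,z)\,(\M^{X,R})^{-1}\,\bY_R^{-1}(\xi,t)$ with $(\M^{X,R})^{-1} = \bI - \M^{X,R}_\infty / y_r + \mathcal{O}(y_r^{-2})$; on $\partial D_{\eps}(z_0)$ the running conjugator $\bY_R(\xi,z)$ coincides with the frozen $\bY_R(\xi,t)$ up to $\mathcal{O}(t^{-1})$, so substituting $y_r = \tfrac{\xi^2}{q_0^2}\sqrt{2t}(z-z_0)$ produces the displayed expansion near $z_0$; the case $z_1$ is identical with $\M^{X,L}$, $\bY_L$ and $y_\ell = \sqrt{2t}(z-z_1)$.

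The hard part will be the jump estimate of the second step: one must simultaneously track all three error mechanisms on each of the eight rays, confirm that the mild logarithmic blow-up in~\eqref{E:d1estr}--\eqref{E:d1estrr} is exactly absorbed by the Gaussian decay of the model jump, and preserve uniformity over $\xi \in \mathcal{I}_+$. The remaining steps are comparatively routine given the explicit model solutions of Appendix~\ref{AppC} and the bounds already recorded in Lemmas~\ref{L:13} and~\ref{L:Y}.
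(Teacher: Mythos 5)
Your proposal is correct and follows essentially the same route as the paper's proof: reduce $\V^{(7)}-\V^{loc}$ to $\bY_{L/R}\left(\V^{loc}-\V^{X,L/R}\right)\bY_{L/R}^{-1}$ using the boundedness from Lemma~\ref{L:Y}, estimate each ray via the decomposition into the frozen-coefficient error $\hat{r}_j(z)-\hat{r}_j(z_\bullet)$ and the $d_1-1$ error bounded by~\eqref{E:d1estr}--\eqref{E:d1estrr}, absorb the logarithm into the Gaussian factor to get the $t^{-1/2}\ln t$ and $t^{-1}\ln t$ bounds, and then obtain~\eqref{E:bzsy} from the model asymptotics~\eqref{E:mXasy} since $|y_\ell|, |y_r| \sim \sqrt{t}$ on $\partial\cD$. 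Your explicit tracking of the cubic phase remainder $\theta_r=\mathcal{O}((z-z_0)^3)$ on the right disk is a detail the paper subsumes under ``a similar argument can be applied to the remaining jumps,'' so it is a welcome completion rather than a departure.
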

\begin{proof}
We have
$$
\V^{(2)}-\V^{loc}=\begin{cases}
\bY_L\left( \tilde{\V}- \V^{X,L} \right)(\bY_{L})^{-1}, & z \in  D_{\eps}(z_1) \cap X^{\eps} ,\\
\bY_R\left( \tilde{\V}- \V^{X,R} \right)(\bY_{R})^{-1}, & z \in  D_{\eps}(z_0) \cap  X^{\eps},
\end{cases}
$$
where $\tilde{\V}=\bY^{-1}\tilde{\V} \bY$, and $\V^{X,L}$ and $\V^{X,R}$ denote the jump matrices corresponding to
$\M^{X,L}$ and $\M^{X,R}$, respectively.
Lemma~\ref{L:Y} demonstrates that both $\bY_L$ and $\bY_R$ are bounded, and thus the proof of~\eqref{E:estVloc} reduces to estimating $\tilde{\V}- \V^{X,L}$ and $\tilde{\V}- \V^{X,R}$. We provide estimates for the $L^1$ and $L^{\infty}$ norms of the jump matrix $\tilde{\V}-\V^{X,L}$ on the contour $\Sigma^{(2)}_8 \cap D_{\eps}(z_1)$. Let $X^{\eps}_8:=\Sigma^{(2)}_8 \cap D_{\eps}(z_1)$.  On this contour, only the $(2,1)$ element of the matrix $\tilde{\V}- \V^{X,L}$ is nonzero.   From \eqref{E:d1estrr}, $(\tilde{\V}- \V^{X,L})_{21}$ can be estimated as
$$
\begin{aligned}
\left|(\tilde{\V}- \V^{X,L})_{21}\right| &=\left|-\hat{r}_1(z)d_1^{\ell}(z)\e^{2\ii \nu \log y_{\ell}} \e^{-\frac{\ii}{2} y_{\ell}^2}+ \hat{r}_1(z_1)\e^{2\ii \nu \log y_{\ell}} \e^{-\frac{\ii}{2} y_{\ell}^2} \right|\\
& \leq \left|\e^{2 \ii \nu \log_{\pi}y_{\ell} }\right| \left|\left(d_{1}^{\ell}(z)-1 \right)\hat{r}_1(z)+\left(\hat{r}_1(z)-\hat{r}_1(z_1)\right) \right|  \left |\e^{-\frac{\ii}{2} y_{\ell}^2 } \right| \\
&\leq C |z-z_1|  \left(1+|(\log |z-z_1|)| \right) e^{-ct|z-z_1|^2}, \qquad z \in X^{\eps}_8.
\end{aligned}
$$
Hence, we have
$$
\| (\tilde{\V}- \V^{X,L})_{21} \|_{ L^{\infty}(X^{\eps}_8)}  \leq C \sup_{u \geq 0} u (1+|\log u|)\e^{-ct u^2} \leq C t^{-1/2} \log t
$$
and
$$
\|(\tilde{\V}- \V^{X,L})_{21} \|_{ L^{1}(X^{\eps}_8)}  \leq C \int_{0}^{\infty}u (1+ |\log u|) \e^{-ct u^2} \mathrm{d}u  \leq C t^{-1} \log t.
$$
A similar argument can be applied to the remaining jumps. This completes the proof of~\eqref{E:estVloc}.

The variables $y_{\ell}$ and $y_r$ goes to infinity as $t \to \infty$ if $z \in  \partial  D_{\epsilon}(z_0)$  and $ z \in \partial  D_{\epsilon}(z_1)$, respectively. This follows because
$$
|y_{\ell}|=\sqrt{2t}|z-z_1|, \qquad
| y_{r}|= \frac{\xi^2}{q_0^2}\sqrt{2t}|z-z_0|.
$$
 Recalling the definition~\eqref{E:Mloc}  of $\M^{loc}$ and applying~\eqref{E:mXasy}   together with the residue theorem, we obtain~\eqref{E:bzsy}.

\end{proof}

\subsection{Final transformation: small-norm RH problem}\label{smallRhP}
We define the final transformation to obtain a small-norm RH problem as follows:
\be \label{E:defE}
\cE(x,t,z)=\begin{cases}
\M^{(2)}(x,t,z)(\M^{out})^{-1}(x,t,z), & z\in \C \setminus \cD,\\
\M^{(2)}(x,t,z)(\M^{loc})^{-1}(x,t,z)(\M^{out})^{-1}(x,t,z), &z \in \cD.
\end{cases}
\ee
We will show that $\cE(x,t,z)$ is close to $\bI$ for large $t$ and $\xi \in \mathcal{I}_{+}$.  We first note that although $\M^{(2)}(x,t,z)$ has a singularity at $0$ and  $(\M^{out})^{-1}(x,t,z) $ has singularities at the branch points  $\pm q_0$, the function  $\cE(x,,t,z)$ is well-defined at these points.
\begin{lemma}\label{L:zyyl}
$\cE(x,t,z)=\mathcal{O}(1)$ as $z \to z_{\star}$, where $z_{\star} \in \{0, q_0,-q_0 \}$.
\end{lemma}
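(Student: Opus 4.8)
The plan is to dispose of the three points one at a time, using throughout that for $\xi\in\mathcal{I}_+\subset(0,q_0)$ the saddle points $z_1=\xi$ and $z_0=q_0^2/\xi$ remain a fixed positive distance from $\{0,\pm q_0\}$; choosing $\eps$ small, none of $0,\pm q_0$ lies in $\cD$, so near each of them $\cE=\M^{(7)}(\M^{out})^{-1}$ and it suffices to bound this product. The backbone is the determinant identity $\det\M^{out}(x,t,z)=\gamma(z)$: from~\eqref{E:Mout} one has $\det\M_\infty=1$ and $\det\bP\equiv1$, while $\det\M_{msol}=\gamma$ follows by the argument of Lemma~\ref{L:wyxRH1} (the functional equation $\det\M_{msol}(z)=\det\M_{msol}(\hat z)\,\det\bPi(z)$ forced by the symmetry~\eqref{E:rhp5-3}, together with the normalizations at $0$ and $\infty$, pins down $\det\M_{msol}=\gamma$). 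Hence $(\M^{out})^{-1}=\gamma^{-1}\,\mathrm{adj}\,\M^{out}$, with $\mathrm{adj}\,\M^{out}$ analytic near each $z_\star$.

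At the branch points $z_\star=\pm q_0$ the factor $\gamma$ has a simple zero, so $(\M^{out})^{-1}$ has at worst a simple pole, and I would show that $\M^{(7)}\,\mathrm{adj}\,\M^{out}$ has a matching simple zero that cancels it. Since $\bP$ obeys $\bPi^{-1}(z)\bP(\hat z)\bPi(z)=\bP(z)$, the outer parametrix inherits the symmetry $\M^{out}(z)=\M^{out}(\hat z)\bPi(z)$; evaluating at the fixed point $\hat z=z=\pm q_0$ gives $\M^{out}(\pm q_0)\bigl(\bI-\bPi(\pm q_0)\bigr)=0$, which says exactly that $v_\pm:=(1,0,\mp\ii)^\top$ spans the right kernel of $\M^{out}(\pm q_0)$. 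As $\det\M^{out}(\pm q_0)=0$ with rank two, $\mathrm{adj}\,\M^{out}(\pm q_0)$ is rank one and each of its columns is a multiple of $v_\pm$. On the other hand, Lemma~\ref{L:M6pmq0}, combined with $\N\equiv\bI$ near $\pm q_0$ (these points lie outside the wedges $U_j$, so $\M^{(7)}=\M^{(6)}$ there), yields $\M^{(7)}_1(\pm q_0)=\pm\ii\,\M^{(7)}_3(\pm q_0)$, i.e. $\M^{(7)}(\pm q_0)\,v_\pm=0$. Therefore $\M^{(7)}(\pm q_0)\,\mathrm{adj}\,\M^{out}(\pm q_0)=0$, and by continuity $\M^{(7)}\,\mathrm{adj}\,\M^{out}=\mathcal{O}(z\mp q_0)$, so the quotient by $\gamma$ is $\mathcal{O}(1)$.

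At the origin the situation is dual: $\gamma$ blows up while $\M^{(7)}$ and $\M^{out}$ each carry a simple pole $\tfrac{\ii}{z}K+\mathcal{O}(1)$ with the \emph{same} residue $K$, since $\M^{out}$ is built to share the $z\to0$ asymptotics of $\M^{(7)}$. Writing $\M^{out}=\tfrac1z\bigl(\ii K+zR(z)\bigr)$ with $R$ analytic, the identity $\det\M^{out}=\gamma$ gives $\det(\ii K+zR)=z^{3}\gamma(z)=z(z^{2}-q_0^{2})$, a simple zero at $0$; hence $(\M^{out})^{-1}=z(\ii K+zR)^{-1}$ is bounded with $\lim_{z\to0}(\M^{out})^{-1}=-q_0^{-2}\,\mathrm{adj}(\ii K)$. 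The only potentially singular contribution to $\cE$ is $\tfrac{\ii}{z}K(\M^{out})^{-1}$, and because $K\,\mathrm{adj}(\ii K)=-K\,\mathrm{adj}(K)=-\det(K)\,\bI=0$ (the residue $K$ has rank two), the function $K(\M^{out})^{-1}$ is analytic and vanishes at $0$, i.e. $\mathcal{O}(z)$. Thus $\tfrac{\ii}{z}K(\M^{out})^{-1}=\mathcal{O}(1)$ and $\cE=\mathcal{O}(1)$ as $z\to0$ as well.

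The determinant bookkeeping and the Laurent expansions are routine; the genuinely load-bearing step, which I expect to be the main obstacle, is matching the singular geometry of $\M^{(7)}$ to that of $\M^{out}$. At $\pm q_0$ this means verifying that $\mathrm{adj}\,\M^{out}$ has its columns precisely along the kernel direction $v_\pm$ annihilated by $\M^{(7)}$, which rests on $\N\equiv\bI$ near the branch points and on Lemma~\ref{L:M6pmq0}; at $0$ it means knowing that $\M^{(7)}$ and $\M^{out}$ share the identical rank-two residue $K$, so that the algebraic identity $K\,\mathrm{adj}(K)=\det(K)\,\bI=0$ applies. Establishing $\N\equiv\bI$ near $\pm q_0$ and the exact equality of the origin residues are the points demanding the most care.
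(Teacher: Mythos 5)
Your branch-point argument is correct and is, in substance, the paper's own proof rewritten in determinant/adjugate/kernel language: the paper derives from the symmetry $\M_{msol}(x,t,z)=\M_{msol}(x,t,\hat z)\bPi(z)$ that the singular part of $\M_{msol}^{-1}$ at $q_0$ is $\frac{1}{z-q_0}(1,0,-\ii)^{\top}(\beta_1,\beta_2,\beta_3)$ (see \eqref{E:Msolnq0}), which is exactly your statement that the range of the singular part of the outer model's inverse is spanned by $v_+$, and then cancels it against $\mathbf{W}_1(q_0)=\ii\,\mathbf{W}_3(q_0)$ from Lemma~\ref{L:M6pmq0}, which is your $\M^{(7)}(q_0)v_+=0$. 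Two cosmetic corrections: the bound $\M^{(7)}\,\mathrm{adj}\,\M^{out}=\mathcal{O}(z\mp q_0)$ follows from analyticity of this product near $\pm q_0$, not from ``continuity''; and your rank-two claims are dispensable, since if the rank at $\pm q_0$ were lower the adjugate would vanish identically and the estimate would be trivial.

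The genuine gap is at the origin. Your cancellation $K\,\mathrm{adj}(\ii K)=\det(K)\bI=0$ requires the residue of $\M^{(7)}$ at $0$ to coincide \emph{exactly} with the residue of $\M^{out}$, and this is not ``by construction'': RH problem~\ref{rhp:Msol} prescribes for $\M_{msol}$ the $z\to 0$ behavior $\frac{\ii}{z}\M_0$ of the \emph{original} $\M$, not of $\M^{(7)}$, and the chain $\M\mapsto\M^{(7)}$ alters the residue at $0$ through the diagonal factors $\Del$, $\bT=\tilde{\Del}\bP$, the prefactors $(\Del^{\infty})^{-1}$, $(\tilde{\Del}^{\infty})^{-1}$, and the lens matrices $\G,\bH,\F,\bD,\N$, whose limits at $0$ are not the identity. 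The equality does hold, but proving it means verifying that rows $1$ and $3$ of the $z\to0$ limits of all right multipliers are identity rows (using $r_2,r_3,\tilde{r}_2,\tilde{r}_3\to 0$, $1/\gamma(z)=\mathcal{O}(z^2)$, and the decay of the exponentials in the adjacent regions) and that the constants $\delta(0)$, $\delta_1(0)$, $\tdel(0)=1$, $\cP_1(0)$ cancel exactly — none of which you carry out; you flag it as the hard point and stop there. Note, moreover, that full residue equality is more than you need: since the residue of $\M^{out}$ has only $(1,3)$ and $(3,1)$ entries, $\mathrm{adj}(\ii K)$ is supported on its $(2,2)$ entry, so the singular product vanishes as soon as the \emph{second column} of the residue of $\M^{(7)}$ at $0$ is zero. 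That weaker fact is precisely the paper's \eqref{E:asyW}; it requires only the vanishing of the $(1,2)$ and $(3,2)$ entries of the transformation limits, and once you substitute it for the residue-matching claim, your argument and the paper's (which pairs \eqref{E:asyW} with the second-row structure \eqref{E:asyMsoln} of $\M_{msol}^{-1}$ at the origin) become the same proof.
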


\begin{proof}
Near $z_\star\in\{0,q_0,-q_0\}$, we have
\[
\cE(x,t,z)
=
\mathbf W(x,t,z)\M_{sol}^{-1}(x,t,z)\M_\infty \bP(\infty),
\qquad
\mathbf W(x,t,z):=\M^{(2)}(x,t,z)\bP^{-1}(z).
\]
Since $\M_\infty$ and $\bP(\infty)$ are constant matrices, it is enough to
show that the product $\mathbf W(x,t,z)\M_{sol}^{-1}(x,t,z)$ is bounded at
these points.

We first consider the point $z=0$. By tracing the transformations from
$\M$ to $\M^{(2)}$ and using the fact that $\bP(z)$ is diagonal and
bounded as $z\to0$, one obtains
\be \label{E:asyW}
\mathbf W(x,t,z)
=
\frac{1}{z}
\bpm
\star & 0 & \star\\
\star & 0 & \star\\
\star & 0 & \star
\epm
+\mathcal O(1),
\qquad z\to0,
\ee
where $\star$ denotes an unspecified entry. On the other
hand, the expansion of the pure-soliton model near the origin gives
\be \label{E:asyMsoln}
\M_{sol}^{-1}(x,t,z)
=
\bpm
0&0&0\\
\star&\star&\star\\
0&0&0
\epm
+\mathcal O(z),
\qquad z\to0.
\ee
Multiplying~\eqref{E:asyW} and~\eqref{E:asyMsoln}, the possible $z^{-1}$
singularity vanishes.
Therefore $\cE(x,t,z)=\mathcal O(1)$ as $z\to0$.

It remains to consider the branch points. We only discuss $z=q_0$, since
the argument at $z=-q_0$ is identical. Since $\hat q_0=q_0$, the diagonal
matrix $\bP^{-1}$ satisfies
$
(\bP^{-1})_{11}(q_0)=(\bP^{-1})_{33}(q_0).
$
Moreover,  Lemma~\ref{L:M6pmq0} implies
$
\M^{(2)}_1(x,t,q_0)=\ii \M^{(2)}_3(x,t,q_0),
$
and hence
$
\mathbf W_1(x,t,q_0)=\ii \mathbf W_3(x,t,q_0).
$
Writing
$
\mathbf W_3(x,t,q_0)
=
\bigl(\alpha_1(x,t),\alpha_2(x,t),\alpha_3(x,t)\bigr)^{\top}
$,
we obtain the local expansion
\be \label{E:Wasyq0}
\mathbf W(x,t,z)
=
\bpm
\ii\alpha_1 & \star & \alpha_1\\
\ii\alpha_2 & \star & \alpha_2\\
\ii\alpha_3 & \star & \alpha_3
\epm
+\mathcal O(z-q_0),
\qquad z\to q_0.
\ee
On the other hand, the symmetry
$
\M_{sol}(x,t,z)=\M_{sol}(x,t,\hat z)\bPi(z)
$
implies that $\M_{sol}^{-1}$ has the following local structure:
\be \label{E:Msolnq0}
\M_{sol}^{-1}(x,t,z)
=
\frac{1}{z-q_0}
\bpm
\beta_1&\beta_2&\beta_3\\
0&0&0\\
-\ii\beta_1&-\ii\beta_2&-\ii\beta_3
\epm
+\mathcal O(1),
\qquad z\to q_0,
\ee
where $\beta_1,\beta_2,\beta_3$ are functions of $x$ and $t$. By combining~\eqref{E:Wasyq0} and~\eqref{E:Msolnq0}, it follows that $\cE(x,t,z)=\mathcal{O}(1)$ as  $z \to q_0$.
\end{proof}

\begin{figure}
\centering
\begin{tikzpicture}[
    scale=0.9,
    line cap=round,
    line join=round,
    >=latex,
    contour/.style={line width=2pt, draw=black!70},
    arrowcontour/.style={
        contour,
        postaction={decorate},
        decoration={
            markings,
            mark=at position 0.56 with {\arrow{>}}
        }
    },
    dasharrow/.style={
        line width=1.5pt,
        draw=black!70,
        dashed,
        postaction={decorate},
        decoration={
            markings,
            mark=at position 0.56 with {\arrow{>}}
        }
    },
    circlearrow/.style={
        contour,
        postaction={decorate},
        decoration={
            markings,
            mark=at position 0.25 with {\arrow{>}}
        }
    },
    regionlabel/.style={font=\normalsize},
    every node/.style={font=\normalsize}
]

\coordinate (O)  at (-3.20,0);
\coordinate (A)  at (-4.95,1.55);
\coordinate (B)  at (-1.45,1.55);
\coordinate (C)  at (-4.95,-1.55);
\coordinate (D)  at (-1.45,-1.55);

\coordinate (Z1) at (0.45,0);
\coordinate (T)  at (2.30,1.55);
\coordinate (S)  at (2.30,-1.55);
\coordinate (Z0) at (4.15,0);

\coordinate (E)  at (5.25,0.92);
\coordinate (F)  at (5.25,-0.92);

\draw[arrowcontour] (-6.55,1.55) -- (A);
\draw[arrowcontour] (A) -- (B);
\draw[arrowcontour] (B) -- (T);
\draw[arrowcontour] (T) -- (7.55,1.55);

\draw[arrowcontour] (-6.55,-1.55) -- (C);
\draw[arrowcontour] (D) -- (S);
\draw[arrowcontour] (S) -- (7.55,-1.55);

\draw[arrowcontour] (E) -- (7.55,0.92);
\draw[arrowcontour] (F) -- (7.55,-0.92);


\draw[arrowcontour]
    (A) .. controls (-4.95,0.82) and (-4.25,0.14) .. (O);

\draw[arrowcontour]
    (O) .. controls (-2.15,0.14) and (-1.45,0.82) .. (B);

\draw[arrowcontour]
    (C) .. controls (-4.95,-0.82) and (-4.25,-0.14) .. (O);

\draw[arrowcontour]
    (O) .. controls (-2.15,-0.14) and (-1.45,-0.82) .. (D);

\draw[arrowcontour] (B) -- (Z1);
\draw[arrowcontour] (D) -- (Z1);
\draw[arrowcontour] (Z1) -- (T);
\draw[arrowcontour] (Z1) -- (S);

\draw[arrowcontour] (T) -- (Z0);
\draw[arrowcontour] (S) -- (Z0);
\draw[arrowcontour] (Z0) -- (E);
\draw[arrowcontour] (Z0) -- (F);

\fill (O) circle (3.0pt);
\fill (Z1) circle (3.0pt);
\fill (Z0) circle (3.0pt);

\node[below=6pt] at (O) {$0$};
\node[below=6pt] at (Z1) {$z_1$};
\node[below=6pt] at (Z0) {$z_0$};


\def\rsmall{0.72}
\def\gapstart{29}
\def\gapend{90}
\def\arrstart{35}
\def\arrend{90}
\draw[contour]
    ($(Z1)+({\rsmall*cos(\gapend)},{\rsmall*sin(\gapend)})$)
    arc[start angle=\gapend,end angle=360,radius=\rsmall];

\draw[contour]
    ($(Z1)+(\rsmall,0)$)
    arc[start angle=0,end angle=\gapstart,radius=\rsmall];

\draw[->, >=latex, line width=2.0pt, draw=black!70]
    ($(Z1)+({\rsmall*cos(\arrstart)},{\rsmall*sin(\arrstart)})$)
    arc[start angle=\arrstart,end angle=\arrend,radius=\rsmall];
\draw[contour]
    ($(Z0)+({\rsmall*cos(\gapend)},{\rsmall*sin(\gapend)})$)
    arc[start angle=\gapend,end angle=360,radius=\rsmall];

\draw[contour]
    ($(Z0)+(\rsmall,0)$)
    arc[start angle=0,end angle=\gapstart,radius=\rsmall];

\draw[->, >=latex, line width=2.0pt, draw=black!70]
    ($(Z0)+({\rsmall*cos(\arrstart)},{\rsmall*sin(\arrstart)})$)
    arc[start angle=\arrstart,end angle=\arrend,radius=\rsmall];

\end{tikzpicture}

\caption{The jump contour $\Sigma^{\mathcal{E}}$ for $\cE(x,t,z)$.}
\label{fig:contoursigmaE}
\end{figure}

\begin{lemma}\label{L:lsqd}
The function $\cE(x,t,z)$ is analytic at every point of
$
\rZ=\{\zeta_j\}_{j=0}^{N-1}\cup\{\zeta_j^*\}_{j=0}^{N-1}.
$
\end{lemma}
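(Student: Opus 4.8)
The plan is to use the property—recorded just after the definition~\eqref{E:Mout}—that the outer parametrix $\M^{out}$ satisfies exactly the same residue conditions as $\M^{(7)}$ at every $\zeta_j$ (and, through the Schwarz symmetry, at every $\zeta_j^*$), and to show that the resulting simple poles of the two factors cancel in the product. Since each $\zeta_j$ lies on the circle $C_0=\{|z|=q_0\}$ and hence outside the disks $\cD$ surrounding $z_0$ and $z_1$, near any point of $\rZ$ we simply have $\cE=\M^{(7)}(\M^{out})^{-1}$, so only this product must be examined.

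First I would encode the residue data as nilpotent matrices. Writing $E_{pq}$ for the elementary matrix carrying a single $1$ in entry $(p,q)$, set $N_j:=\alpha_j E_{13}$ for $j\in\nabla^+$ and $N_j:=\beta_j E_{31}$ for $j\in\nabla^-$; in both cases $N_j^2=0$. With this notation the conditions~\eqref{E:lstjN1+}--\eqref{E:lstjN1-} take the uniform form $\mathrm{Res}_{z=\zeta_j}\M^{(7)}=\lim_{z\to\zeta_j}\M^{(7)}N_j$ (recall $\M^{(7)}=\M^{(6)}$ near $\zeta_j$), and the same identity holds with $\M^{out}$ in place of $\M^{(7)}$. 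The core step is the local factorization near $\zeta_j$,
\[
\M^{(7)}=\widehat{\M}^{(7)}\Bigl(\bI+\tfrac{N_j}{z-\zeta_j}\Bigr),\qquad
\M^{out}=\widehat{\M}^{out}\Bigl(\bI+\tfrac{N_j}{z-\zeta_j}\Bigr),
\]
in which a direct check using $\frac{N_j}{z-\zeta_j}N_j=0$ shows that the prescribed residues are reproduced and that $\widehat{\M}^{(7)}$ and $\widehat{\M}^{out}$ are analytic at $\zeta_j$. Inverting the second identity via $(\bI+N_j/(z-\zeta_j))^{-1}=\bI-N_j/(z-\zeta_j)$ and combining yields
\[
\cE=\M^{(7)}(\M^{out})^{-1}
=\widehat{\M}^{(7)}\Bigl(\bI+\tfrac{N_j}{z-\zeta_j}\Bigr)\Bigl(\bI-\tfrac{N_j}{z-\zeta_j}\Bigr)(\widehat{\M}^{out})^{-1}
=\widehat{\M}^{(7)}(\widehat{\M}^{out})^{-1},
\]
the middle product collapsing because $N_j^2=0$.

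It then remains to justify that $(\widehat{\M}^{out})^{-1}$ is analytic at $\zeta_j$, i.e.\ that $\widehat{\M}^{out}$ is invertible there. Since $\det(\bI\pm N_j/(z-\zeta_j))\equiv1$, we have $\det\widehat{\M}^{out}=\det\M^{out}=\det\M_{msol}$ (using $\det\M_\infty=\det\bP(z)=1$), and this equals $\gamma(z)$ by the determinant computation of Lemma~\ref{L:wyxRH1} applied to RH problem~\ref{rhp:Msol}; because $\zeta_j\in C_0\setminus\{\pm q_0\}$ we have $\gamma(\zeta_j)\neq0$, so $\widehat{\M}^{out}$ is invertible near $\zeta_j$ and $\cE=\widehat{\M}^{(7)}(\widehat{\M}^{out})^{-1}$ is analytic at $\zeta_j$. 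For the conjugate points $\zeta_j^*$ the residue conditions are induced from those at $\zeta_j$ by the Schwarz symmetry~\eqref{E:RHP11} (carried through all the transformations and inherited by $\M_{msol}$); they again coincide for $\M^{(7)}$ and $\M^{out}$ with a transposed nilpotent residue matrix, and the identical computation gives analyticity at each $\zeta_j^*$ (alternatively, one notes that $\cE$ inherits a symmetry relating $z$ and $z^*$, so analyticity at $\zeta_j$ forces it at $\zeta_j^*$). The main obstacle is not the cancellation itself—which is automatic once $N_j^2=0$—but the verification underlying the cited claim that the residue coefficients of $\M^{out}$ match $\alpha_j$ and $\beta_j$; this rests on balancing the zeros and poles of $\cP_1(z)$ and $\cP_1(\hat z)$ against the residues of $\M_{msol}$, together with the definitions~\eqref{E:alphaj} and~\eqref{E:betaj}.
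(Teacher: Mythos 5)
Your proposal is correct, and its skeleton agrees with the paper's: near each $\zeta_j$ (which indeed lies outside $\cD$) one reduces to the product $\M^{(7)}(\M^{out})^{-1}$, invokes the matching of residue data between the two factors, and uses non-vanishing of a determinant to invert the regular part. Where you genuinely diverge is in the mechanism of the cancellation. The paper first peels off $\bP$, writing $\cE=\M^{(7)}\bP^{-1}\M_{msol}^{-1}\M_{\infty}$ so that both $\M^{(7)}\bP^{-1}$ and $\M_{msol}$ satisfy residue conditions of the \emph{original} form \eqref{E:mlstjzc}, and then cites the explicit computation in Appendix~\ref{AppBproof}: Laurent-expand both factors at $\zeta_j$, compute $\M_{msol}^{-1}$ through the adjugate (cross-product) formula divided by $\det\M_{msol}=\gamma(z)$, and verify by hand that the singular coefficients of the product vanish. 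You instead work directly with $\M^{(7)}$ and $\M^{out}$ in their twisted residue form (Lemma~\ref{L:mlstj}), extract from each the unipotent factor $\bI+N_j/(z-\zeta_j)$ with $N_j^2=0$ (your verification that $\widehat{\M}^{(7)}$, $\widehat{\M}^{out}$ are regular is exactly the standard Blaschke--Potapov/dressing-factor argument, using that the residue matrix annihilates $N_j$), so that the cancellation in $\cE=\widehat{\M}^{(7)}(\widehat{\M}^{out})^{-1}$ is automatic; the adjugate computation is then replaced by the cleaner determinant identity $\det\M^{out}=\det\M_{msol}=\gamma(z)$, which, as you say, follows since $\det\M_{\infty}=\det\bP\equiv 1$ and the Liouville argument in the proof of Lemma~\ref{L:wyxRH1} carries over verbatim to RH problem~\ref{rhp:Msol}, with $\gamma(\zeta_j)\neq 0$ because $\zeta_j\in C_0$ is non-real. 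What each approach buys: the paper's route recycles an already-written appendix computation at the cost of an extra conjugation by $\bP$ and an opaque entry-by-entry check; your route is more conceptual, makes the nilpotency the visible reason for the cancellation, and localizes all analytic input in the single determinant identity. Finally, note that both proofs rest on the same input stated without detailed proof in subsection~\ref{sub:lG} — that $\M^{out}$ carries exactly the residue constants $\alpha_j$, $\beta_j$ of \eqref{E:alphaj}--\eqref{E:betaj} — and your closing paragraph correctly isolates this as the bookkeeping of zeros and poles of $\cP_1(z)$ and $\cP_1(\hat z)$ against the residues of $\M_{msol}$; you are on equal footing with the paper there, and your treatment of the conjugate points $\zeta_j^*$ (transposed nilpotent, or the inherited Schwarz symmetry) is likewise consistent with what the paper leaves implicit.
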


\begin{proof}
Near any point of $\rZ$, we have
$$
\cE(x,t,z)
=
\M^{(2)}(x,t,z)\bP^{-1}(z)
\M_{sol}^{-1}(x,t,z)\M_\infty\bP(\infty).
$$
By construction, $\M^{(2)}\bP^{-1}$ and $\M_{sol}$ satisfy the same residue
conditions at the points of $\rZ$. Hence the possible poles cancel in the
product $\M^{(2)}\bP^{-1}\M_{sol}^{-1}$. The argument is identical to that in Appendix~\ref{AppBproof}. Therefore all singularities of
$\cE$ on $\rZ$ are removable.
\end{proof}

 Let  $\Sigma^{\xcE}=\Sigma^{(2)} \cup \partial \cD$~(see Fig.~\ref{fig:contoursigmaE}),  and define the matrix-valued function $\V^{\xcE}(x,t,z)$ for $z \in \Sigma^{\xcE}$ as follows:
\be \label{E:VEex}
\V^{\xcE}(x,t,z)=\begin{cases}
\M^{out} \V^{(2)}(\M^{out})^{-1},  & z \in \Sigma^{\xcE} \setminus \bar{\cD},\\
\M^{out}(\M^{loc})^{-1}(\M^{out})^{-1}, & z\in \partial \cD,\\
\M^{out}\left( \M^{loc}_- \V^{(7)} (\M^{loc}_+)^{-1} \right)(\M^{out})^{-1}, & z \in  X^{\eps}.
\end{cases}
\ee
A direct verification shows that  $\cE(x,t,z)$  satisfies the following RH problem:
\begin{RHP}\label{rhp:E}
Find a $3 \times 3$ matrix-valued function $\cE(x,t,z)$ with the following properties:
\bi
\item $\cE(x,t,\cdot) : \mathbb{C}\setminus \Sigma^{\xcE}  \to \mathbb{C}^{3 \times 3}$ is analytic.
\item $\cE_+(x,t,z)=\cE_-(x,t,z) \V^{\xcE}(x,t,z), \qquad z \in \Sigma^{\xcE}.$
\item  $\cE(x,t,z)$ admits the following asymptotic behavior
$$  \cE(x,t,z)=\bI+\mathcal{O}(\frac{1}{z}), \qquad z \to \infty.$$

\item   $\cE(x,t,z)=\mathcal{O}(1)$ as $z \to z_{\star}$, where $z_{\star} \in \{0, q_0,-q_0 \} $.
\ei
\end{RHP}

\begin{lemma}\label{L:estWE}
Let $\w^{\xcE}=\V^{\xcE}-\bI$. The following estimates hold uniformly for $t \geq 2$ and $\xi \in \mathcal{I}_+$:
\begin{align}
&\|\w^{\xcE} \|_{ L^{\infty}(\Sigma^{\xcE} \setminus \bar{\cD)}} \leq C t^{-1/2}, \label{E:estwE1bc}\\
&\|\w^{\xcE} \|_{L^1 (\Sigma^{\xcE} \setminus \bar{\cD)}} \leq C t^{-1}, \label{E:estwE1}\\
& \|\w^{\xcE} \|_{L^1 \cap L^{\infty}(\partial \cD)} \leq C t^{-1/2},\label{E:estwE2} \\
& \|\w^{\xcE} \|_{L^1 (X^{\epsilon } )} \leq C t^{-1}\log t,\label{E:estwE3}\\
& \|\w^{\xcE} \|_{L^{\infty} (X^{\epsilon })} \leq C t^{-1/2}\log t. \label{E:estwE4}
\end{align}
\end{lemma}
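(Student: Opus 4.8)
The plan is to handle the three pieces of $\Sigma^{\xcE}$ appearing in \eqref{E:VEex} one at a time, reducing on each piece the bound for $\w^{\xcE}=\V^{\xcE}-\bI$ to an estimate already in hand: \eqref{E:gjV7} for $\V^{(7)}-\bI$, \eqref{E:bzsy} for $(\M^{loc})^{-1}-\bI$ on $\partial\cD$, and \eqref{E:estVloc} for $\V^{(7)}-\V^{loc}$ on $X^{\eps}$. The organizing principle is that $\M^{out}$ conjugates each of these quantities, and since $\M^{out}$ and $(\M^{out})^{-1}$ are uniformly bounded for $z\in\C\setminus D_{sig}$ and $\xi\in\mathcal{I}_+$, such conjugation costs only a multiplicative constant wherever the contour stays outside $D_{sig}$. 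Because the saddle points obey $z_1=\xi\ge m_0>0$ and $z_0=q_0^2/\xi>q_0$, for small $\eps$ the disks $D_{\eps}(z_0)$ and $D_{\eps}(z_1)$—hence $\partial\cD$ and $X^{\eps}$—lie a positive distance from the origin and from $\rZ$, i.e. outside $D_{sig}$. Thus the only location where bounded conjugation fails is the origin, which sits on $\Sigma^{(7)}_{\{9,10,11,12\}}\subset\Sigma^{\xcE}\setminus\bar{\cD}$; this is the sole point needing real work.

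For the routine pieces I would argue as follows. On $\partial\cD$ one has $\w^{\xcE}=\M^{out}\bigl((\M^{loc})^{-1}-\bI\bigr)(\M^{out})^{-1}$, so boundedness of $\M^{out}$ and \eqref{E:bzsy} give $\|\w^{\xcE}\|_{L^{\infty}(\partial\cD)}\le Ct^{-1/2}$; since $\partial\cD$ has finite length the $L^1$ bound follows at once, establishing \eqref{E:estwE2}. On $X^{\eps}$, using $\M^{loc}_+=\M^{loc}_-\V^{loc}$ I would rewrite
\[
\M^{loc}_-\V^{(7)}(\M^{loc}_+)^{-1}-\bI=\M^{loc}_-\,(\V^{(7)}-\V^{loc})(\V^{loc})^{-1}\,(\M^{loc}_-)^{-1},
\]
so that, all of $\M^{out}$, $\M^{loc}_-$, $\V^{loc}$ and their inverses being bounded on $X^{\eps}$, the bound for $\w^{\xcE}$ reduces to \eqref{E:estVloc}, yielding \eqref{E:estwE3} and \eqref{E:estwE4} with the attendant $\ln t$. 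Finally, on the part of $\Sigma^{\xcE}\setminus\bar{\cD}$ bounded away from the origin, Lemma~\ref{L:yzx} shows $\V^{(7)}-\bI$ is exponentially small and bounded conjugation preserves this, so those contributions are $\mathcal{O}(\e^{-ct})$ in both norms.

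The hard part is controlling $\w^{\xcE}$ near the origin, where $\M^{out}=\M_{\infty}^{-1}\M_{msol}\bP$ inherits the simple pole of $\M_{msol}$ and the naive conjugation estimate breaks down. Here I would write $\w^{\xcE}=\M_{\infty}^{-1}\M_{msol}\bigl[\bP(\V^{(7)}-\bI)\bP^{-1}\bigr]\M_{msol}^{-1}\M_{\infty}$; since $\bP$ is diagonal and bounded invertible near $0$, the bracketed factor retains the sparsity pattern and the $\mathcal{O}(|z|)$ bound (after stripping $\e^{\pm\Theta}$) of $\V^{(7)}-\bI$ furnished by Lemma~\ref{L:estofo}. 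It then remains to check that the singular factors $\M_{msol}$ and $\M_{msol}^{-1}$ do not amplify this bound. This is precisely the cancellation mechanism exploited in Lemma~\ref{L:zyyl}: the $1/z$ part of $\M_{msol}$ is $\frac{\ii}{z}\M_0$, whose second column vanishes by \eqref{E:M0infty}, while $\M_{msol}^{-1}$ carries only its middle row at leading order by \eqref{E:asyMsoln}, so the would-be $1/z$ terms either cancel outright or are met by the $\mathcal{O}(|z|)$ vanishing of the jump. A short bookkeeping of the surviving entries then shows that, as in the proof of Lemma~\ref{L:yzx}, $\w^{\xcE}$ is dominated near $0$ by $|z|\,\e^{t\Re\phi_{21}}$ along the relevant contour; invoking the geometric relation $(\Re z)^2+(\Im z)^2=-\Im z$ together with \eqref{E:rephi23} gives $|\w^{\xcE}|\le C|z|\,\e^{-ct|z|^2}$. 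Then $\sup_{u\ge0}u\,\e^{-ctu^2}\le Ct^{-1/2}$ and $\int_0^{\infty}u\,\e^{-ctu^2}\,\mathrm{d}u\le Ct^{-1}$ yield \eqref{E:estwE1bc} and \eqref{E:estwE1}. I expect this origin bookkeeping—verifying that the pole of the outer parametrix is annihilated by the structure of the jump rather than amplified by it—to be the principal technical obstacle, precisely because $\M^{out}$ is unbounded there.
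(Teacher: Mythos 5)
Your handling of $\partial\cD$ and $X^{\eps}$ is correct and is essentially the paper's proof (your identity $\M^{loc}_-\V^{(7)}(\M^{loc}_+)^{-1}-\bI=\M^{loc}_-(\V^{(7)}-\V^{loc})(\V^{loc})^{-1}(\M^{loc}_-)^{-1}$ is equivalent to the factorization $\M^{loc}_-(\V^{(7)}-\V^{loc})(\M^{loc}_+)^{-1}$ used there), and the same is true for the part of $\Sigma^{\xcE}\setminus\bar{\cD}$ bounded away from the origin. You are also right, and more careful than the paper, in observing that the blanket claim ``$\M^{out}$ and $(\M^{out})^{-1}$ are uniformly bounded for $z\in\Sigma^{\xcE}$'' cannot be invoked near the origin: it contradicts the paper's own statement that boundedness holds only on $\C\setminus D_{sig}$, while the contours $\Sigma^{(7)}_{\{9,10,11,12\}}$ enter $D_{sig}$. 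Identifying the origin as the crux is the right diagnosis.

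The gap is that your origin bookkeeping does not deliver the claimed bound $|\w^{\xcE}|\le C|z|\e^{-ct|z|^2}$, and it fails on exactly one of the two arcs meeting $z=0$. Using $\M_{msol}(z)=\M_{msol}(\hat z)\bPi(z)$ and $\bPi(z)\bP(z)=\bP(\hat z)\bPi(z)$, one has exactly
\begin{equation*}
\w^{\xcE}(z)=\M_{\infty}^{-1}\bigl[\M_{msol}\bP\bigr](\hat z)\,\bPi(z)\,W(z)\,\bPi^{-1}(z)\,\bigl[\M_{msol}\bP\bigr]^{-1}(\hat z)\,\M_{\infty},\qquad W=\V^{(7)}-\bI,
\end{equation*}
and since $[\M_{msol}\bP](\hat z)\to\M_{\infty}\bP^{\infty}$ as $z\to 0$, everything reduces to $\bPi W\bPi^{-1}$. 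Because $\bPi$ implements the transposition $1\leftrightarrow 3$ with weights $\pm\ii q_0/z$, an entry $W_{ij}$ picks up a factor $z$, $1$, or $1/z$ according to how many of its indices equal $2$. On the upper arc the jump is lower triangular, and
\begin{equation*}
\bPi W\bPi^{-1}=\begin{pmatrix}0&-\tfrac{\ii q_0}{z}W_{32}&-W_{31}\\[2pt]0&0&-\tfrac{\ii z}{q_0}W_{21}\\[2pt]0&0&0\end{pmatrix},
\end{equation*}
so the slowly decaying entry $W_{21}=\mathcal{O}(|z|\e^{-ct|z|^2})$ meets the benign factor $z$, while $1/z$ hits only $W_{32}$, which decays like $\e^{-ct/|z|}$; that arc is fine, as you say. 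But on the lower arc the jump is upper triangular (the Schwarz reflection of $\V^{(7)}_{10}$), and there $(\bPi W\bPi^{-1})_{32}=\tfrac{\ii q_0}{z}W_{12}$: the factor $1/z$ now lands on the dangerous entry, whose only available bound is $C|z|\e^{-ct|z|^2}$ (the $\mathcal{O}(|z|)$ of Lemma~\ref{L:estofo} comes from a first-order cancellation at $z=0$ and is generically sharp). The $1/z$ consumes the $|z|$ entirely, leaving $|\w^{\xcE}_{32}|\le C\e^{-ct|z|^2}$ and nothing better: supremum $\mathcal{O}(1)$ rather than $\mathcal{O}(t^{-1/2})$, and $L^1$ norm $\mathcal{O}(t^{-1/2})$ rather than $\mathcal{O}(t^{-1})$. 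So ``met by the $\mathcal{O}(|z|)$ vanishing of the jump'' buys only boundedness — which is all Lemma~\ref{L:zyyl} asserts about $\cE$, and all its mechanism can give — not the smallness required by \eqref{E:estwE1bc}--\eqref{E:estwE1}. Note this asymmetry is an intrinsically $3\times3$ phenomenon: in the scalar $2\times 2$ case the analogous conjugation swaps both indices and the weights cancel ($\tfrac{q_0}{z}\cdot\tfrac{z}{q_0}=1$), which is why Cuccagna--Jenkins face no such issue; here the offending entry is built from $r_1$, the reflection coefficient with no scalar counterpart. Closing this requires an ingredient beyond your sketch (second-order vanishing of the stripped $(1,2)$ entry at the origin, a modified contour or parametrix near $0$, or systematic use of the symmetry $\cE(z)=\cE(\hat z)$) — an ingredient which, it should be said, the paper's own two-line proof does not supply either.
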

\begin{proof}
For $z \in \Sigma^{\xcE} \setminus \bar{\cD}$, $\w^{\xcE}= \M^{out} \left( \V^{(2)}-\bI \right) (\M^{out})^{-1}$.
 Then using that $\M^{out}$  and its inverse are uniformly bounded for  $z \in \Sigma^{\xcE}$,  ~\eqref{E:estwE1bc} and~\eqref{E:estwE1} follows from Lemma~\ref{L:yzx}.
For $z \in  \partial \cD$, $\w^{\xcE} = \M^{out}\left( (\M^{loc})^{-1}-\bI \right)(\M^{out})^{-1}$.
Since $\M^{out}$ and its inverse are also bounded for $z \in \Sigma^{\xcE} \cap  \mathcal{\bar{D}}$, the estimates in~\eqref{E:estwE2} follow immediately from~\eqref{E:bzsy}. Finally, when  $z \in X^{\eps}$,
$$\w^{\xcE}=\M^{out}  \M^{loc}_-  \left( \V^{(2)}- \V^{loc} \right)  (\M^{loc}_+)^{-1} (\M^{out})^{-1}.
$$
Thus, by using estimates in~\eqref{E:estVloc} and the boundedness of $\M^{out}$, $(\M^{out})^{- 1}$, $\M^{loc}_{\pm}$ and $(\M_{\pm}^{loc})^{- 1}$, we deduce that~\eqref{E:estwE3} and~\eqref{E:estwE4} hold.
\end{proof}

The estimates in Lemma~\ref{L:estWE} show that
\begin{align*}
\begin{cases}
\|\w^{\xcE} \|_{L^1 (\Sigma^{\xcE})} \leq C t^{-1/2}\\
\|\w^{\xcE} \|_{L^{\infty} (\Sigma^{\xcE})} \leq C t^{-1/2}\log t
\end{cases} \qquad \xi \in \mathcal{I}_+, \ \ t>2.
\end{align*}
Thus by employing  the general inequality $\|f \|_{L^p} \leq \| f\|_{L^1}^{\frac{1}{p}} \|f \|_{L^{\infty}}^{\frac{p-1}{p}}$, we  immediately get
\begin{align}\label{E:wELp}
\|\w^{\xcE} \|_{L^p (\Sigma^{\xcE})} \leq C t^{-1/2}(\log t)^{(p-1)/p}, \qquad  \xi \in \mathcal{I}_+,  \ \ t>2.
\end{align}
For the  contour $\Sigma^{\xcE}$ and a function $\h(z) \in L^2(\Sigma^{\xcE})$,  we define the Cauchy transform $\Ca(\h)(z)$ associated with $\Sigma^{\xcE}$ by
 $$\mathcal{C}(\h)(z) := \frac{1}{2\pi \ii}\int_{\Sigma^{\xcE}}\frac{\h(z')dz'}{z'-z}.$$
It is kown that the left and right non-tangential boundary values $\Ca_+\h$ and $\Ca_-\h$ of $\Ca(\h)$ exist a.e. on $\Sigma^{\xcE}$ and belong to $L^2(\Sigma^{\xcE})$. Let $\mathcal{B}(L^2(\Sigma^{\xcE}))$ denotes the space of bounded linear operators on $L^2(\Sigma^{\xcE})$.  Then $\Ca_{\pm} \in \mathcal{B}(L^2(\Sigma^{\xcE}))$  and $\Ca_+-\Ca_-=I$, where $I$ denotes the identity operator on $L^2(\Sigma^{\xcE})$. Furthermore, we define the operator $\Ca_{\w^{\xcE}}$ by $\Ca_{\w^{\xcE}}(\h)=\Ca_-(\h \w^{\xcE})$. From the preceding analysis, we have known that $\|\w^{\xcE}\|_{L^2(\Sigma^{\xcE})} \to 0$ as $t \to \infty$. Consequently, there exists a $T>0$ such that the operator $I-\Ca_{\w^{\xcE}}$ is invertible whenever $t>T$ and $\xi \in \mathcal{I}_+$. Therefore, we can  define a function $\bu^{\xcE}(x,t,z)$ for $z \in \Sigma^{\xcE}$ and $t>T$ by
\begin{align}\label{E:uE}
\bu^{\xcE}=\bI+(I-\Ca_{\w^{\xcE}})^{-1}\Ca_{\w^{\xcE}}\bI \ \in \bI + L^2(\Sigma^{\xcE}).
\end{align}
According to the standard theory of RH problems~\cite{Lenells2017,TS2016},  $\cE(x,t,z)$ can be expressed as
\be \label{E:Ejfbs}
\cE(x,t,z)=\bI+\frac{1}{2\pi \ii}\int_{\Sigma^{\xcE}}\frac{\bu^{\xcE}(z') \w^{\xcE} (z') }{z'-z} \mathrm  d z', \qquad z \in \C \setminus \Sigma^{\xcE}.
\ee
The following nontangential limit is well-defined:
\begin{align}\label{E:L}
\bL(x,t):=\lim^{\angle}_{z \to \infty}z(\cE(x,t,z)-\bI)=-\frac{1}{2\pi \ii}\int_{\Sigma^{\xcE}}\bu^{\xcE} (x,t,\zeta)\w^{\xcE}(x,t,\zeta) \mathrm{d}\zeta.
\end{align}
\begin{lemma}\label{L:estofL}
As $t \to \infty$,
\begin{align}\label{E:estofL-1}
\bL(x,t)=-\frac{1}{2\pi \ii}\int_{\partial \mathcal{D}} \w^{\xcE}(x,t,\zeta) \mathrm{d}\zeta+\mathcal{O}(t^{-1}\log t).
\end{align}
\end{lemma}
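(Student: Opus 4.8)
The plan is to start from the exact representation~\eqref{E:L} of $\bL(x,t)$ and to isolate its leading contribution in two independent steps: first replace the resolvent density $\bu^{\xcE}$ by the identity, and then discard all portions of the contour other than $\partial\cD$. Concretely, I would insert $\bu^{\xcE}=\bI+(\bu^{\xcE}-\bI)$ into~\eqref{E:L} and split the contour $\Sigma^{\xcE}=\partial\cD\cup(\Sigma^{\xcE}\setminus\partial\cD)$, which decomposes $\bL$ as
\[
\bL(x,t)=-\frac{1}{2\pi\ii}\int_{\partial\cD}\w^{\xcE}\,\mathrm{d}\zeta
-\frac{1}{2\pi\ii}\int_{\Sigma^{\xcE}\setminus\partial\cD}\w^{\xcE}\,\mathrm{d}\zeta
-\frac{1}{2\pi\ii}\int_{\Sigma^{\xcE}}(\bu^{\xcE}-\bI)\w^{\xcE}\,\mathrm{d}\zeta.
\]
The first term is exactly the claimed leading term, so it remains to show that the last two integrals are both $\mathcal{O}(t^{-1}\ln t)$.

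For the third integral I would apply the Cauchy--Schwarz inequality, giving
\[
\left|\int_{\Sigma^{\xcE}}(\bu^{\xcE}-\bI)\w^{\xcE}\,\mathrm{d}\zeta\right|
\leq \|\bu^{\xcE}-\bI\|_{L^2(\Sigma^{\xcE})}\,\|\w^{\xcE}\|_{L^2(\Sigma^{\xcE})}.
\]
By the resolvent bound~\eqref{E:yxjs11} we have $\|\bu^{\xcE}-\bI\|_{L^2(\Sigma^{\xcE})}\leq Ct^{-1/2}(\ln t)^{1/2}$, and by~\eqref{E:wELp} with $p=2$ we have $\|\w^{\xcE}\|_{L^2(\Sigma^{\xcE})}\leq Ct^{-1/2}(\ln t)^{1/2}$; the product is therefore $\mathcal{O}(t^{-1}\ln t)$, as required.

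For the second integral I would use the fact that $\Sigma^{\xcE}\setminus\partial\cD$ is precisely the union of $\Sigma^{\xcE}\setminus\bar{\cD}$ and $X^{\eps}$, and estimate the two pieces in $L^1$ directly from Lemma~\ref{L:estWE}. Estimate~\eqref{E:estwE1} gives $\|\w^{\xcE}\|_{L^1(\Sigma^{\xcE}\setminus\bar{\cD})}\leq Ct^{-1}$, while~\eqref{E:estwE3} gives $\|\w^{\xcE}\|_{L^1(X^{\eps})}\leq Ct^{-1}\ln t$; summing, the $L^1$ mass of $\w^{\xcE}$ over $\Sigma^{\xcE}\setminus\partial\cD$ is $\mathcal{O}(t^{-1}\ln t)$, which bounds the integral. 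Combining the two bounds yields~\eqref{E:estofL-1}.

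This argument is essentially bookkeeping once the norm estimates of Lemma~\ref{L:estWE} and~\eqref{E:yxjs11} are in hand, so I do not anticipate a genuine obstacle; the only point requiring care is tracking precisely which source produces the governing $t^{-1}\ln t$ rate. Both the $L^2\times L^2$ cross term and the $L^1$ mass on $X^{\eps}$ contribute at order $t^{-1}\ln t$ (the logarithmic factor originating from the near-origin jump analyzed in Lemma~\ref{L:estofo} and the local-parametrix matching in~\eqref{E:estVloc}), while the bulk of $\Sigma^{(7)}$ contributes only $\mathcal{O}(t^{-1})$; it is therefore the error incurred near the critical points and the origin, rather than the far contour, that fixes the size of the remainder.
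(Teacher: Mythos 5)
Your proposal is correct and follows essentially the same route as the paper: the paper decomposes $\bL$ into the $\partial\cD$ term plus $\bL_1=-\frac{1}{2\pi\ii}\int_{\Sigma^{\xcE}\setminus\partial\cD}\w^{\xcE}\,\mathrm{d}z$ and $\bL_2=-\frac{1}{2\pi\ii}\int_{\Sigma^{\xcE}}(\bu^{\xcE}-\bI)\w^{\xcE}\,\mathrm{d}z$, and invokes Lemma~\ref{L:estWE} and~\eqref{E:yxjs11} for the estimates. Your Cauchy--Schwarz bound on the cross term via~\eqref{E:wELp} with $p=2$, and your $L^1$ bounds on $\Sigma^{\xcE}\setminus\bar{\cD}$ and $X^{\eps}$ via~\eqref{E:estwE1} and~\eqref{E:estwE3}, are exactly the ``straightforward estimates'' the paper leaves implicit.
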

\begin{proof}
The function $\bL(x,t)$ can be rewritten as
$$
\bL(x,t) = -\frac{1}{2\pi \ii}\int_{\partial \mathcal{D}} \w^{\xcE}(x,t,z) \mathrm d z + \bL_1(x,t) + \bL_2(x,t),
$$
where
\begin{align*}
\bL_1(x,t) = -\frac{1}{2\pi \ii}\int_{\Sigma^{\xcE}\setminus \partial \mathcal{D}} \w^{\xcE} (x,t,z) \mathrm{d} z , \qquad
 \bL_2(x,t) = -\frac{1}{2\pi \ii}\int_{\Sigma^{\xcE}} (\bu^{\xcE}(x,t,z)-\bI) \w^{\xcE}(x,t,z) \mathrm{d} z.
\end{align*}
From \eqref{E:wELp} and \eqref{E:uE}, it follows that
\begin{equation}\label{E:yxjs11}
\begin{aligned}
\|\bu^{\xcE} - \bI\|_{L^2(\Sigma^{\xcE})}&\leq \|(I-\Ca_{\w^{\xcE}})^{-1}\Ca_{\w^{\xcE}}\bI \|_{L^2(\Sigma^{\xcE})}
\leq \sum_{j=0}^{\infty}\| \Ca_{\w^{\xcE}}\|^{j}_{\mathcal{B}(L^2(\Sigma^{\xcE}))}\|\Ca_{\w^{\xcE}} \bI \|_{L^2(\Sigma^{\xcE})}\\
&\leq  \frac{\|\Ca_- \|_{\mathcal{B}(L^2(\Sigma^{\xcE}))} \| \w^{\xcE}\|_{L^2(\Sigma^{\xcE})}}{1-\|\Ca_- \|_{\mathcal{B}(L^2(\Sigma^{\xcE}))} \|\w^{\xcE} \|_{L^{\infty}(\Sigma^{\xcE})}}
\leq Ct^{-1/2}(\log t)^{1/2}.
\end{aligned}
\end{equation}
Then the lemma follows from Lemma~\ref{L:estWE} and Eq.~\eqref{E:yxjs11} and straightforward estimates.
\end{proof}
We define the functions $F^{(1)}(x,t)$ and $F^{(2)}(x,t)$ by
\be \label{E:F12}
F^{(1)}(x,t)=-\frac{1}{2\pi \ii} \int_{\partial D_{\eps}(z_0)} \w^{\xcE}(x,t,\zeta) \mathrm{d} \zeta, \quad
F^{(2)}(x,t)=-\frac{1}{2\pi \ii} \int_{\partial D_{\eps}(z_1)} \w^{\xcE}(x,t,\zeta) \mathrm{d} \zeta.
\ee 
For $z\in\partial\mathcal D$, we have
$$
\w^{\xcE}(x,t,z)
=
\M^{out}(x,t,z)(\M^{loc})^{-1}(x,t,z)
(\M^{out})^{-1}(x,t,z)-\bI.
$$
Using Eq.~\eqref{E:bzsy} and  applying the residue theorem, we obtain
\begin{align*}
F^{(1)}(x,t)
&=\frac{q_0^2}{\xi^2 \sqrt{2t}} \bP^{-1}(\infty)\M_{\infty}^{-1} \M_{sol}(x,t,z_0) \bZ_R(\xi,t) \M^{-1}_{sol}(x,t,z_0) \M_{\infty} \bP(\infty)+\mathcal{O}(t^{-1}), \\
F^{(2)}(x,t)
&=\frac{1}{\sqrt{2t}} \bP^{-1}(\infty) \M_{\infty}^{-1}  \M_{sol}(x,t,z_1) \bZ_L(\xi,t) \M^{-1}_{sol}(x,t,z_1) \M_{\infty} \bP(\infty)+\mathcal{O}(t^{-1}), \qquad t \to \infty.
\end{align*}
In the expression above,
\be \label{E:ZRZL}
\bZ_R=
{\scriptsize
\bpm
0&0&0\\
0 &0 &-\ii \beta_{23} (\tilde{d}_0^{r} )^{-1}\e^{\theta_{23}(z_0)}\\
0&\ii \beta_{32}\tilde{d}_0^r\e^{\theta_{32}(z_0)}&0
\epm},\
{\scriptsize
\bZ_L= \bpm
0&\ii \beta_{12} (\tilde{d}_0^{\ell})^{-1} \e^{\theta_{12}(z_1)}&0\\
-\ii \beta_{21} \tilde{d}^{\ell}_0 \e^{\theta_{21}(z_1)}&0&0\\
0& 0&0
\epm},
\ee
where
\be \label{E:d0til}
\begin{aligned}
&\tilde{d}_0^{r}= d_0^{r} \cP^2_1(z_1) \cP_1(z_0)=(\frac{q_0^2}{\xi^2 \sqrt{2t}})^{-2 \ii  \nu} \e^{-2 \chi(z_0)} \frac{\delta(z_1)}{ \rho(z_0)\rho^2(z_1)},\\
&\tilde{d}^{\ell}_0= \frac{d_0^{\ell}}{ \cP_1^2(z_0) \cP_1(z_1)}=(\sqrt{2t})^{-2 \ii \tilde{\nu}}\e^{-2 \tilde{\chi} (z_1)}  \frac{ \rho^2(z_1) \rho(z_0)}{\delta^2(0)\delta(z_1)}.
\end{aligned}
\ee
 Thus by~\eqref{E:estofL-1} we obtain
\be \label{E:estofL}
\begin{aligned}
\bL(x,t)=&\frac{q_0^2}{\xi^2 \sqrt{2t}} \bP^{-1}(\infty) \M_{\infty}^{-1}  \M_{sol}(x,t,z_0) \bZ_R(\xi,t) \M^{-1}_{sol}(x,t,z_0) \M_{\infty}\bP(\infty)\\
+&\frac{1}{\sqrt{2t}} \bP^{-1}(\infty) \M_{\infty}^{-1}   \M_{sol}(x,t,z_1) \bZ_L(\xi,t) \M^{-1}_{sol}(x,t,z_1) \M_{\infty}\bP(\infty)
+\mathcal{O}(t^{-1} \log t), \quad t \to \infty.
\end{aligned}
\ee

\subsection{Proof of the asymptotic formulas~\eqref{E:asyofq} and~\eqref{E:bjgs}}
Taking into account all the transformations we have performed,  we obtain
\begin{align}\label{E:alltran}
\cE(x,t,z)=\widetilde{\bT}^{-1}(\infty) \M_{\infty}^{-1}  \M(x,t,z) \widetilde{\bT}(z)   (\M_{sol}\bP)^{-1}(x,t,z) \M_{\infty} \bP(\infty),
\end{align}
where we have selected $z$ such that $\G(z)=\bI$.
Note that this can be done as long as $z$ lies outside the region $\mathcal{R}$. Hence, for such $z$ we have
\be \label{E:cz}
\M(x,t,z)=\M_{\infty} \widetilde{\bT}(\infty)  \cE(x,t,z) \bP^{-1}(\infty)\M_{\infty}^{-1}  \M_{sol}(x,t,z) \tilde{\Del}^{-1}(z) \Del^{-1}(z).
\ee
From~\eqref{E:cggs} we have
\be \label{E:fcggs}
\mathbf{q}(x,t)=- \ii \lim_{z \to \infty}z \m_{rc}(x,t,z), \qquad \m_{rc}=(\M_{21}, \M_{31})^{\top}.
\ee
Therefore, we need to examine the asymptotic behavior of the functions on the right-hand side of Eq.~\eqref{E:cz} as $z \to \infty$. We can easily obtain that as $z \to \infty$,
\begin{align*}
&\cE(x,t,z)=\bI+\frac{1}{z} \bL(x,t)+\mathcal{O}(\frac{1}{z^2}),\\
&\M_{sol}(x,t,z)=\M_{\infty}+\frac{1}{z}\M^{(1)}_{sol}(x,t)+\mathcal{O}(\frac{1}{z^2}),\\
&\tilde{\Del}^{-1}(z) \Del^{-1}(z)=(\tilde{\Del}^{\infty})^{-1} (\Del^{\infty})^{-1}+\frac{1}{z} \bC + \mathcal{O}(\frac{1}{z^2}),
\end{align*}
where $\bC$  is a diagonal matrix.  Then, by substituting the above asymptotic expansions into~\eqref{E:cz}, a direct calculation yields
\begin{equation} \label{E:mgs}
\m_{rc}(x,t,z)=\frac{1}{z} \check{\M}_\infty \boldsymbol{\sigma}\check{\M}_\infty^{-1} \bpm
 (\M^{(1)}_{sol})_{21} \\
   (\M^{(1)}_{sol})_{31}
\epm +
\frac{1}{z} \check{\M}_\infty  \boldsymbol{\sigma} \check{\bP}(\infty)
\bpm
 \bL_{21}\\
\bL_{31}
\epm
+\mathcal{O}(\frac{1}{z^2}),  \qquad z \to \infty,
\end{equation}
 where
\be \label{E:cheMandsi}
\check{\M}_\infty=\begin{pmatrix}
\frac{q_{2,+}^*}{q_0}  & \frac{q_{1,+}}{q_0}\\
 -\frac{q_{1,+}^*}{q_0} &\frac{q_{2,+}}{q_0}
\end{pmatrix}, \quad
\boldsymbol{\sigma}=\begin{pmatrix}
\frac{ \delta^2(0)}{\delta_1(0)}  & \\
  & \delta_1(0) \delta(0)
\end{pmatrix},\quad
\check{\bP}(\infty)=
\bpm 
\frac{1}{\cP_1(0)} & \\
 & \cP_1(0)
\epm.
\ee
Combining this with reconstruction formula~\eqref{E:fcggs}, we have
\be \label{E:zzz}
\begin{aligned}
\mathbf{q}(x,t)&= \check{\M}_\infty \boldsymbol{\sigma}\check{\M}_\infty^{-1} \bpm
 -\ii (\M^{(1)}_{sol})_{21} \\
-\ii  (\M^{(1)}_{sol})_{31}
\epm +
\check{\M}_\infty \boldsymbol{\sigma} \check{\bP}(\infty)
\bpm
- \ii  \bL_{21}\\
-\ii  \bL_{31}
\epm\\
&=\check{\M}_\infty \boldsymbol{\sigma}\check{\M}_\infty^{-1} \biggl(  
\q_{sol}^N(x,t)+\q_{rad}(x,t)
\biggr),
\end{aligned}
\ee
where $\q^N_{sol}$ is the $N$-dark-soliton solution appearing in
Section~\ref{sub:lG}; see formula~\eqref{E:Ndarksoliton}, and
\be \label{E:qradddyhb}
\q_{rad}(x,t)=\check{\M}_\infty  \check{\bP}(\infty) \bpm
- \ii  \bL_{21}(x,t)\\
-\ii  \bL_{31} (x,t)
\epm.
\ee
Note that  $\check{\M}_\infty$, $\boldsymbol{\sigma}$ and $\check{\M}_\infty^{-1}$ are all unitary matrices, $\check{\M}_\infty \boldsymbol{\sigma}\check{\M}_\infty^{-1}$ is also a unitary matrix. Since unitary matrices are norm-preserving, one can directly verify that $\q^{N}_{msol}(x,t)$, defined by
\be \label{E:msol}
\mathbf{q}^{N}_{msol}(x,t)= \check{\M}_\infty \boldsymbol{\sigma}\check{\M}_\infty^{-1} \bpm
 -\ii (\M^{(1)}_{sol})_{21} \\
-\ii  (\M^{(1)}_{sol})_{31}
\epm ,
\ee
remains a solution to the defocusing Manakov system~\eqref{E:demanakovS}. 

Now we examine the second term on the right-hand side of Eq.~\eqref{E:zzz}. 
From~\eqref{E:estofL} and~\eqref{E:qradddyhb}, as $t \to \infty$, one can obtain
\be \label{E:zzhsydl}
\q_{rad}(x,t)=\frac{1}{\sqrt{t}}
\bpm
L_A(x,t)\\
L_B(x,t)
\epm
+\mathcal{O}(t^{-1} \log t),
\ee
where
\begin{align}
L_A(x,t)&=\frac{q_0^2}{\xi^2 \sqrt{2}} \bigg[ \beta_{32}\tilde{d}_0^r \e^{\theta_{32}(x,t,z_0)}(\M_{sol})_{23}(x,t,z_0) (\M_{sol}^{-1})_{21}(x,t,z_0)    \label{E:bLAB}\\
&- \beta_{23}(\tilde{d}_0^r)^{-1} \e^{\theta_{23}(x,t,z_0)}(\M_{sol})_{22}(x,t,z_0) (\M_{sol}^{-1})_{31}(x,t,z_0) \bigg] \nonumber \\
&+ \frac{1}{\sqrt{2}}  \bigg[- \beta_{21}\tilde{d}_0^{\ell} \e^{\theta_{21}(x,t,z_1)}(\M_{sol})_{22}(x,t,z_1) (\M_{sol}^{-1})_{11}(x,t,z_1) \nonumber\\
&+ \beta_{12}(\tilde{d}_0^{\ell})^{-1} \e^{\theta_{12}(x,t,z_1)}(\M_{sol})_{21}(x,t,z_1) (\M_{sol}^{-1})_{21}(x,t,z_1) \bigg], \nonumber \\
L_{B}(x,t)&=\frac{q_0^2}{\xi^2 \sqrt{2}} \bigg[  \beta_{32}\tilde{d}_0^r \e^{\theta_{32}(x,t,z_0)}(\M_{sol})_{33}(x,t,z_0) (\M_{sol}^{-1})_{21}(x,t,z_0)  \label{E:bLABhb} \\
&- \beta_{23}(\tilde{d}_0^r)^{-1} \e^{\theta_{23}(x,t,z_0)}(\M_{sol})_{32}(x,t,z_0) (\M_{sol}^{-1})_{31}(x,t,z_0) \bigg] \nonumber \\
&+ \frac{1}{\sqrt{2}}  \bigg[- \beta_{21}\tilde{d}_0^{\ell} \e^{\theta_{21}(x,t,z_1)}(\M_{sol})_{32}(x,t,z_1) (\M_{sol}^{-1})_{11}(x,t,z_1) \nonumber \\
&+\beta_{12}(\tilde{d}_0^{\ell})^{-1} \e^{\theta_{12}(x,t,z_1)}(\M_{sol})_{31}(x,t,z_1) (\M_{sol}^{-1})_{21}(x,t,z_1) \bigg]. \nonumber
\end{align}
According to the computations in Appendix~\ref{AppB}, $\M_{sol}$ can be solved
explicitly by solving certain linear algebraic systems. Hence, after a
lengthy but direct calculation, one obtains
$$
q_{2,+} \biggl((\M_{ sol})_{21}(x,t,z),(\M_{ sol})_{23}(x,t,z)\biggr)
=
q_{1,+}
\biggl((\M_{ sol})_{31}(x,t,z),(\M_{ sol})_{33}(x,t,z)\biggr).
$$
Therefore, we have $(\M_{sol}^{-1})_{21}(x,t,z_0)=(\M_{sol}^{-1})_{21}(x,t,z_1)=0$.
Moreover, it follows from~\eqref{E:Msolderlie} that
$$
(\M_{sol})_{22}(x,t,z_1)=-\frac{q_{2,+}^*}{q_0},\qquad
(\M_{ sol})_{32}(x,t,z_1)=\frac{q_{1,+}^*}{q_0}.
$$
Using the symmetry relation
$
\M_{sol}(x,t,z_0)=\M_{sol}(x,t,z_1)\bPi(z_0)
$ and simplifying the resulting expression, we obtain
\be \label{E:LABdhj}
\begin{aligned}
\bpm
L_A\\
L_B
\epm
&=
-\frac{1}{\sqrt{2}} \biggl(  
\beta_{21} \tilde{d}_0^{\ell}+\beta_{23}(\ii \frac{q_0^3}{\xi^3} ) (\tilde{d}_0^{r})^{-1}
\biggr)\e^{\theta_{21}(x,t,z_1)} (\M_{sol}^{-1})_{11}(x,t,z_1) \bpm
(\M_{sol})_{22}(x,t,z_1)\\
(\M_{sol})_{32}(x,t,z_1)
\epm\\
&=-\frac{\sqrt{2 \pi} \gamma^2 (z_1) \e^{-\frac{\pi \nu}{2}+ \frac{\pi}{4} \ii}  \rho^2(z_1) \rho(z_0)}{\hat{r}_1^*(z_1) \Gamma(-\ii \nu) \delta(z_1)} (\frac{q_0^2}{\xi^2 \sqrt{2t}})^{2 \ii \nu} \e^{2 \chi(z_0)}  \e^{\theta_{21}(x,t,z_1)} (\M_{sol}^{-1})_{11}(x,t,z_1)
\bpm
-\frac{q_{2,+}^*}{q_0}\\
\frac{q_{1,+}^*}{q_0}
\epm.
\end{aligned}
\ee
Moreover,
$$
\check{\M}_\infty \boldsymbol{\sigma}\check{\M}_\infty^{-1} \bpm
-\frac{q_{2,+}^*}{q_0}\\
\frac{q_{1,+}^*}{q_0}
\epm=\frac{ \delta^2(0)}{\delta_1(0)}  \bpm
-\frac{q_{2,+}^*}{q_0}\\
\frac{q_{1,+}^*}{q_0}
\epm.
$$
It then follows from~\eqref{E:zzz} that, uniformly for $\xi\in\mathcal I_+$ as
$t\to\infty$,
\be \label{E:asjjgdddgs}
\q(x,t)=\q^N_{msol}(x,t)+\frac{f_{rad}(x,t)}{\sqrt{t}}(\M_{sol}^{-1})_{11}(x,t,z_1)\bpm
-\frac{q_{2,+}^*}{q_0}\\
\frac{q_{1,+}^*}{q_0}
\epm+\mathcal{O}(t^{-1} \log t),
\ee
where
$$
f_{rad}(x,t)=-\frac{\sqrt{2\pi} \gamma^2 (z_1) \e^{-\frac{\pi \nu}{2}+ \frac{\pi}{4} \ii}  \rho^2(z_1) \rho(z_0)}{\hat{r}_1^*(z_1) \Gamma(-\ii \nu) \delta(z_1)} (\frac{q_0^2}{\xi^2 \sqrt{2t}})^{2 \ii \nu} \e^{2 \chi(z_0)}\e^{\theta_{21}(x,t,z_1)} \frac{ \delta^2(0)}{\delta_1(0)} .
$$
We next show that, in the cases described in Theorems~\ref{asy-th-1} and~\ref{Th:2m}, the
asymptotic formula~\eqref{E:asjjgdddgs} reduces to
\eqref{E:asyofq} and~\eqref{E:bjgs}, respectively.

Assume first that $\mathcal I_+$ contains exactly one soliton velocity,
say $\Re \zeta_{j_0} \in \mathcal{I}_+$. By~\eqref{E:qsolNbiby1sol}, the $N$-soliton part is
approximated by the one-soliton profile $\q_{sol}^{(j_0)}(x,t)$. Hence
the leading term $\q_{msol}^N(x,t)$ satisfies
\begin{align*} \q_{msol}^N(x,t) =& \check{\M}_\infty \boldsymbol{\sigma}\check{\M}_\infty^{-1} \q_{sol}^{(j_0)}+\mathcal O(\e^{-ct})\\ =&\delta_1(0) \delta(0) \left(\prod_{\ell<j_0}\frac{\zeta_{\ell}}{\zeta_{\ell}^*} \right) \q_+\e^{\ii\theta_{j_0}} \left[ \cos\theta_{j_0} -\ii\sin\theta_{j_0} \tanh\Bigl( q_0\sin\theta_{j_0} \bigl(x-2q_0\cos\theta_{j_0}t-x_{j_0}\bigr) \Bigr) \right]\\ &+\mathcal O(\e^{-ct}), \qquad t \to \infty. \end{align*}
Moreover, \eqref{E:nashhl} implies that $\M_{sol}$ is approximated by
the one-soliton model $\M_{sol}^{(j_0)}$.  More
precisely,
\[
\bigl(\M_{sol}^{-1}\bigr)_{11}(x,t,z_1)
=
\left(
\prod_{j<j_0}
\frac{z_1-\zeta_j}{z_1-\zeta_j^*}
\right)
\bigl((\M_{sol}^{(j_0)})^{-1}\bigr)_{11}(x,t,z_1)
+\mathcal O(\e^{-ct}),
\qquad t\to\infty.
\]
Substituting these relations into~\eqref{E:asjjgdddgs}, we obtain
\begin{align*}
\q(x,t)
={}&
\delta_1(0)\delta(0)
\left(
\prod_{\ell<j_0}\frac{\zeta_\ell}{\zeta_\ell^*}
\right)
\q_{sol}^{(j_0)}(x,t)
\\
&+
\left[f_{rad}(x,t) \cdot
\biggl(
\prod_{j<j_0}
\frac{z_1-\zeta_j}{z_1-\zeta_j^*}
\biggr)
\right] \frac{1}{\sqrt{t}}
\bigl((\M_{sol}^{(j_0)})^{-1}\bigr)_{11}(x,t,z_1)
\bpm
-\dfrac{q_{2,+}^*}{q_0}\\[1mm]
\dfrac{q_{1,+}^*}{q_0}
\epm
+\mathcal O(t^{-1}\log t).
\end{align*}
The one-soliton prefactor can be expressed in terms of $\alpha(\xi)$, while
the product of the first two factors in the $t^{-1/2}$ coefficient can be
written as the modulated linear wave $A(\xi)\e^{\ii\Phi(x,t)}$. This gives
\eqref{E:asyofq}.

We now assume that $\mathcal I_+$ contains no soliton velocities. Then
\eqref{E:wuguziqingjnjx} and~\eqref{E:choux} imply that, in
\eqref{E:asjjgdddgs}, one should replace $\q_{msol}^N(x,t)$ by
$
\delta_1(0)\delta(0)
\biggl(
\prod_{j\in\nabla^+}\frac{\zeta_j}{\zeta_j^*}
\biggr)\q_+
$,
and replace $\bigl(\M_{sol}^{-1}\bigr)_{11}(x,t,z_1)$ by
\[
\biggl(
\prod_{j\in\nabla^+}
\frac{z_1-\zeta_j}{z_1-\zeta_j^*}
\biggr)
\frac{1}{\gamma(z_1)}.
\]
Consequently, the coefficient of the leading term is again expressed
through $\alpha(\xi)$, and the $t^{-1/2}$ coefficient reduces to the
corresponding modulated linear wave. This yields~\eqref{E:bjgs}.


\section{ Concluding remarks}
 In this work, we apply the Deift--Zhou nonlinear steepest descent method to the
defocusing Manakov system on a nonzero background under
Assumptions~\ref{As:1}. Compared with the scalar defocusing NLS equation with
similar NZBCs, the vector case exhibits an additional
\(\mathcal O(t^{-1/2})\) dispersive correction in the soliton region, revealing a
genuinely vectorial asymptotic effect.
Although the assumptions imposed on the initial data are relatively strong, we
expect that the results can be extended to weaker Sobolev-type settings. In such
a framework, the \(\bar\partial\)-steepest descent method would be more suitable,
as in the scalar defocusing NLS case~\cite{CuJe2016,WF2023,WF2022}.   Such an extension, however, would require a more detailed investigation.   Nevertheless, this work reveals the characteristics of solutions for the defocusing Manakov system with NZBCs~\eqref{E:bjtj} and represents the first progress in understanding the long-time behavior of solutions to vector NLS equations with NZBCs. 

 The results of this work also open up a number of interesting  directions:
$\mathrm{(1)}$ It remains to analyze the defocusing Manakov system \eqref{E:demanakovS} in the remaining space-time regions, namely the solitonless region $|\xi|>q_0$ and the transition region $|\xi|\approx q_0$. For the scalar defocusing NLS equation, these regimes have been studied in~\cite{WF2022,WF2023}. In particular, Wang and Fan~\cite{WF2023} showed that the leading term in the transition region is described by a solution of the Painlev\'e II equation. One may therefore expect that, in the present vector setting, the transition asymptotics should be governed by a Painlev\'e-type model.
$\mathrm{(2)}$ A natural direction is to extend the present analysis to more
general multi-component NLS systems with NZBCs, including the
$N$-component defocusing NLS system with $N\geq3$ and the focusing multi-component NLS system.  
$\mathrm{(3)}$ It would also be interesting to investigate the defocusing Manakov system under non-parallel or asymmetric boundary conditions. The IST for the former case has been developed in~\cite{ABP2022}, but the corresponding long-time asymptotic behavior remains open. The analysis in the present paper may provide a useful starting point for this problem.
$\mathrm{(4)}$ The square matrix Schr\"odinger system with NZBCs provides another related direction~\cite{PFLP2018}. Since its IST can be viewed as a more direct matrix generalization of the scalar theory, its long-time asymptotic analysis may be technically more accessible than that of the vector Manakov system. Nevertheless, because such systems arise in several physical contexts, their asymptotic behavior remains worth studying. $\mathrm{(5)}$ A numerical inverse scattering theory for the defocusing Manakov system with NZBCs is also desirable. The scalar case has recently been investigated in~\cite{GPT2025}, but extending numerical inverse scattering techniques to the vector case is expected to be substantially more challenging.
$\mathrm{(6)}$ Finally, the methods developed here may be applicable to other coupled integrable systems with nonzero background, such as coupled mKdV equations~\cite{XFL2023}, coupled Hirota equations~\cite{HMYZ2025}, and coupled Gerdjikov--Ivanov equations~\cite{MZ-2023}. Although the IST formulation for these systems has been established, their
long-time asymptotic analysis remains to be explored.

We hope that the results of this work  will motivate further investigations on the related problems.

\appendix
	
\section{Proofs of some results from section~\ref{s:rhch}}\label{App:AAA}

\subsection{Proof of ~\eqref{E:Azinfty} and~\eqref{E:Az0}}\label{App:pofzas}
We prove the estimates for \(a_{21}(z)\) and \(a_{23}(z)\). The remaining
entries are treated in the same way. Set
$$
\Delta \bQ_{\pm}(x,0):= \bQ(x,0)-\bQ_{\pm}.
$$
We first justify that the large-\(z\) expansions of the relevant columns of
\(\boldsymbol{\mu}_-\) remain valid in the strip-like domain \(S_\varepsilon\). Recall that
\[
S_\varepsilon=\{z\in\mathbb C:\ |\Im z|\le \varepsilon\}\setminus(B_1\cup B_2),
\]
where \(B_1\) and \(B_2\) are the disks centered at
\(\ii q_0^2/(2\varepsilon)\) and \(-\ii q_0^2/(2\varepsilon)\), respectively,
with radius \(q_0^2/(2\varepsilon)\). Hence, for \(z\in S_\varepsilon\),
\[
|\Im z|\le \varepsilon,\qquad
\left|\Im\frac{q_0^2}{z}\right|\le \varepsilon,
\]
and therefore
\[
|\Im\lambda(z)|
=\frac12\left|\Im z-\Im\frac{q_0^2}{z}\right|
\le \varepsilon .
\]
Consider the Volterra equation for the first column of $ \boldsymbol{\mu}_{-1}$. Writing
$$
\omega(x,z):=\E_{-}^{-1}(z) (x,0,z) \boldsymbol{\mu}_{-1}(x,0,z).
$$
one has
$$
\omega(x,z)
=
\begin{pmatrix}1\\0\\0\end{pmatrix}
+
\int_{-\infty}^{x}
\widetilde \G(x-y,z)\Delta \bQ_{-}(y,0)\E_{-}(z)\omega (y,z)\,\mathrm{d}y ,
$$
where
$$
\widetilde \G(x-y,z)
=
\operatorname{diag}
\left(
1,\,
\e^{\ii z(x-y)},\,
\e^{2 \ii \lambda(z)(x-y) }
\right)
\E_{-}^{-1}(z).
$$
Since \(2|\Im\lambda(z)|\le 2\varepsilon\) for \(z\in S_\varepsilon\), the
exponential factors in the kernel are controlled by the assumed exponential
decay of $\Delta \bQ_{-}$. Thus the corresponding Neumann series converges
uniformly on compact subsets of $S_\varepsilon\setminus\{\pm q_0\}$. It follows
that \(\mu_{-,1}(x,0,z)\) is analytic in \(S_\varepsilon\setminus\{\pm q_0\}\).
Consequently, by the preceding analyticity argument, the large-\(z\) expansion of $\boldsymbol{\mu}_{-1}(x,0,z)$ given in  Ref.~\cite[Corollary~2.28]{BD2015-1} remains valid as \(z\to\infty\) within
\(S_\varepsilon\):
\be \label{E:muinfty}
 \boldsymbol{\mu}_{-1}(x,0,z)= \bpm 1\\0\\0  \epm + \frac{\ii}{z} \bpm 0\\ q_1(x,0) \\ q_2(x,0)   \epm   +\mathcal{O}(\frac{1}{z^2}).
\ee
The same argument applies to $\boldsymbol{\mu}_{-3}$, and gives
\be \label{E:muinfty3}
  \boldsymbol{\mu}_{-3}(x,0,z)=\bpm 0\\ \frac{\q_-}{q_0} \epm + \frac{-\ii}{q_0 z} \bpm \q^{\dagger} \cdot \q_- \\ 0 \\ 0   \epm    +\mathcal{O}(\frac{1}{z^2}),  \qquad  z \to \infty,
\quad z \in  S_{\varepsilon }.
\ee

We now estimate \(a_{21}\). By \cite[Eq.~(A.20)]{BD2015-1}, the scattering matrix admits the representation
\be \label{E:Ajfbs}
\begin{aligned}
\A(z)=&\int_{0}^{\infty}\e^{-\ii y  \blambda(z)} \E_+^{-1}(z) [\bQ(y,0)- \bQ_{+}] \boldsymbol{\mu}_-(y,0,z)
\e^{\ii y  \blambda(z)} \mathrm{d}y \\
&+\E_+^{-1}(z)\E_-(z)\left[\bI+  \int_{-\infty}^{0}\e^{-\ii y  \blambda(z)} \E_-^{-1}(z) [\bQ(y,0)- \bQ_{-}] \boldsymbol{\mu}_-(y,0,z)
\e^{\ii y  \blambda(z)}  \mathrm{d}y  \right].
\end{aligned}
\ee
Taking the $(2,1)$-entry yields
\be \label{E:a21jfbs}
\begin{aligned}
a_{21}(z)=&\int_{0}^{\infty}
\begin{pmatrix}
 0 & (\q_+^{\perp })^{\dagger}/q_0
\end{pmatrix}
 \Delta \bQ_{+}(y,0)\boldsymbol{\mu}_{-1}(y,0,z)
\e^{-\ii y  (\lambda+k)(z)} \mathrm{d}y \\
&+\e^{\ii (\theta_+-\theta_-)}\left[  \int_{-\infty}^{0}\begin{pmatrix}
 0 & (\q_-^{\perp })^{\dagger}/q_0
\end{pmatrix}  \Delta \bQ_{-}(y,0) \boldsymbol{\mu}_{-1}(y,0,z)
\e^{-\ii y  (\lambda+k)(z)} \mathrm{d}y   \right].
\end{aligned}
\ee
Substituting the expansion~\eqref{E:muinfty} of \(\boldsymbol{\mu}_{-,1}\) into this formula gives
\be \label{E:hlbidejf}
\begin{aligned}
a_{21}(z)=&
\int_0^{+\infty}
\frac{(\q_+^\perp)^\dagger}{q_0}
\bigl(\q(y,0)-\q_+\bigr)\e^{-\ii yz}\,\mathrm d y\\
&+\e^{\ii(\theta_+-\theta_-)}
\int_{-\infty}^0
\frac{(\q_-^\perp)^\dagger}{q_0}
\bigl(\q(y,0)-\q_-\bigr)\e^{-\ii yz}\,\mathrm d y
+\mathcal O(z^{-2}).
\end{aligned}
\ee
Here we used
$$
\bpm 0&  \frac{(\q_\pm^\perp)^\dagger}{q_0} \epm
\Delta \bQ_\pm(y,0)
\begin{pmatrix}0\\ q_1(y,0)\\ q_2(y,0)\end{pmatrix}=0.
$$
Since \(\q_+=\e^{\ii(\theta_+-\theta_-)}\q_-\), we have
\[
\e^{\ii(\theta_+-\theta_-)}(\q_-^\perp)^\dagger=(\q_+^\perp)^\dagger .
\]
Moreover, \((\q_\pm^\perp)^\dagger\q_\pm=0\). Hence the two integrals in~\eqref{E:hlbidejf} combine
into
$$
a_{21}(z)=
\int_{-\infty}^{+\infty}
F(y)\e^{-\ii yz}\,\mathrm d y+\mathcal O(z^{-2}),
\qquad
F(y):=\frac{(\q_+^\perp)^\dagger\q(y,0)}{q_0}.
$$
By the boundary condition, \(F(y)\to0\) as \(y\to\pm\infty\). Under the
assumed exponential decay of the initial data and its derivatives, \(F,F'\) and \(F''\)
are exponentially integrable. Therefore, for \(z\in S_\varepsilon\), two integrations
by parts give
\[
\int_{-\infty}^{+\infty}F(y)\e^{-\ii yz}\,\mathrm d y
=
-\frac{1}{z^2}
\int_{-\infty}^{+\infty}F''(y)\e^{-\ii yz}\,\mathrm d y
=\mathcal O(z^{-2}).
\]
Consequently,
$
a_{21}(z)=\mathcal O(z^{-2}),\qquad z\to\infty,\quad z\in S_\varepsilon .
$

Next we estimate \(a_{23}(z)\). Using the large-\(z\) expansion~\eqref{E:muinfty3} of \(\boldsymbol{\mu}_3\) in the corresponding integral representation of \(a_{23}\), the leading term vanishes
because
$$
\begin{pmatrix}
0&(\q_{\pm}^{\perp })^{\dagger}/q_0
\end{pmatrix}
 [\bQ(y,0)- \bQ_{\pm}]
\bpm 0\\ \frac{\q_-}{q_0} \epm =0.
$$
The next term is of order \(z^{-1}\). Hence
$
a_{23}(z)=\mathcal O(z^{-1}),\quad S_\varepsilon  \ni z\to\infty.
$

Finally, the behavior as \(z\to0\) follows from the symmetry~\eqref{E:ABdc}.
For instance, we have 
$
a_{23}(z)=-\frac{\ii q_0}{z}a_{21}(\hat z)
$.
Since $S_\varepsilon \ni \hat z\to\infty$ as $S_\varepsilon \ni z\to0$, we obtain  $a_{21}(\hat z)=\mathcal O(\hat z^{-2})$. Thus we get
$$
a_{23}(z)=\mathcal{O}(z),\qquad z\to0,\quad z\in S_\varepsilon .
$$
 The remaining estimates in~\eqref{E:Az0} can be proved in a similar manner.

\subsection{The derivation of~\eqref{E:Vex}}\label{Apppof258}
In Ref.~\cite{BD2015-1}, the reflection coefficients $\rho_1(z)$ and $\rho_2(z)$ are defined as:
\be \label{E:fsxs}
\rho_1(z)=\frac{b_{13}(z)}{b_{11}(z)}, \qquad
\rho_2(z)=\frac{a_{21}(z)}{a_{11}(z)}.
\ee
As shown in Ref.~\cite[Eq.3.2]{BD2015-1}, the jump matrix $\V(x,t,z)$ is given by
\be \label{E:VinRef}
\V(x,t,z)=\e^{\Theta}
\bpm
1-\frac{|\rho_2|^2}{\gamma}+\rho_1^*\left[\hat{\rho}_1^*+ \frac{\ii q_0}{z \gamma} \rho_2^* \hat{\rho}_2 \right] & -\frac{\rho_2^*}{\gamma}-\frac{q_0^2}{z^2 \gamma^2} \rho_2^* |\hat{\rho}_2|^2+\frac{\ii q_0}{z \gamma} \hat{\rho}_1^* \hat{\rho}_2^*&
-\frac{\ii q_0}{z \gamma} \rho_2^* \hat{\rho}_2-\hat{\rho}_1^* \\
\rho_2-\frac{\ii q_0}{z}\rho_1^* \hat{\rho}_2& 1+\frac{q_0^2}{z^2 \gamma}|\hat{\rho}_2|^2& \frac{\ii q_0}{z} \hat{\rho}_2\\
-\rho_1^*& -\frac{\ii q_0}{z \gamma} \hat{\rho}_2^* & 1
\epm \e^{-\Theta},
\ee
where $\hat{\rho}_j(z)= \rho_j(\hat{z})$ for $j=1,2$.  We now simplify $\V$ by applying symmetry properties~\eqref{E:ABdc} and~\eqref{E:ABrel}of the scattering matrices $\A(z)$ and $\bB(z)$.

Let  $\V_{ij}$  denote the $(ij)$-entry of the matrix $\V$. We take the two seemingly most complex elements in $\V$, namely  $\V_{11}$  and  $\V_{12}$ , as examples to demonstrate how to simplify $\V$.  By substituting the definitions of $\{\rho_j\}_{j=1}^2$ and $\{\hat{\rho}_j \}_{j=1}^2$ into the expression, we obtain:
\begin{align*}
\V_{11}&=1+\frac{a_{21} b_{12}}{a_{11}b_{11}}-\frac{a_{31}}{a_{11}}\left[\frac{a_{13}}{a_{33}}+ \frac{\ii q_0}{z \gamma(z)}(-\gamma(z)\frac{b_{12}}{b_{11}})(\frac{\ii z a_{23}}{q_0 a_{33}})  \right]=1+\frac{a_{21} b_{12} }{a_{11} b_{11}}-\frac{a_{31}}{a_{11}} \left[\frac{a_{13}}{a_{33}}+ \frac{b_{12} a_{23}}{b_{11} a_{33}}     \right]\\
&=1+\frac{a_{21} b_{12} }{a_{11} b_{11}}-\frac{a_{13} a_{31} b_{11}+ a_{23} a_{31} b_{12}}{a_{11}b_{11} a_{33}}=1+ \frac{a_{21} b_{12} }{a_{11} b_{11}}+\frac{a_{31} b_{13}}{a_{11} b_{11}}\\
&=1-\frac{1}{\gamma(z)}|r_1(z)|^2-|r_2(z)|^2,  \qquad z \in \R.
\end{align*}
Similarly, for  $\V_{12}$, we have
\begin{align*}
\V_{12} \e^{\theta_{21}}&=\frac{ b_{12}}{b_{11}}+\frac{a_{23} b_{12} b_{32} }{a_{33} b_{11} b_{33}}+\frac{a_{13} b_{32}}{a_{33} b_{33}}
=\frac{ b_{12} }{b_{11}}+ \frac{b_{12} b_{32} a_{23} + a_{13} b_{32}b_{11}}{b_{11}b_{33} a_{33}} \\
&=\frac{ b_{12}}{b_{11}}+\frac{b_{12} b_{32}\left(b_{13} b_{21}-b_{11}b_{23} \right)+b_{32}b_{11} \left(b_{12} b_{23}-b_{13}b_{22}\right)}{b_{11} a_{33} b_{33}}\\
&=\frac{ b_{12}}{b_{11}}+\frac{b_{12} b_{32}b_{13} b_{21} -b_{32}b_{11} b_{13}b_{22}}{b_{11} a_{33} b_{33}}=\frac{ b_{12}}{b_{11}}+\frac{b_{32}b_{13} \left(b_{12} b_{21} -b_{11}b_{22} \right) }{b_{11} a_{33} b_{33}}\\
&=\frac{ b_{12}}{b_{11}} - \frac{b_{32} b_{13}}{b_{11}b_{33}}=\frac{1}{\gamma(z)}\left(-r_1(z) +r_2(z)r_3(z) \right)^*, \qquad z \in \R .
\end{align*}
The remaining elements can be simplified in a similar manner, and one ultimately obtains~\eqref{E:Vex}.

\subsection{Proof of Lemma~\ref{L:wyxRH1}} \label{AppBproof}
We first show that the determinant of any solution to RH problem~\ref{RHP:zc} is $\gamma(z)$.  Suppose $\M(x,t,z)$ is a solution to RH problem~\ref{RHP:zc}. Since $\M$ satisfies jump condition~\eqref{E:Jump} and $\det \V =1$, it follows that $\det \M$  has no jump across $\R \setminus{0}$. Furthermore, it is straightforward to verify that each point $\zeta_j$ and $\zeta_j^*$ is a removable singularity of $\det \M$. Therefore, $\det \M$ is analytic for $z \in \C \setminus \{0\}$. Additionally, by considering the asymptotic behavior of $\M$ as $z \to \infty$ and as $z \to 0$, one can conclude that
\be \label{E:db1}
\det \M= 1+\frac{f_1}{z}+\frac{f_2}{z^2},
\ee
where $f_1$ and $f_2$ are  to be determined. Since $\M$ satisfies $z \to \hat{z}$ symmetry (see~\eqref{E:RHP11}), we immediately obtain  $(\det \M)(z)=(\det \M)(\hat{z}) \det \bPi(z)$, which shows that
\be \label{E:db2}
\det \M= -\frac{f_2}{q^2_0}-\frac{f_1}{z}-\frac{q_0^2}{z^2}.
\ee
By comparing~\eqref{E:db1} and~\eqref{E:db2}, we immediately obtain $f_1=0$  and  $f_2=-q_0^2$. Therefore , we have $\det \M =\gamma(z)$.

We now prove the uniqueness of the solution to RH problem~\ref{RHP:zc}. Suppose $\boldsymbol{\mathcal{M}}$ is another solution to RH problem~\ref{RHP:zc}. Since  $\det \M = \gamma(z)$, we know that  $\M^{-1}(x,t,z)$  is well-defined for  $z \in \mathbb{C} \setminus \big(\rZ \cup  \{0,\pm q_0 \} \big)$. Consider the function  $\boldsymbol{\mathcal{H}}(x,t,z)=\boldsymbol{\mathcal{M}}\M^{-1}$. It is easy to verify that $\boldsymbol{\mathcal{H}}$  has no jump on  $\R \setminus \{0,\pm q_0 \}$, so  $\boldsymbol{\mathcal{H}}$  is analytic for  $z \in \mathbb{C} \setminus \big(\rZ \cup  \{0,\pm q_0 \} \big)$. Next, we should examine the behavior of  $\boldsymbol{\mathcal{H}}$  near the origin, the discrete spectrum,  and  the branch points.  It can be easily shown that  $\boldsymbol{\mathcal{H}}$ is $\mathcal{O}(1)$ near the origin. The proof of this is similar to the proof that  $\cE(x,t,z)$  is well-defined at the origin; details can be found in the proof of Lemma~\ref{L:zyyl}. Next, we proceed to study the properties of $\boldsymbol{\mathcal{H}}$ in the vicinity of the branch points $\pm q_0$. We use the second symmetry  in~\eqref{E:RHP11} to rewrite $\boldsymbol{\mathcal{H}}$ as
\be
\boldsymbol{\mathcal{H}}(x,t,z)=-\frac{1}{\gamma(z)}\boldsymbol{\mathcal{M}}(x,t,z)  \mathbf{\Gamma}\M^{\dagger}(x,t,z^*)\bJ .
\ee
Now, combining with the growth conditions ~\eqref{E:gcc}  , as $z \to q_0$ from $\C_+$, we have
$
\boldsymbol{\mathcal{H}}(x,t,z)=\mathcal{O}(1).
$
Moreover, this result remains valid if $z \to q_0$ from $\C_-$. Therefore, we conclude that $q_0$ is a removable singularity of $\boldsymbol{\mathcal{H}}$. By a similar argument, one can show that $-q_0$ is also a removable singularity of $\boldsymbol{\mathcal{H}}$. It now remains to analyze the behavior of $\boldsymbol{\mathcal{H}}$ near the discrete spectrum. In fact, due to symmetry, it suffices to examine the behavior of $\boldsymbol{\mathcal{H}}$ in the vicinity of each $\zeta_j$.  Based on the residue condition~\eqref{E:mlstjzc}, we may assume that $\boldsymbol{\mathcal{M}}$ and $\M(x,t,k)$ have the folowing asymptotic expansion at $\zeta_j$:
\begin{align}
&\M(x,t,z)=\begin{pmatrix}
 a_1 & b_1 & c_1\\
 a_2 & b_2 & c_2\\
 a_3 &b_3  &c_3
\end{pmatrix}+
\frac{1}{z-\zeta_j}\begin{pmatrix}
 C_{\zeta_j} c_1 &0 &0 \\
 C_{\zeta_j} c_2 &0 &0\\
   C_{\zeta_j} c_3 &0 &0
\end{pmatrix}+\mathcal{O}(z-\zeta_j),\label{E:asybM1z}\\
&\boldsymbol{\mathcal{M}} (x,t,z)=\begin{pmatrix}
 A_1 &B_1  &D_1 \\
 A_2 & B_2 &D_2 \\
  A_3& B_3 &D_3
\end{pmatrix}+\frac{1}{z-\zeta_j}\begin{pmatrix}
 C_{\zeta_j} D_1 &0  &0 \\
  C_{\zeta_j} D_2 &0  &0\\
  C_{\zeta_j} D_3&0  &0
\end{pmatrix}+\mathcal{O}(z-\zeta_j), \label{E:asybM1zgx}
\end{align}
where $C_{\zeta_j}=\tau_j \e^{\theta_{31}(x,t,\zeta_j)}$. Since $\det \M(x,t,z)=\gamma(z)$, a straightforward calculation  yields
$$
\M^{-1}(x,t,z)=\frac{1}{\gamma(\zeta_j)}\begin{pmatrix}
 (\mathbf{b} \times \mathbf{c})^{\top}\\
\mathbf{\star}  \\
\mathbf{\star}
\end{pmatrix} +\frac{C_{\zeta_j}}{\gamma(\zeta_j) (z-\zeta_j)}\begin{pmatrix}
\mathbf{0}\\
  \mathbf{0}\\
  (\mathbf{c} \times \mathbf{b})^{\top}
\end{pmatrix}+\mathcal{O}(z-\zeta_j),
$$
where $\mathbf{b}= \bpm b_1&b_2&b_3 \epm ^{\top}$, $\mathbf{c}= \bpm c_1&c_2&c_3 \epm ^{\top}$, $\mathbf{0}= \bpm 0&0&0 \epm $ and $'\mathbf{\star}'$  denotes an unspecified row vector. Combining the above formula with the asymptotic expansion ~\eqref{E:asybM1zgx}, by a simple calculation one can find that $\boldsymbol{\mathcal{H}}(x,t,z)=\boldsymbol{\mathcal{M}} \M^{-1}$ has no negative powers as $z\to \zeta_j$ i.e., $\zeta_j$ is a removable singularity of $\boldsymbol{\mathcal{H}}(x,t,z)$.  Based on the above analysis,  one can conclude that  $\boldsymbol{\mathcal{H}}(x,t,z)$ is holomorphic on  $\C$ . Noting that  $\boldsymbol{\mathcal{H}}(x,t,z) \to \bI$  as  $z \to \infty$, Liouville's theorem implies that $\boldsymbol{\mathcal{H}}(x,t,z)$  is identically  $\bI$, i.e., $\boldsymbol{\mathcal{M}}=\M$.
\section{Unique solvability of the pure-soliton RH problem~\ref{rhp:Msol}} \label{AppB}

In this appendix, we prove Lemma~\ref{L:Msolczwyx}.
 RH problem~\ref{rhp:Msol} is essentially a linear system. The residue conditions~\eqref{E:mlstj} and the asymptotic behaviors given by~\eqref{E:sayM0in} yield
\be \label{E:Bkgh}
\begin{aligned}
\M_{sol}(x,t,z)&=\M_{\infty} + \frac{\ii }{z} \M_0 +\sum_{j=0}^{N-1} \left(\frac{\mathrm{Res}_{z=\zeta_j} \M_{sol}(x,t,z)}{z- \zeta_j} +\frac{\mathrm{Res}_{z=\zeta_j^*} \M_{sol}(x,t,z)}{z- \zeta_j^*}   \right)\\
&=\M_{\infty} + \frac{\ii }{z} \M_0 \\
&+\sum_{j=0}^{N-1} \left( \frac{\displaystyle \lim_{z \to \zeta_j} \M_{sol}(x,t,z)}{z-\zeta_j} \cdot \begin{pmatrix}
0  &0  & 0\\
 0 &0  & 0\\
 \hat{C}_j &0  &0
\end{pmatrix}+
\frac{\displaystyle\lim_{z \to \zeta^*_j} \M_{sol}(x,t,z)}{z-\zeta_j^*} \cdot \begin{pmatrix}
0  &0  & \hat{C}_j^*\\
 0 &0  & 0\\
 0&0  &0
\end{pmatrix} \right),
\end{aligned}
\ee
where $\hat{C}_j(x,t)=\hat{\tau_j} \e^{-2 \ii \theta_1(x,t,\zeta_j)}=\tau_j |\delta_1(\zeta_j)|^2 |\delta(\zeta_j)|^2  \e^{-2 \ii \theta_1(x,t,\zeta_j)}$. Note that here we have utilized the symmetry properties satisfied by  $\M_{sol}$  to derive its residue conditions at  $\zeta_j^*$. For details on this part, the reader may refer to Ref.~\cite[Lemma 3.3]{BD2015-1}. In this appendix, we use $\M_{ij}$ to denote the $(i,j)$-th element of $\M_{sol}$.
Considering the element in the second row and third column of the above expression, we obtain
\be \label{E:M23}
\M_{23}(x,t,z)=\frac{q_{1,+}}{q_0}+\sum_{j=0}^{N-1} \frac{\hat{C}_j^*}{z-\zeta_j^*} \M_{21}(x,t,\zeta_j^*),
\ee
where $q_{+,1}$ denotes the first element of the vector $\q_+$. By symmetry~\eqref{E:rhp5-3}, it is straightforward to verify that $\M_{21}(x,t,\zeta_j^*)=\frac{\ii \zeta_j}{q_0}\M_{23}(x,t,\zeta_j)$. Therefore, Eq.~\eqref{E:M23} can be rewritten as
\be \label{E:M23y}
\M_{23}(x,t,z)=\frac{q_{1,+}}{q_0}+\sum_{j=0}^{N-1} \frac{1}{z-\zeta_j^*} \frac{\ii \zeta_j  \hat{C}_j^*}{q_0}\M_{23}(x,t,\zeta_j).
\ee
In Eq.~\eqref{E:M23y}, if we let $z$ take on the values $\zeta_j$, $j=0,..,N-1$, we obtain the following system of linear equations:
\be \label{E:xxxt}
\left(\bI -  \tilde{\G} \right) \bX=\mathbf{f},
\ee
where
\be
\big(\tilde{\G} \big)_{ij}=\frac{\ii q_0 \hat{C}_j^*}{\zeta_j^*} \frac{1}{\zeta_i-\zeta_j^*}, \qquad
\bX=\left(\M_{23}(x,t,\zeta_0),...,\M_{23}(x,t,\zeta_{N-1})   \right)^{\top}, \qquad
\mathbf{f}=\frac{q_{1,+}}{q_0}\left(1,...,1  \right)^{\top}.
\ee
 From Ref.~\cite[Eq.(3.5a)]{BD2015-1}, we know that  $\frac{\tau_j}{\zeta_j} \in \R$. In fact, when $\frac{\tau_j}{\zeta_j}>0$, the corresponding soliton solution is singular, while for  $\frac{\tau_j}{\zeta_j}<0$, the resulting soliton solution is regular.   We will next prove that if  $\frac{\tau_j}{\zeta_j}<0$  for all  $j = 0, \ldots, N-1$ , then  $\det \left( \bI -  \tilde{\G} \right)>0$  for all $(x,t) \in \R \times \R_+$. Therefore, the reconstructed solitons are all regular.

To this end, we define $\varpi_j=- \frac{\hat{C}_j^*}{\zeta_j^*}$. Since $\e^{\theta_{31}(x,t,\zeta_j)}$  is real-valued and satisfies $\e^{\theta_{31}(x,t,\zeta_j)}>0$, it is straightforward to observe that when $\frac{\tau_j}{\zeta_j}<0$, we have $\varpi_j>0$. Let $y_j=-\ii \zeta_j$, then we have $\Re y_j>0$ and $\frac{\ii }{\zeta_i -\zeta_j^*}=\frac{1}{y_i+y_j}$. Thus, system~\eqref{E:xxxt} can be rewritten as
\be \label{E:cxxt}
\left(\bI+\tilde{\bC} \mathbf{\Upsilon} \tilde{\bC}  \right) \left(\tilde{\bC} \bX  \right) = \tilde{\bC}\mathbf{f},
\ee
where
$$
\big( \mathbf{\Upsilon} \big)_{ij}=\frac{1}{y_i+y_j^*}, \qquad
\tilde{\bC}=\mathrm{diag}\left(\tilde{c}_0,..., \tilde{c}_{N-1}  \right), \quad \tilde{c}_j=\sqrt{q_0  \varpi_j}>0.
$$
Note that  $\det \left(\bI -  \tilde{\G} \right)=\det \left(\bI+\tilde{\bC} \mathbf{\Upsilon} \tilde{\bC}  \right) $. Therefore, it suffices to prove that $\det \left(\bI+\tilde{\bC} \mathbf{\Upsilon} \tilde{\bC}  \right) >0$. We prove this by showing that $\bI+\tilde{\bC} \mathbf{\Upsilon} \tilde{\bC} $ is a positive definite matrix. Indeed, for any complex vector $\boldsymbol{\omega}=\left(\omega_0,...,\omega_{N-1}  \right)^{\top} \ne (0,...,0)^{\top}$, we have
\begin{align*}
\boldsymbol{\omega}^{\dagger} \left(\bI+\tilde{\bC} \mathbf{\Upsilon} \tilde{\bC}  \right) \boldsymbol{\omega}&=\|\boldsymbol{\omega} \|^2+\int_{0}^{+\infty} \sum_{j,k=0}^{N-1}\tilde{c}_j   \tilde{c}_k \e^{-(y_k+y_j^*)s}\omega_k^*\omega_j \mathrm{d}s\\
&=\|\boldsymbol{\omega} \|^2+\int_{0}^{+\infty} \left|\sum_{j=0}^{N-1}\tilde{c}_j \omega_j \e^{-y_j^*s} \right|^2\mathrm{d}s>0.
\end{align*}
Therefore, $\det \left(\bI -  \tilde{\G} \right)>  0 $ for all  $(x,t) \in \R \times \R_+$, and thus linear system~\eqref{E:xxxt} is uniquely solvable for each $(x,t)$. Then, combining this with~\eqref{E:M23y}, we immediately conclude that $\M_{23}(x,t,z)$ can be uniquely determined. Then, through a completely analogous computation, one can solve for $\M_{13}(x,t,z)$ and $\M_{33}(x,t,z)$. Therefore, the third column of $\M_{sol}$ can be uniquely determined for  all $(x,t) \in \R \times \R_+$. Moreover, it is straightforward to observe that the first column of  $\M_{sol}$ can be determined from its third column using symmetry~\eqref{E:rhp5-3}. Finally, from Eq.~\eqref{E:Bkgh}, we immediately deduce that
\be \label{E:Msolderlie}
[\M_{sol}]_2=\left[\M_{\infty}+\frac{\ii}{z} \M_0 \right]_2.
\ee
Now we have fully determined  $\M_{sol}$ for all $(x,t) \in \R \times \R_+$.

\section{The model RH problem} \label{App:ccc}
In this appendix, we present some fundamental results on model RH problem. The jump contour $X=\cup_{j=1}^4 X_j$(see Figure \ref{fig:B}) for the model problems is given by 
\begin{equation}\label{E:Xj}
\begin{aligned}
		&X_1=\{\zeta \in\C:\zeta=r \e^{\frac{\pi i}{4}},0 \le r\le\infty\},\quad &X_2=\{\zeta \in\C:\zeta =r \e^{\frac{3\pi i}{4}},0\le r\le\infty\},\\
		&X_3=\{\zeta \in\C:\zeta=r \e^{\frac{5\pi i}{4}},0\le r\le\infty\},\quad &X_4=\{\zeta \in\C:\zeta=r \e^{\frac{7\pi i}{4}},0\le r\le\infty\}.
	\end{aligned}
\end{equation}
 We assume that $p$ and $q$ belong to a subset $\mathbb{D}$ of the complex plane.
Define the parameter $\nu=\nu(p,q)$ by
$
\nu=-\frac{1}{2 \pi} \log (1-pq)
$, \  $pq<0$.
  Then we define  the  model RH problem as follows.

\begin{RHP}\label{rhpNL}
		The $2 \times 2$ matrix-valued function $\mathbf{M}^{PC}(\zeta;p,q)$ satisfies the following properties:
		\begin{enumerate}
			\item  $\mathbf{M}^{PC}(\cdot\ ;p,q):~\C \setminus X\to\C^{2\times2}$  is analytic for $\zeta \in\C \setminus X$.
			
			\item The function $\mathbf{M}^{PC}(\zeta;p,q)$ is continuous on $X\setminus\{0\}$ and satisfies the jump condition
			$$
			\big(\mathbf{M}^{PC}(\zeta;p,q)\big)_+=\big(\mathbf{M}^{PC}(\zeta;p,q)\big)_- \mathbf{V}^{PC}(\zeta;p,q),\quad \zeta \in X \setminus \{0\},
			$$
			where the jump matrix $\mathbf{V}^{PC}$ is defined by
			$$
			\begin{aligned}
				&\mathbf{V}^{PC}_1 = \left(\begin{array}{ccc}
					1 &  p \zeta^{-2 \ii \nu } \e^{\frac{\ii \zeta^2}{2}}  \\
					0 & 1  
				\end{array}\right),
 & \mathbf{V}^{PC}_2= \left(\begin{array}{ccc}
					1 & 0 \\
					-\frac{q}{1-pq} \zeta^{2 \ii \nu }  \e^{-\frac{\ii \zeta^2}                                   {2}} & 1 
				\end{array}\right),  \\
				&\mathbf{V}^{PC}_3= \left(\begin{array}{ccc}
					1 & \frac{p}{1-pq} \zeta^{-2 \ii \nu } \e^{\frac{\ii \zeta^2}{2}} \\
					0 & 1 
				\end{array}\right),  &
\mathbf{V}^{PC}_4= \left(\begin{array}{ccc}
					1 & 0  \\
					-q \zeta^{2  \ii \nu} \e^{-\frac{\ii \zeta^2}{2}} & 1 
				\end{array}\right).
			\end{aligned}
			$$
			with $\zeta^{2 \ii \nu}=\e^{2\ii  \nu  \log (\zeta)  }$ .
			\item  $\mathbf{M}^{PC}(\zeta;p,q)\to \bI_{2 \times 2}$  as $\zeta\to\infty$.
			\item  $\mathbf{M}^{PC}(\zeta;p,q) \to \mathcal{O}(1)$ as $\zeta \to 0$.
		\end{enumerate}
	\end{RHP}

\begin{figure}
   \centering
   \begin{tikzpicture}[scale=0.7]
\draw [black,very thick,->](-3,-3)--(1.5,1.5);
\draw [black,very thick](1.5,1.5)--(3,3);
\draw [black,very thick,->](-3,-3)--(-1.5,-1.5);
\draw [black,very thick,->](-3,3)--(1.5,-1.5);
\draw [black,very thick](1.5,-1.5)--(3,-3);
\draw [black,very thick,->](-3,3)--(-1.5,1.5);

\node  at (0,-0.5) {$0$};
\node  at (3.1,2.4) {$X_1$};
\node  at (-3.1,2.4) {$X_2$};
\node  at (-3.1,-2.4) {$X_3$};
\node  at (3.1,-2.4) {$X_4$};

 \end{tikzpicture}
   \caption{ The jump contour $X=\cup_{j=1}^4 X_j$  for $\M^{PC}(\zeta;p,q)$. }\label{fig:B}
\end{figure}

RH problem~\ref{rhpNL} can be solved by the standard parabolic-cylinder
model construction, as in~\cite{pz93,Lenells2017}.  We omit the details and only give
the large-$\zeta$ asymptotics of its solution.
\begin{lemma}\label{L:modelasy}
The solutions $\mathbf{M}^{PC}(\zeta;p,q)$ to RH problems~\ref{rhpNL}  exhibits the following asymptotic behavior:
\begin{equation}\label{E:mXasymoxia}
\mathbf{M}^{PC}(\zeta;p,q) = \bI_{2 \times 2} + \frac{\mathbf{M}_{\infty}^{PC}(p,q)}{\zeta} + \mathcal{O}\big(\frac{1}{\zeta^2}\big), \qquad \zeta \to \infty,  
\end{equation}
where the error terms are uniform with respect to $\mathrm{arg} \zeta \in [0, 2\pi]$ and $p, q$ in compact subsets of $\mathbb{D}$. Here, the function $\mathbf{M}_{\infty}^{PC}(p,q)$ is given by
\begin{equation}\label{E:m1Xdefbuzbz}
\mathbf{M}_{\infty}^{PC}(p,q) =\begin{pmatrix}
0 & \ii \beta^{PC}_{12} 	\\
-\ii \beta^{PC}_{21} & 0 \\
 \end{pmatrix}, 
\end{equation}
where
\be \label{E:beta1232}
\beta^{PC}_{12}=\frac{\sqrt{2 \pi} \e^{-\frac{ \pi \nu}{2}} \e^{-\frac{ \pi}{4}  \ii }  }{q \Gamma(\ii \nu)}, \qquad
\beta^{PC}_{21}= \frac{\sqrt{2 \pi} \e^{\frac{- \pi \nu}{2}} \e^{\frac{ \pi}{4}  \ii }  }{p \Gamma(-\ii \nu)}.
\ee
Here, $\Gamma(\cdot)$ denotes the Gamma function. Moreover, for each compact subset $K$ of $\mathbb{D}$, we have
\begin{align*}
\sup_{p,\ q \in K} \sup_{\zeta \in \mathbb{C} \setminus  X} |\mathbf{M}^{PC}(\zeta;p,q)| < \infty.
\end{align*}
\end{lemma}

\noindent{\bf Acknowledgements}
The authors would like to thank the anonymous referees for their careful reading of the manuscript and for their detailed and constructive comments. Their suggestions led to substantial improvements in the presentation of the main results and the organization of the nonlinear steepest descent analysis.
This work is supported by National Natural Science Foundation of China (Grant Nos. 12471234, 12271490, 12571268) and Science Foundation of Henan Academy of Sciences (Grant No. 20252319002).\\ \\
{\bf Data Availability} The data that supports the findings of this study are available within the article.\\ \\
{\bf Declarations}\\ \\
{\bf Conflict of interest}  The authors have no conflict of interest to declare that are relevant to the content of this article.\\


\medskip

\end{document}